\newcommand{\loc}{\mathrm{loc}}
\newcommand{\Sp}{\mathbb{S}}
\newcommand{\id}{\mathrm{id}}
\newcommand{\Mw}{M_{\mathrm{wns}}}
\newcommand{\gw}{g_{\mathrm{wns}}}
\newcommand{\rd}{\partial}
\newcommand{\R}{\mathbb{R}}
\newcommand{\ssubset}{\subset\joinrel\subset}
\newtheorem{definition}{Definition}[section]
\newtheorem{example}[definition]{Example}
\newtheorem{prop}[definition]{Proposition}
\newtheorem{remark}[definition]{Remark}
\newtheorem{lemma}[definition]{Lemma}
\newtheorem{thm}[definition]{Theorem}
\newtheorem{cor}[definition]{Corollary}
\title{On the uniqueness of continuous spacetime extensions in 1+1-dimensions with applications to weak null singularities}
\author{\bigskip Peter Cameron\footnote{Department of Mathematics, Imperial College London, South Kensington Campus, London SW7 2AZ, United Kingdom and the Heilbronn Institute for
Mathematical Research, Bristol, UK. Email: p.cameron24@imperial.ac.uk} \hspace{1pt} \& Jan Sbierski\footnote{School of Mathematics and Maxwell Institute for Mathematical Science, University of Edinburgh, James Clerk Maxwell Building, Peter Guthrie Tait Road, Edinburgh EH9 3FD, United Kingdom. Email: jan.sbierski@ed.ac.uk}
}
\date{\today}
\begin{document}

\maketitle

\begin{abstract}
   Motivated by weak null singularities in black hole interiors, we study 1+1-dimensional Lorentzian manifolds $(M,g)$ which admit a continuous spacetime extension across a null boundary $v=0$, where $v<0$ is a null coordinate. We study the degree to which such  extensions are unique \emph{up to the boundary}. Firstly, we find that in general not even the $C^0$-structure of the extension is uniquely determined by the assumption that the metric extends continuously. However, we exhibit an interesting local-global relation regarding the $C^0$-structure which in particular entails its rigidity for ``strongly spherically symmetric'' continuous extensions across the Cauchy horizon of the Reissner-Nordstr\"{o}m spacetime. Secondly, we construct continuous extensions which have the same $C^0$-structure, but do \emph{not} have equivalent $C^1$-structures. This construction also carries over to weak null singularities in 3+1-dimensions. Understanding the uniqueness properties of continuous spacetime extensions to the boundary is of importance for the study of low-regularity inextendibility problems \cite{Sbie24}.
\end{abstract}

\tableofcontents

\section{Introduction}\label{Introduction}

A weak null singularity is a singular null boundary of a spacetime $(M,g)$ solving the Einstein equations such that the metric is continuous up to the boundary but the Christoffel symbols fail to be square integrable. These are expected to form in the interior of generic rotating black holes. For a detailed background to weak null singularities we refer the reader to \cite{Luk18,DafLuk17,Sbie24}.  
 Here, we focus on a local piece $(\Mw,\gw)$ of the spacetime near the weak null singularity, where $\Mw = (-1,1) \times (-1,0) \times \Sp^2$ with coordinates $(u, v)$ on the first two factors and  $(\theta^1, \theta^2) = \theta^A$, $A=1,2$, an arbitrary set of smooth coordinates on $\Sp^2$. The metric is given in the double null form
\begin{equation} \label{EqGDN}
\gw = -\Omega^2(du \otimes dv + dv \otimes du) + \gamma_{AB} (d\theta^A - b^A d v) \otimes (d\theta^B - b^B dv)\;,
\end{equation}
where $\Omega$ is a smooth strictly positive function on $\Mw$, $\gamma(u,v)$ is a smooth Riemannian metric on $\Sp^2$ which depends smoothly on $u$ and $v$, and $b(u,v)$ is a smooth vector field on $\Sp^2$ which also depends smoothly on $u$ and $v$. The time orientation is fixed by stipulating that $\rd_u$ is future-directed null. The ``weakness'' of the singularity is signalled by the fact that the metric extends continuously to the singularity at $\{v = 0\}$ with respect to the $(u,v, \theta^1, \theta^2)$-differentiable structure. This means $\Omega$, $\gamma_{AB}$ and $b^A$ extend continuously to $\overline{\Mw} :=(-1,1) \times (-1,0] \times \Sp^2 \supseteq \Mw$ such that $\Omega$ is strictly positive and, for fixed $u$ and $v$, $\gamma_{AB}$ is a Riemannian metric on $\Sp^2$ and $b^A$ is a vector field on $\Sp^2$. Note that $\{v=0\}$ is a null hypersurface. The existence of this continuous extension to $\{v=0\}$, together with upper bounds on various Christoffel symbols, has been established in \cite[Theorem 4.24]{DafLuk17}. Moreover, it is conjectured and expected (cf.\ \cite{Chris09, DafLuk17, Luk18, Sbie23, Gur24} and references therein) that the metric $\gw$ indeed becomes singular for $v \to 0$ and that the Lorentzian manifold $(\Mw, \gw)$ is inextendible `across' $\{v = 0\}$ as a Lorentzian manifold with a continuous metric and locally square integrable Christoffel symbols. Here, it is helpful to recall the following definitions: consider a Lorentzian manifold $(M,g)$, a regularity class $\Gamma$ (e.g.\ $\Gamma = C^0, C^{0,1}_{\loc}, C^\infty$), and assume that the Lorentzian metric $g$ is at least $\Gamma$-regular. In this paper, all manifolds themselves are assumed to be smooth. A \textbf{$\Gamma$-extension} of a Lorentzian manifold $(M,g)$ consists of an isometric embedding $\iota : M \hookrightarrow \tilde{M}$ of $(M,g)$ into a Lorentzian manifold $(\tilde{M}, \tilde{g})$ of the same dimension as $M$ where $\tilde{g}$ is $\Gamma$-regular and such that $\partial \iota(M) \subset \tilde{M}$ is non-empty. If a $\Gamma$-extension of $(M,g)$ exists, we say that $(M,g)$ is \textbf{$\Gamma$-extendible}; otherwise we say $(M,g)$ is \textbf{$\Gamma$-inextendible}. Returning to the particular case of the weak null singularity, a \textbf{$\Gamma$-extension} of $(\Mw, \gw)$ \textbf{across} $\{v = 0\}$ is a $\Gamma$-extension $\iota : \Mw \hookrightarrow \tilde{M}$ of $(\Mw, \gw)$ such that there exists a future-directed causal curve $\tau : [-1,0) \to \Mw$ with $\lim_{s \to 0} \tau_{v}(s) = 0$, $\lim_{s \to 0} \tau_u(s) <1$ and such that $\lim_{s \to 0}(\iota \circ \tau)(s) \in \rd \iota(\Mw) \subset \tilde{M}$ exists. 

We are interested in the local uniqueness of \emph{continuous} extensions $\iota : \Mw \hookrightarrow \tilde{M}$ of $(\Mw,\gw)$ across $\{v = 0\}$. We ask the following question: suppose $\lim_{s \to 0} \tau(s) = (u_*, 0, \omega_*) \in \overline{\Mw}$, does there exist $\epsilon>0$ and a small neighbourhood $B_{\Sp^2}$ of $\omega_*$ in $\Sp^2$ such that $\iota\vert_{(u_* - \epsilon, u_* + \epsilon) \times (-\epsilon,0) \times B_{\Sp^2}} : (u_* - \epsilon, u_* + \epsilon) \times (-\epsilon,0) \times B_{\Sp^2} \to \tilde{M}$ extends  to $$\overline{\iota} : (u_* - \epsilon, u_* + \epsilon) \times (-\epsilon,0] \times B_{\Sp^2} \to \tilde{M}$$
as a $C^1$-diffeomorphism (or at least as a homeomorphism) onto its image? If the answer is positive, then we say that $\iota : \Mw \hookrightarrow \tilde{M}$ is \textbf{locally $C^1$-($C^0$-)equivalent to $\overline{\Mw}$ at $(u_*, 0, \omega_*) \in \overline{\Mw}$} and it is easy to see that this notion is independent of the curve $\tau$ that approaches $(u_*,0, \omega_*)$. If the answer is negative, we say that $\iota : \Mw \hookrightarrow \tilde{M}$ is \textbf{locally $C^1$-($C^0$-)inequivalent to $\overline{\Mw}$ at $(u_*, 0, \omega_*) \in \overline{\Mw}$}.

Paraphrasing the above question, we have already found one continuous extension of $(\Mw,\gw)$ across $\{v=0\}$ by extending the topological and differentiable structure of $\Mw = (-1,1) \times (-1,0) \times \Sp^2\ni (u,v, \theta^A)$ to  $\overline{\Mw} = (-1,1) \times (-1,0] \times \Sp^2\ni (u,v, \theta^A)$. Can one extend the topological and/or differentiable structure of $\Mw$ differently and still obtain a continuous extension across $\{v=0\}$?

The relevance of this local uniqueness question is twofold. Firstly, it arises naturally in low-regularity inextendibility proofs. For example, in the proof of the $C^{0,1}_{\loc}$-inextendibility of spherically symmetric weak null singularities \cite{Sbie22a}, which is based on the blow-up of a local holonomy, it is crucial to control the velocity vectors of a sequence of curves in the local coordinates of the $C^{0,1}_{\loc}$-extension. That is, one needs a priori control on the $C^1$-differentiable structure of the $C^{0,1}_{\loc}$-extension. This would be provided by the statement above that $\iota\vert_{(u_* - \epsilon, u_* + \epsilon) \times (-\epsilon,0) \times B_{\Sp^2}}$ extends as a $C^1$-diffeomorphism to $\{v=0\}$.\footnote{In the paper \cite{Sbie22a} a slightly weaker statement is proved, which is, however, still sufficient for the application.} Furthermore, recent work by the second author \cite{Sbie24}, combined with \cite{sbierski2022uniqueness}, shows that for hypothetical $C^{0,1}_{\loc}$-extensions of weak null singularities \emph{without any symmetry assumptions}, $\iota\vert_{(u_* - \epsilon, u_* + \epsilon) \times (-\epsilon,0) \times B_{\Sp^2}}$ indeed extends as a $C^1$-diffeomorphism to $\{v=0\}$, thus exactly providing this a priori control of the $C^1$-differentiable structure. If curvature is assumed to blow-up in a particular frame -- note that a frame is an object at the level of the $C^1$-differentiable structure -- then one can show that this is in contradiction to the metric extending in $C^{0,1}_{\loc}$. For showing that the hypothetical $C^{0,1}_{\loc}$-extension is locally $C^1$-equivalent to $\overline{\Mw}$, one crucially uses, in addition to bounds on various connection coefficients, the local Lipschitz regularity of the extension.
Hence, the local uniqueness question posed in this paper asks whether this a priori control on the $C^1$-differentiable structure can still be obtained for a $C^0$-extension. Understanding this question informs and constrains approaches to proving inextendibility of weak null singularities in regularity classes below $C^{0,1}_{\loc}$ (and above $C^0$). Ideally, one would like to prove inextendibility with a continuous metric and locally square integrable Christoffel symbols.

The second motivation for studying the local uniqueness of continuous extensions to weak null singularities stems from the old question in general relativity which asks what kind of structure one can associate to a spacetime singularity. The problem with this of course is that the singularity itself is usually not part of the spacetime. This prompted the development of various ideal boundary constructions \cite{gboundary, cboundary, bboundary, aboundary} (or \cite{boundaryreview} for a review) which assign various structures (causality, topology, etc.) to the singularity. The case of weak singularities is, however, fundamentally different in the sense that the spacetime, that is, the manifold with continuous metric, can be defined \emph{at the singularity}. This naturally provides a causal, topological, and even differentiable structure. However, it has not yet been investigated whether these structures are uniquely determined by the property that the singularity admits one continuous extension, i.e., whether they are independent of the continuous extension chosen.

\subsection{Previous results}

The first systematic study of the uniqueness of at least $C^{1,1}_{\loc}$-extensions to $C^{2,1}_{\loc}$ boundaries was, to the best of the authors' knowledge, carried out in \cite{Chrus10}. It was shown that if two such extensions terminate the same null geodesic (i.e. if they are \textbf{anchored} by the same null geodesic), then locally they have the same $C^{2,1}_{\loc}$ differentiable structure. In \cite{sbierski2022uniqueness} it was shown that if two $C^{0,1}_{\loc}$-extensions $\iota_i : M \hookrightarrow \tilde{M}_i$, $i=1,2$, of a \emph{globally hyperbolic} Lorentzian manifold $(M,g)$ (without any further regularity assumptions on the  boundary) are anchored by the same causal curve $\tau :[-1,0) \to M$, then locally they have the same $C^{1,1}_{\loc}$-differentiable structure. This can be phrased mathematically as follows: if $\lim_{s \to 0} (\iota_i \circ \tau)(s) =:\tilde{p}_i \in \rd \iota_i(M) \subset \tilde{M}_i$, then there exist small neighbourhoods $\tilde{W}_i \subset \tilde{M}_i$ of $\tilde{p}_i$ which are divided into two parts, $(\tilde{W}_i)_<$ and $(\tilde{W}_i)_>$, by a Lipschitz graph $(\tilde{W}_i)_=$ which lies in $\rd \iota_i(M)$, contains $\tilde{p}_i$ and is such that $(\tilde{W}_i)_<$ lies in $\iota_i(M)$. The statement is then that the identification map $\id := \iota_2 \circ \iota_1^{-1}\vert_{\iota_1(M)} : \iota_1(M) \to \iota_2(M)$, restricted to $(\tilde{W}_1)_<$, maps $(\tilde{W}_1)_<$ diffeomorphically to $(\tilde{W}_2)_<$ and extends as a $C^{1,1}_{\loc}$-diffeomorphism to $(\tilde{W}_1)_\leq := (\tilde{W}_1)_< \cup (\tilde{W}_1)_=$. 

The local uniqueness question for continuous extensions of weak null singularities $(\Mw, \gw)$ in this paper is of exactly the same form: one may trivially extend the continuous boundary extension $(\overline{\Mw}, \gw)$ to a continuous extension $\iota_1 : \Mw \hookrightarrow (-1,1) \times (-1,1) \times \Sp^2 =: \tilde{M}_1$ in the sense defined above by extending the metric to be constant in the $(u, v, \theta^A)$-coordinates for $v \geq 0$. Let now $\iota_2 : \Mw \hookrightarrow \tilde{M}_2$ be another $C^0$-extension across $\{v=0\}$. Then, considering $\Mw$ as a subset of $\tilde{M}_1$, the identification map is $\id = \iota_2$ and we can choose $\tilde{W}_1$ such that $(\tilde{W}_1)_< = (u_* - \epsilon, u_* +  \epsilon) \times (-\epsilon, 0) \times B_{\Sp^2}$ and $(\tilde{W}_1)_\leq = (u_* - \epsilon, u_* +  \epsilon) \times (-\epsilon, 0] \times B_{\Sp^2}$. The difference between this paper and the previous work mentioned above is that we have fixed a particular reference $C^0$-extension $\overline{\Mw}$.\footnote{To be precise, $(\overline{\Mw}, \gw)$ is not a $C^0$-extension of $(\Mw, \gw)$ in the sense defined earlier, since $(\overline{\Mw}, \gw)$ is a manifold \emph{with} boundary, whereas an extension is defined to be an isometric embedding into a manifold \emph{without} boundary. Hence it is more accurate to call $(\overline{\Mw}, \gw)$ a \emph{continuous boundary extension of $(\Mw, \gw)$.} However, as noted above, a proper reference $C^0$-extension $\iota_1 : \Mw \hookrightarrow \tilde{M}_1$ may be constructed from this.}  

Furthermore, it is shown in \cite{sbierski2022uniqueness} that local uniqueness fails in general for anchored extensions which are only H\"older continuous. This is done by constructing a counterexample in which the identification map extends as a homeomorphism but not as a $C^1$-diffeomorphism.  
However, the non-uniqueness mechanism of the counterexample does not transfer to the case of weak null singularities. The counterexample in \cite{sbierski2022uniqueness} is 1+1-dimensional. In a double null gauge it is of the form $M= \{(u,v) \in \R^2 \; | \; u+ v > 0\}$, $g = - \Omega^2(u+v)(du \otimes dv + dv \otimes du)$ with $\Omega^2(u+v) \to 0$ for $u+v \to 0$, and uses the fact that the boundary, which is null (!) in the extension, can be tangentially approached by either the left or right going null geodesics in the original spacetime. This is qualitatively different from the weak null singularities we are interested in, where the boundary in the reference $C^0$-extension is the limit of the family of null hypersurfaces $\{v = \mathrm{const}\}$. Hence this example does not necessarily inform us about what  might be expected in the case of weak null singularities.

\subsection{A 1+1-dimensional toy model}\label{toymodel}

The local uniqueness problem for continuous extensions in 3+1-dimensions, as set out earlier, is a very intricate problem about which literally nothing is known to the best knowledge of the authors.
As a first step towards the resolution of this problem we begin in this paper by studying the following instructive 1+1-dimensional toy problem: we consider $M_t := (-1,1) \times (-1,0)$ with coordinates $(u,v)$ and metric $g_t = -\Omega^2 (du \otimes dv + dv \otimes du)$, where $\Omega$ is a smooth strictly positive function on $M_t$ which extends continuously as a strictly positive function to $\overline{M_t} := (-1,1) \times (-1,0] \supseteq M_t$. A time orientation is fixed by stipulating that $\rd_u$ is future-directed null. Now, given a continuous extension $\iota : M_t \hookrightarrow \tilde{M}$ of $(M_t,g_t)$ and a future-directed causal $C^1$-curve $\tau : [-1,0) \to M_t$ with $\lim_{s \to 0} \tau_u(s)=: u_* <1$, $\lim_{s \to 0} \tau_v(s) = 0$ and such that $\lim_{s \to 0}(\iota \circ \tau)(s) \in \rd \iota(M_t) \subset \tilde{M}$ exists\footnote{As before, a continuous extension $\iota : M_t \hookrightarrow \tilde{M}$ with these properties is called a \textbf{$C^0$-extension across $\{v=0\}$.}}, we ask whether there is an $\epsilon>0$ such that $\iota\vert_{(u_* - \epsilon, u_* + \epsilon) \times (-\epsilon, 0)} \to \tilde{M}$ extends to 
\begin{equation}\label{eqn:iotaextension}
\begin{split}
      \overline{\iota} : (u_* - \epsilon, u_* + \epsilon) \times (-\epsilon,0] \to \tilde{M}
\end{split}
\end{equation}
as a $C^1$-diffeomorphism/homeomorphism onto its image. Again, if the answer is positive, then we say that $\iota : M_t \hookrightarrow \tilde{M}$ is \textbf{locally $C^1$-($C^0$-)equivalent to $\overline{M_t}$ at $(u_*, 0) \in \overline{M_t}$} and it is easy to see that this notion is independent of the curve $\tau$ that approaches $(u_*,0)$. If the answer is negative, we say that $\iota : M_t \hookrightarrow \tilde{M}$ is \textbf{locally $C^1$-($C^0$-)inequivalent to $\overline{M_t}$ at $(u_*, 0) \in \overline{M_t}$}. This follows the definitions made in 3+1-dimensions. 




\subsection{Main results}\label{SecMainResults}

The main results of our investigation are contained in Section \ref{The $C^0$ Structure of the Extension in 1+1-Dimensions}, which discusses the $C^0$-structure of extensions, and Section \ref{SecC1Structure}, which looks at the $C^1$-structure. Recall that we consider the 1+1-dimensional model spacetime $M_t = (-1,1) \times (-1,0)$ with smooth metric $g_t = -\Omega^2(du \otimes dv + dv \otimes du)$ under the assumption that $\Omega$ extends as a continuous strictly positive function to $\overline{M_t} = (-1,1) \times (-1,0]$, which defines our reference extension. In  Example \ref{example:cornerextension}, under the additional assumption that $\Omega$ extends continuously as a strictly positive function to $[-1,1] \times (-1,0]$, we construct a whole family of $C^0$-extensions of $(M_t, g_t)$ across $\{v=0\}$ which terminate all future-directed null curves of constant $u$ at the same point! We refer to such extensions as \textbf{corner extensions}. They are clearly not locally $C^0$-equivalent to the reference extension $\overline{M_t}$. A natural question is now whether it is also possible for a $C^0$-extension of $(M_t, g_t)$ across $\{v=0\}$ to terminate only a subset of constant $u$ curves at the same point and to terminate the remaining constant $u$ curves at different boundary points. Proposition \ref{prop:nomixedcornerextension} shows that, under our assumptions on $\Omega$, this is not possible: if two nearby curves of constant $u$ terminate at the same point, then all other curves of constant $u$ must also terminate at this point.\footnote{See Proposition \ref{prop:nomixedcornerextension} for the precise statement. It is important to assume that the two curves of constant $u$ approach the same boundary point from the same null direction in the extension -- and not, say, one from the future and one from the past with respect to a small neighbourhood of the boundary point in question. Such extensions may be constructed, but they do not resemble a corner extension.} In particular, this \emph{local} assumption  on nearby curves of constant $u$ fixes the \emph{global} structure of the $C^0$-extension. We then conclude in Theorem \ref{thm:C0structure} that a $C^0$-extension of $(M_t,g_t)$ across $\{v=0\}$ is either locally $C^0$-equivalent to the reference extension $\overline{M_t}$ or has the global $C^0$-structure of a corner extension. This classifies the possible local $C^0$-structures of such extensions.

Next, we focus on the corner extensions and ask whether they are also relevant for $C^0$-extensions to weak null singularities in 3+1-dimensions. More specifically, we study a class of $C^0$-extensions across the Reissner-Nordstr\"om Cauchy horizon which preserve spherical symmetry. Considering the quotient spacetime induced by the spherical symmetry of subextremal Reissner-Nordstr\"om (or more generally by a spherically symmetric weak null singularity), we obtain a 1+1-dimensional Lorentzian manifold. By considering extensions of this manifold separately, and extending the $\Sp^2$ identically, one obtains a \textbf{strongly spherically symmetric extension} of the original 3+1-dimensional spacetime (see Section \ref{Implications for strongly spherically symmetric extensions across the Reissner-Nordstrom Cauchy horizon}). Working in Kruskal coordinates $(U,V)$, a finite $U$-chunk of this Lorentzian manifold is of the form $(M_t,g_t)$ and thus we may indeed construct a corner extension as in Example \ref{example:cornerextension} (see  Section \ref{cornerextensions}). Moreover, since the area radius $r$ in exact subextremal Reissner-Nordstr\"om is constant on the Cauchy horizon, this corner extension can even be constructed in 3+1-dimensions, see Example \ref{example:RNlocal}.\footnote{For generic spherically symmetric weak null singularities $r$ is strictly monotonically decreasing in $u$ and thus this local construction is not possible.} However, this construction is only possible for $C^0$-extensions of a \emph{subset of the Reissner-Nordstr\"om spacetime of finite $U$-coordinate}, see Corollary \ref{cor:eventhorizon}. If we consider the quotient spacetime globally, then it is shown in  Proposition \ref{prop:cornervolumecondition} and Corollary \ref{cor:eventhorizon} that, due to the infinite spacetime volume between the event horizon and Cauchy horizon, no such corner extension can exist. It thus follows that every strongly spherically symmetric $C^0$-extension across the Cauchy horizon is locally $C^0$-equivalent to the reference extension. To summarise, the possibility of the $C^0$-structure of a corner extension as in Theorem \ref{thm:C0structure} disappears for \emph{global} reasons if one considers strongly spherically symmetric $C^0$-extensions of the global spacetime -- thus leaving only the familiar local $C^0$-structure of the reference extension.

Having understood the rigidity of the local $C^0$-structure of $C^0$-extensions of $(M_t, g_t)$ across $\{v=0\}$, in Section \ref{SecC1Structure} we investigate  whether there is any rigidity for the $C^1$-structure. Example \ref{ExC1Cor} shows there are different corner extensions with non-equivalent $C^1$-structures. In Example \ref{example:C^1inequivalent2dimensions} we construct a $C^0$-extension which has the same $C^0$-structure as the reference extension $\overline{M_t}$ but inequivalent $C^1$-structure. This shows there is no rigidity at the level of the $C^1$-structure. Moreover, the construction from Example \ref{example:C^1inequivalent2dimensions} also generalises to weak null singularities in 3+1-dimensions without any symmetry assumptions. In particular, in Example \ref{ExWNS} we construct a $C^0$-extension of $(M_{wns},g_{wns})$ across $\{v=0\}$ which is $C^0$-equivalent to the reference extension $\overline{\Mw}$ constructed by Dafermos and Luk in \cite{DafLuk17} but $C^1$-inequivalent!

\subsection{Outline of paper}\label{Overview of results}

In Section \ref{SecDefBackground} we lay down some further definitions, recall some basic results on near Minkowski neighbourhoods and causal homotopies, and slightly strengthen them in 1+1-dimensions. Furthermore, as an auxiliary result needed for the next section, we prove in Proposition \ref{prop:C1nullcurve} that, in 1+1-dimensions, a locally Lipschitz causal curve which is achronal with respect to smooth timelike curves can be reparameterised as  a $C^1$ null curve. The main result of Section \ref{SecBdry} is Proposition \ref{prop:boundarystructure}.
We show that $\epsilon>0$ can be chosen so that $\overline{\iota}$ defined in \eqref{eqn:iotaextension} is continuous. This can be thought of as the statement that the reference extension has the ``richest'' possible $C^0$-structure in the sense that it adds the most boundary points. Another important property established in Proposition \ref{prop:boundarystructure} is that if the boundary corresponding to $\{v=0\}$ is not a single point (as in the case of a corner extension), then this part of the boundary is a $C^1$ null curve which is transverse to the curves of constant $u$. Starting from this a priori description of the boundary, we then investigate the local $C^0$- and $C^1$-structures of $C^0$-extensions in Sections \ref{The $C^0$ Structure of the Extension in 1+1-Dimensions} and \ref{SecC1Structure} as discussed above.

\subsection*{Acknowledgements}

Both authors were supported by the Royal Society University Research Fellowship URF\textbackslash R1\textbackslash 211216. The first author was also supported by the Additional Funding Programme for
Mathematical Sciences, delivered by EPSRC (EP/V521917/1) and the
Heilbronn Institute for Mathematical Research. Moreover, we are grateful to the two anonymous referees for their helpful suggestions.

\section{Definitions and Background Results}\label{SecDefBackground}

Throughout this paper we will rely on a number of definitions and technical results which we collect in this section. Following \cite{FutureNotAlwaysOpen} (see in particular Lemma 2.7), given a time-oriented Lorentzian manifold $(M,g)$, we define 
\begin{equation}
     \begin{split}
            I^\pm(p,M)&:=\{q\in M:\exists \text{ future/past-directed \textbf{piecewise smooth} timelike curve from $p$ to $q$}\}\\
            J^\pm(p,M)&:=\{q\in M:\exists \text{ future/past-directed \textbf{locally Lipschitz} causal curve from $p$ to $q$}\}.
        \end{split}
\end{equation}
The definition of a (future/past-directed) locally Lipschitz causal curve follows \cite[Definition 1.3]{ChruscielGrant}, see also \cite[Definition 2.3]{FutureNotAlwaysOpen}.  Unless otherwise stated, all causal curves are assumed to be locally Lipschitz and all timelike curves are assumed to be piecewise smooth. This definition ensures that, for $g\in C^0$, $I^\pm(p,M)$ is open \cite[Lemma 2.7]{FutureNotAlwaysOpen} and the accumulation curve result \cite[Theorem 1.6]{ChruscielGrant} holds for causal curves. 

It will often be convenient to reparameterise causal curves by different smooth time functions, in which case the following Lemma will be useful.
\begin{lemma}\label{lemma:reparamdifferentiability}
    Let $(M,g)$ be a $(d+1)$-dimensional time-oriented Lorentzian manifold with a continuous metric $g$, let $\gamma : [0,1] \to M$ be a future (or past) directed locally Lipschitz causal curve, and let $t$ be a smooth time function on $M$. Then $\gamma$ can be reparameterised by $t$ and we denote this reparameterisation by $\overset{(1)}{\gamma}$. If $\gamma$ is differentiable at $s_0 \in [0,1]$ with non-vanishing causal derivative at this point, then $\overset{(1)}{\gamma}$ is differentiable at $t\big(\gamma(s_0)\big)$. Furthermore, if $\gamma$ is $C^1$ near $s_0$ with non-vanishing causal derivative at this point then $\overset{(1)}{\gamma}$ is $C^1$ near $t\big(\gamma(s_0)\big)$.
\end{lemma}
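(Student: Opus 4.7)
\medskip

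\noindent\textbf{Proof plan.} The scheme is the standard one: show that $\sigma := t\circ\gamma$ is a strictly increasing homeomorphism of $[0,1]$ onto its image, define $\overset{(1)}{\gamma} := \gamma\circ\sigma^{-1}$, and then transfer differentiability/$C^1$-regularity through $\sigma^{-1}$ by an elementary inverse-function argument in one variable.

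For the strict monotonicity, I would first note that because $\gamma$ is locally Lipschitz and $t$ is smooth, $\sigma = t\circ\gamma$ is locally Lipschitz and hence absolutely continuous on $[0,1]$, so the fundamental theorem of calculus yields $\sigma(s_1)-\sigma(s_0) = \int_{s_0}^{s_1} dt(\dot\gamma(r))\,dr$. At almost every $r$ the derivative $\dot\gamma(r)$ exists and is a non-zero future-directed causal vector (following the convention of \cite[Def.~1.3]{ChruscielGrant}; parametrising by Riemannian arc length if necessary), and since $dt$ is everywhere timelike and future-directed we have $dt(\dot\gamma(r))>0$ on this full-measure set. Hence $\sigma$ is strictly increasing, and the continuous bijection $\sigma:[0,1]\to[\sigma(0),\sigma(1)]$ has a continuous inverse $\sigma^{-1}$.

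For the first differentiability claim, assume $\gamma$ is differentiable at $s_0$ with $\dot\gamma(s_0)\neq 0$ future-directed causal. Then the chain rule gives that $\sigma$ is differentiable at $s_0$ with $\sigma'(s_0)=dt(\dot\gamma(s_0))>0$. I would then invoke the elementary one-dimensional inverse-function fact: if $\sigma$ is a strictly monotonic continuous function, differentiable at $s_0$ with $\sigma'(s_0)\neq 0$, then $\sigma^{-1}$ is differentiable at $t_0:=\sigma(s_0)$ with $(\sigma^{-1})'(t_0)=1/\sigma'(s_0)$. This follows by writing
\begin{equation*}
\frac{\sigma^{-1}(t)-\sigma^{-1}(t_0)}{t-t_0}=\frac{s-s_0}{\sigma(s)-\sigma(s_0)},\qquad s:=\sigma^{-1}(t),
\end{equation*}
and using continuity of $\sigma^{-1}$ together with strict monotonicity to pass to the limit. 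Composing with $\gamma$ (differentiable at $s_0=\sigma^{-1}(t_0)$) via the chain rule gives differentiability of $\overset{(1)}{\gamma}=\gamma\circ\sigma^{-1}$ at $t_0$.

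For the $C^1$ claim, assume $\gamma$ is $C^1$ on an open neighbourhood $U$ of $s_0$ with $\dot\gamma(s_0)\neq 0$. By continuity of $\dot\gamma$ on $U$ (shrinking $U$ if necessary), $\sigma$ is $C^1$ on $U$ with $\sigma'=dt(\dot\gamma)>0$, so the standard one-dimensional $C^1$ inverse function theorem yields that $\sigma^{-1}$ is $C^1$ on the open neighbourhood $\sigma(U)$ of $t_0$; the chain rule then gives $\overset{(1)}{\gamma}\in C^1$ near $t_0$. I don't expect any serious obstacle here, as everything is one-dimensional; the only point requiring a bit of care is the a.e.\ positivity of $dt(\dot\gamma)$, which relies on the convention that the causal velocity of a locally Lipschitz causal curve is non-zero almost everywhere (which can always be arranged by reparametrising by Riemannian arc length).
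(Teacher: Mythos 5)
Your proof is correct and follows essentially the same route as the paper: show $\sigma=t\circ\gamma$ is strictly monotonic via the fundamental theorem of calculus with $dt(\dot\gamma)>0$, invert, and transfer differentiability (resp.\ $C^1$-regularity) through the one-dimensional inverse. Your extra care about the a.e.\ non-vanishing causal velocity and about shrinking the neighbourhood so that $\sigma'>0$ before applying the $C^1$ inverse function theorem is a slightly more explicit version of what the paper does implicitly, but it is not a different argument.
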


\begin{proof}
    Since $\gamma$ is locally Lipschitz and $t$ is smooth, we have that $s \overset{\sigma}{\mapsto} t\big(\gamma(s)\big)$ is locally Lipschitz and thus for $0\leq s_1 < s_2 \leq 1$ we have $$\sigma(s_2) - \sigma(s_1) = \int_{s_1}^{s_2} \sigma'(s) \,ds = \int_{s_1}^{s_2} dt\big(\gamma'(s)\big) \, ds >0 \;,$$
    where in the last inequality we have assumed that $\gamma$ is future-directed -- if it were past-directed, we would get the reverse inequality. In either case it follows that $s \overset{\sigma}{\mapsto} t\big(\gamma(s)\big)$ is strictly monotonic, which implies that it can be inverted and thus $\gamma$ can be reparameterised by $t$. Moreover, since $t$ is a smooth time function and $\gamma$ is differentiable at $s_0$ with non-vanishing causal derivative at this point, it follows that $\sigma$ is differentiable at $s_0$ with non-vanishing derivative there. Hence the inverse function $\sigma^{-1}$ is differentiable at $t_0 := t\big(\gamma(s_0)\big)$. Thus, $\overset{(1)}{\gamma}(t) := \gamma\big(\sigma^{-1}(t)\big)$ is differentiable at $t_0$. {If $\gamma$ is $C^1$ near $s_0$ then the same argument shows that $\sigma$ is $C^1$ near $s_0$ and hence $\sigma^{-1}$ is $C^1$ near $t_0$. It follows that $\overset{(1)}{\gamma}$ is $C^1$ near $t_0$.}
\end{proof}

As in \cite{ChruscielGrant}, we will use certain smooth metrics to control the lightcones of $g$ (see for example Lemma \ref{lemma:nearmink} below). For $\zeta>-1$ define
\begin{equation}\label{modifiedminkowski}
\begin{split}
        \hat{g}_{\zeta}&=-(1+\zeta)^2dx_0^2+dx_1^2+\ldots + dx_d^2.
\end{split}
\end{equation}
We write $g'\prec g''$ for two metrics $g',g''$ if the following holds:
\begin{equation}
    g'(T,T)\leq0,T\neq0\implies g''(T,T)<0.
\end{equation}

\begin{lemma}\label{lemma:nearmink}\cite[Lemma 2.4]{SbierskiSchw}
 Let $(M,g)$ be a $(d+1)$-dimensional Lorentzian manifold with continuous metric $g$ and let $p\in M$. Then for every $\delta>0$ we can find $\epsilon_0,\epsilon_1>0$, an open neighbourhood $U$ of $p$, and a coordinate chart $\varphi:U\rightarrow(-\epsilon_0,\epsilon_0)\times(-\epsilon_1,\epsilon_1)^{d}$ such that
    \begin{enumerate}
        \item $\varphi(p)=(0,0,...,0)$
        \item $g_{\mu\nu}(p)=m_{\mu\nu}$
        \item $|g_{\mu\nu}(p')-m_{\mu\nu}|<\delta$\text{ for all } $p'\in U$
    \end{enumerate}
where $m_{\mu\nu}=\text{diag}(-1,1,\ldots,1)$ is the Minkowski metric on $\mathbbm{R}^{1,d}$. Moreover, for any $\zeta>0$, choosing $\delta >0$ sufficiently small ensures that\footnote{A straightforward calculation shows that $\delta=\frac{(1+\zeta)^2-1}{(1+(1+\zeta)^2)(d+1)}$ suffices, so in particular $\delta_0=\frac{3}{5(d+1)}$.\label{deltaalphafootnote}}
    \begin{equation}\label{eqn:lightconebound}
    \begin{split}
            \hat{g}_{-\frac{\zeta}{1+\zeta}}&\prec g\prec\hat{g}_\zeta\\
    i.e. \quad -\frac{1}{(1+\zeta)^2}dx_0^2+dx_1^2+\ldots + dx_d^2&\prec g\prec -(1+\zeta)^2dx_0^2+dx_1^2+\ldots + dx_d^2 \;.
    \end{split}
\end{equation}
It will be convenient to assume that $\delta<\delta_0$, where $\delta_0$ is chosen sufficiently small that \eqref{eqn:lightconebound} holds with $\zeta=1$. 
\end{lemma}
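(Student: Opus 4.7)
The proof splits naturally into a coordinate-construction step and an estimate for the lightcones, neither of which requires serious geometric input beyond continuity of $g$ and Sylvester's law of inertia.

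First I would produce a chart satisfying items (1)--(3) of the lemma. Starting from any smooth chart $\psi$ about $p$, the symmetric bilinear form $g(p)$ has Lorentzian signature, so I can apply Gram--Schmidt (equivalently, Sylvester's law of inertia) to find a linear map $A\in GL(d+1,\R)$ such that $A^\top g(p) A = m$, the standard Minkowski matrix. Replacing $\psi$ by $A^{-1}\psi$ and translating so that $p \mapsto 0$ gives a smooth chart in which $g_{\mu\nu}(p)=m_{\mu\nu}$. The components $g_{\mu\nu}$ are continuous functions of the coordinates, so there is a product neighbourhood $(-\epsilon_0,\epsilon_0)\times(-\epsilon_1,\epsilon_1)^d$ of $0$ on which $|g_{\mu\nu}-m_{\mu\nu}|<\delta$ for any prescribed $\delta>0$; restricting the chart to this product neighbourhood delivers (1)--(3).

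The only real content is the lightcone comparison. For the upper bound $g\prec \hat g_\zeta$, suppose $T\neq 0$ satisfies $g(T,T)\leq 0$. Writing $g=m+(g-m)$ and bounding $|(g-m)(T,T)|$ by $\delta \bigl(\sum_\mu|T^\mu|\bigr)^2 \leq \delta(d+1)\bigl(T_0^2+|T_{\mathrm{sp}}|^2\bigr)$ via Cauchy--Schwarz (where $T_{\mathrm{sp}}$ denotes the spatial part), the condition $g(T,T)\leq 0$ yields
\begin{equation}
|T_{\mathrm{sp}}|^2 \;\leq\; \frac{1+\delta(d+1)}{1-\delta(d+1)}\,T_0^2 \;.
\end{equation}
For this to imply $|T_{\mathrm{sp}}|^2 < (1+\zeta)^2 T_0^2$, i.e.\ $\hat g_\zeta(T,T)<0$, it suffices that $\frac{1+\delta(d+1)}{1-\delta(d+1)} < (1+\zeta)^2$, which rearranges to exactly the bound $\delta < \frac{(1+\zeta)^2-1}{(d+1)(1+(1+\zeta)^2)}$ cited in the footnote. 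For the lower bound $\hat g_{-\zeta/(1+\zeta)} \prec g$, an entirely symmetric calculation starting from $|T_{\mathrm{sp}}|^2 \leq T_0^2/(1+\zeta)^2$ and again estimating $(g-m)(T,T)$ by $\delta(d+1)(T_0^2+|T_{\mathrm{sp}}|^2)$ yields the same threshold on $\delta$, so a single choice works for both inequalities. Finally, applying the result with $\zeta=1$ gives the value $\delta_0 = 3/(5(d+1))$ of the footnote, and one demands $\delta<\delta_0$ so that the $\zeta=1$ bounds are always available.

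There is no serious obstacle here: the linear-algebra step is standard, the shrinking of the neighbourhood into a product form is immediate from continuity, and the comparison of cones reduces to an elementary inequality. The only mildly subtle point is keeping track of which direction of inclusion one needs for the definition of $\prec$; since $\hat g_\zeta$ has strictly wider cones than $m$ and $\hat g_{-\zeta/(1+\zeta)}$ has strictly narrower cones, the signs work out, and the symmetric dependence of both estimates on $\delta(d+1)$ is what allows the single threshold to control both inequalities simultaneously.
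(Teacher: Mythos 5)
Your proof is correct, and it is essentially the intended argument: the paper does not prove this lemma itself but imports it from the cited reference, recording only the sufficient threshold $\delta=\frac{(1+\zeta)^2-1}{(1+(1+\zeta)^2)(d+1)}$ in the footnote, and your Cauchy--Schwarz estimate $|(g-m)(T,T)|\leq\delta(d+1)\bigl(T_0^2+|T_{\mathrm{sp}}|^2\bigr)$ reproduces exactly that threshold for both cone inclusions, together with $\delta_0=\frac{3}{5(d+1)}$ at $\zeta=1$. The linear-algebra normalisation at $p$ and the shrinking to a product neighbourhood are the standard steps, so there is nothing to add.
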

We refer to $(U,\varphi)$ as a \textbf{near Minkowski neighbourhood} centred at $p$ and to $\varphi$ as \textbf{near Minkowski coordinates}. 
  We will denote near Minkowski coordinates by $(x_0,\underline{x})$, where $x_0\in(-\epsilon_0,\epsilon_0)$ and $\underline{x}=(x_1,\ldots,x_d)\in(-\epsilon_1,\epsilon_1)^d$. Given a curve $\gamma(s)$, we will denote its components in these coordinates by $(\gamma_0(s),\underline{\gamma}(s))=(\gamma_0(s),\gamma_1(s),\ldots,\gamma_d(s)):=\big(x_0(\gamma(s)),\underline{x}(\gamma(s))\big)$. 
  If $(M,g)$ in Lemma \ref{lemma:nearmink} is time-oriented, then without loss of generality we can assume that $\frac{\partial}{\partial x_0}$ is future-directed. If $(M,g)$ is not time-oriented (or not time-orientable), we define a time orientation on $(U,g)$ by stipulating that $\frac{\partial}{\partial x_0}$ is future-directed.
We have added the condition that $\delta<\delta_0$ since this allows us to prove the following corollary, which gives control over causal curves in $\tilde{U}$ and will be used frequently.


\begin{cor}\label{cor:reparam}
        Let $(U,\varphi=(x_0,\underline{x}))$ be a near Minkowski neighbourhood in $(M,g)$ as in Lemma \ref{lemma:nearmink}
        and let $\gamma:[a,b]\rightarrow U$ be a locally Lipschitz future-directed causal curve. Then
        
      (a) $x_0$ is a smooth time function on $U$. 
      
            (b) $\gamma$ can be parameterised by $x_0$, which is strictly monotonic along this curve. Denoting this reparameterisation by $\overset{(1)}{\gamma}$, we have 
        \begin{equation}\label{eqn:gradientboundcausal}
            \left\lVert\underline{\overset{(1)}{\gamma}}(x'_0)-\underline{\overset{(1)}{\gamma}}(x_0)\right\rVert_{\mathbbm{R}^d}< (1+\zeta)|x'_0-x_0|.
        \end{equation}
        
            
            (c) In 1+1-dimensions if $\gamma$ is $C^1$ and null then
            
            (i) $\gamma$ is achronal with respect to smooth timelike curves in $U$, and
            
            (ii) for any $\zeta\in(0,1)$ we can choose $U$ sufficiently small so that, for $x_0\neq x'_0$, we have
    \begin{equation}\label{eqn:gradientbound1+1}
        \frac{1}{1+\zeta}<\frac{|x'_0-x_0|}{\left|\overset{(1)}{\gamma}_1(x'_0)-\overset{(1)}{\gamma}_1(x_0)\right|}<1+\zeta \;
    \end{equation}
   in $U$ and 
   \begin{equation}\label{eqn:2dnullderivativeformula}
       \frac{d\overset{(1)}{\gamma}_1}{dx_0}=\frac{1}{g_{11}(x_0)}\left(-g_{01}(x_0)\pm\sqrt{g_{01}(x_0)^2-g_{00}(x_0)g_{11}(x_0)}\right)
   \end{equation}
  with either the positive or negative root everywhere in $U$, where we have written $g_{\mu\nu}(x_0)$ to denote $g\left(\overset{(1)}{\gamma}(x_0)\right)_{\mu\nu}$ (recall from Lemma \ref{lemma:reparamdifferentiability} that $\overset{(1)}{\gamma}(x_0)$ is also $C^1$).
\end{cor}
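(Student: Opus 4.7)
The plan is to deduce parts (a) and (b) directly from the near-Minkowski comparison in Lemma \ref{lemma:nearmink}. For (a), any nonzero $g$-causal vector $T$ is $\hat{g}_\zeta$-timelike by $g \prec \hat{g}_\zeta$, whence $|\underline{T}| < (1+\zeta)|T_0|$ and in particular $T_0 \ne 0$; since the future-directed causal cone at each point of $U$ is connected and contains $\partial_{x_0}$ (which has $T_0 = 1$), this forces $T_0 > 0$ for every future-directed causal $T$, so $x_0$ is strictly increasing along future-directed causal curves and is thus a smooth time function. For (b), Lemma \ref{lemma:reparamdifferentiability} with $t = x_0$ supplies the reparameterisation; the estimate \eqref{eqn:gradientboundcausal} would be extracted by invoking Lemma \ref{lemma:nearmink} with a slightly smaller $\delta$ so that $g \prec \hat{g}_{\zeta_1}$ for some $\zeta_1 < \zeta$, in which case the a.e.\ tangent $(1, \underline{\overset{(1)}{\gamma}}'(x_0))$ is $\hat g_{\zeta_1}$-timelike, giving $|\underline{\overset{(1)}{\gamma}}'| < 1+\zeta_1$ a.e.\ and hence $|\underline{\overset{(1)}{\gamma}}(x_0') - \underline{\overset{(1)}{\gamma}}(x_0)| \le (1+\zeta_1)|x_0'-x_0| < (1+\zeta)|x_0'-x_0|$ after integration.

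Part (c)(ii) is essentially algebraic. Substituting the $C^1$ tangent $(1, \overset{(1)}{\gamma}_1')$ into the null equation $g_{\mu\nu}T^\mu T^\nu = 0$ yields a quadratic in $\overset{(1)}{\gamma}_1'$ whose discriminant $g_{01}^2 - g_{00}g_{11}$ is close to $1$ and hence strictly positive on $U$; the two continuous, well-separated roots $\nu_\pm$ are close to $\pm 1$, and connectedness of the parameter interval together with continuity of $\overset{(1)}{\gamma}_1'$ forces it to agree with a single one of $\nu_\pm$ throughout, which is \eqref{eqn:2dnullderivativeformula}. The bound \eqref{eqn:gradientbound1+1} would then be obtained by combining $g \prec \hat g_\zeta$ (which applied to the $g$-causal tangent gives $|\overset{(1)}{\gamma}_1'| < 1+\zeta$) with the contrapositive of $\hat g_{-\zeta/(1+\zeta)} \prec g$ applied to the $g$-null vector $(1,\overset{(1)}{\gamma}_1')$, which must be $\hat g_{-\zeta/(1+\zeta)}$-spacelike and thus satisfies $|\overset{(1)}{\gamma}_1'| > \tfrac{1}{1+\zeta}$. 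Since $\overset{(1)}{\gamma}_1'$ has constant sign (from the constancy of the chosen root), integrating and again shrinking $\zeta$ slightly to promote almost-everywhere strictness to integrated strictness produces \eqref{eqn:gradientbound1+1}.

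Part (c)(i) is the step I expect to be the main obstacle, and I would argue by contradiction. Suppose a smooth timelike curve $\sigma$ in $U$ connects distinct points $\gamma(s_1), \gamma(s_2)$. By (a) their $x_0$-coordinates differ, and reversing $\sigma$ if necessary I may assume $x_0^{(1)} := x_0(\gamma(s_1)) < x_0^{(2)} := x_0(\gamma(s_2))$ with $\sigma$ future-directed; by (ii) I may further assume $\overset{(1)}{\gamma}_1' = \nu_+(\overset{(1)}{\gamma})$ (the case $\nu_-$ being symmetric). Reparameterising both curves by $x_0$ via Lemma \ref{lemma:reparamdifferentiability} and writing $h(x_0) := \sigma_1(x_0) - \overset{(1)}{\gamma}_1(x_0)$, strict timelikeness of $\sigma$ puts $\sigma_1'(x_0)$ strictly between $\nu_-(\sigma(x_0))$ and $\nu_+(\sigma(x_0))$ pointwise, while $\overset{(1)}{\gamma}_1' = \nu_+(\overset{(1)}{\gamma}(x_0))$. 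The key observation is that at any common zero $x_0^*$ of $h$ the two curves coincide, so $\nu_+(\sigma(x_0^*)) = \nu_+(\overset{(1)}{\gamma}(x_0^*)) = \overset{(1)}{\gamma}_1'(x_0^*)$ and hence $h'(x_0^*) = \sigma_1'(x_0^*) - \overset{(1)}{\gamma}_1'(x_0^*) < 0$ strictly at every zero of $h$. In particular $h'(x_0^{(1)}) < 0$ drives $h$ strictly negative on an initial interval to the right of $x_0^{(1)}$; taking $x_0^*$ to be the infimum of zeros of $h$ in $(x_0^{(1)}, x_0^{(2)}]$ (attained by continuity, and strictly greater than $x_0^{(1)}$ by the previous sentence), $h$ must be strictly negative on $(x_0^{(1)}, x_0^*)$ with $h(x_0^*) = 0$, forcing $h'(x_0^*) \ge 0$ and contradicting $h'(x_0^*) < 0$. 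The delicate points are the strict extraction of inequalities from the near-Minkowski comparison and the careful sign control of $h$ up to its first recurrent zero; modulo these, the remainder of the corollary is bookkeeping from Lemma \ref{lemma:nearmink} and Lemma \ref{lemma:reparamdifferentiability}.
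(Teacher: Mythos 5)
Your proposal is essentially correct, and for (a), (b) and (c)(ii) it runs along the same lines as the paper: the paper also gets (a) from the fact that the light-cone bound makes the level sets of $x_0$ spacelike (it phrases this via $\nabla x_0$ being past directed timelike rather than via connectedness of the future cone), obtains \eqref{eqn:gradientboundcausal} by integrating the pointwise cone bound, and proves (c)(ii) exactly as you do, by solving the null condition as a quadratic in $\frac{d\overset{(1)}{\gamma}_1}{dx_0}$ and using continuity plus the non-vanishing of this derivative (guaranteed by $\delta<\delta_0$) to fix one root throughout $U$; your extra shrinkings of $\zeta$ are unnecessary but harmless. Where you genuinely diverge is (c)(i): the paper extends the reparameterised curve (graphed over $x_1$) by horizontal segments to a barrier curve $\mu$ spanning the whole $x_1$-range, shows that smooth future directed timelike curves leave $\mu$ upwards at their start and approach it from below at their end, and derives a contradiction via the intermediate value theorem and a maximal crossing time. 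You instead compare slopes after reparameterising both curves by $x_0$: writing $h=\sigma_1-\overset{(1)}{\gamma}_1$, using that timelike slopes lie strictly between the null roots $\nu_\pm$ and that $\overset{(1)}{\gamma}_1'$ equals one fixed root (which, as you note, follows from the root-consistency part of (c)(ii) already in the given $U$, so no shrinking is needed), you get $h'<0$ at every zero of $h$ and run a first-return argument. This is a clean ODE-comparison alternative to the paper's topological barrier argument, and it avoids the auxiliary extension $\mu$ altogether; both are valid.

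One small technical point in your (b): you differentiate the reparameterised curve $\underline{\overset{(1)}{\gamma}}$ almost everywhere in $x_0$ and then integrate, but a priori you only know this curve is continuous and of bounded variation in $x_0$; applying the fundamental theorem of calculus in the $x_0$-parameter requires absolute continuity of $x_0\mapsto\underline{\overset{(1)}{\gamma}}(x_0)$, which is essentially the Lipschitz bound \eqref{eqn:gradientboundcausal} you are trying to prove. The simplest repair is the paper's ordering: prove the estimate in the original Lipschitz parameter $s$ (where $\|\underline{\gamma}'(s)\|<(1+\zeta)\gamma_0'(s)$ a.e.\ and the fundamental theorem applies), and only then substitute $s=s(x_0)$; alternatively one can check that $s\mapsto x_0(\gamma(s))$ has strictly positive derivative a.e., so its inverse is absolutely continuous. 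This is a fixable bookkeeping issue rather than a gap in the idea.
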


\begin{proof} (a)  It follows from \eqref{eqn:lightconebound} that  the level sets of $x_0$ are spacelike, thus $g^{-1}(dx_0, dx_0) <0$. Recall that $\partial_0$ is future-directed timelike in $U$. Since $1=dx_0(\partial_0)=g(\nabla x_0,\partial_0)$ it follows that $\nabla x_0$ is past-directed timelike, so $x_0$ is a time function on $U$.

(b)  Since $\gamma$ is locally Lipschitz, it follows that it is differentiable almost everywhere, and from \eqref{eqn:lightconebound} we have
\begin{align}
     -(1+\zeta)^2\left(\frac{d\gamma_0}{ds}\right)^2+\sum_{j=1}^d\left(\frac{d\gamma_j}{ds}\right)^2&<0\\
    \implies \Big|\Big|\frac{d\underline{\gamma}}{ds}\Big|\Big|_{\R^d}&<(1+\zeta)\frac{d\gamma_0}{ds}\label{eqn:derivbound2}\\
     \implies ||\underline{\gamma}(s')-\underline{\gamma}(s)||_{\R^d}&<(1+\zeta) |\gamma_0(s')-\gamma_0(s)|\label{eqn:bound}
\end{align}
where the final line follows from the fundamental theorem of calculus and holds for any $s,s'\in[a,b]$. 

By (a) and Lemma \ref{lemma:reparamdifferentiability}, we can parameterise $\gamma$ by $x_0$, which is strictly increasing along this curve. This reparameterisation is given by $\overset{(1)}{\gamma}(x_0)=\gamma(s^{\overset{(1)}{\gamma}}(x_0))$, where $s^{\overset{(1)}{\gamma}}(x_0)$ is the inverse of $x_0(\gamma(s))$. Writing $s=s^{\overset{(1)}{\gamma}}(x_0)$ in \eqref{eqn:bound} we obtain the required result.


(c)(i) In 1+1-dimensions, if $\gamma$ is $C^1$ then Lemma \ref{lemma:reparamdifferentiability} implies that $\overset{(1)}{\gamma}_1(x_0)$ is also $C^1$. From \eqref{eqn:lightconebound} we have 
\begin{equation}\label{eqn:derivativebound1+1}
    \frac{1}{1+\zeta}<\left|\frac{d\overset{(1)}{\gamma}_1}{dx_0}\right|<1+\zeta
\end{equation} so in particular $\overset{(1)}{\gamma}_1(x_0)$ is either strictly increasing or strictly decreasing. It follows that we can invert this function and parameterise $\gamma$ instead by $x_1$. We denote this reparameterised curve by $\overset{(2)}{\gamma} : [ \alpha, \beta] \to U$ where, in the case of $\overset{(1)}{\gamma}_1$ being strictly increasing, we have $\alpha = \gamma_1(a)$ and $\beta = \gamma_1(b)$, while if $\overset{(1)}{\gamma}_1$ is strictly decreasing then the values of $\alpha$ and $\beta$ are swapped. We extend $\overset{(2)}{\gamma}$ to a curve $\mu:(-\epsilon_1,\epsilon_1)\rightarrow U$, also parameterised by $x_1$, by (see Figure \ref{fig:achronality})
\begin{equation}
   \mu_0(x_1)=  \begin{cases}
       \overset{(2)}{\gamma}_0(\alpha)&\text{ for }x_1<\alpha\\
       \overset{(2)}{\gamma}_0(x_1)&\text{ for }\alpha\leq x_1\leq \beta\\
       \overset{(2)}{\gamma}_0(\beta)&\text{ for }x_1>\beta.
    \end{cases}
\end{equation}
We say a point $(x_0,x_1)\in U$ lies above or below $\mu$ if $x_0>\mu_0(x_1)$ or $x_0<\mu_0(x_1)$ respectively. 
Throughout the rest of the proof, $\sigma:[0,1]\rightarrow U$ will refer to a future-directed smooth timelike curve.

\textbf{Claim}: Suppose $\sigma(0)\in \text{Im}(\mu)$. Then $\sigma(s)$ lies above $\mu$ for $s>0$ sufficiently small. 

\begin{figure}[h]
    \centering
    \includegraphics[scale=0.25]{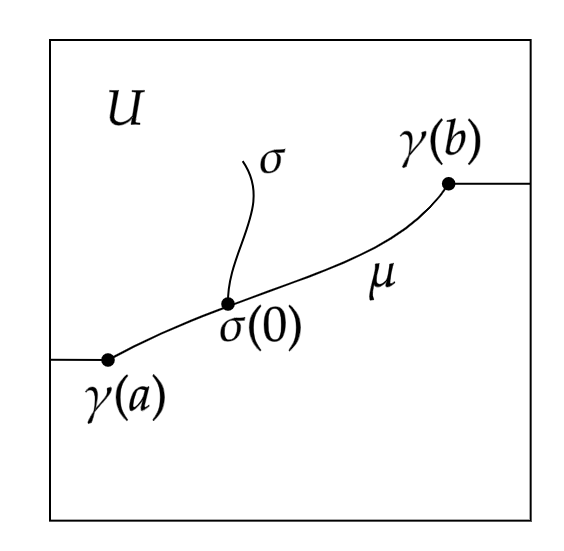}
    \caption{We extend $\gamma$ as a curve of constant $x_0$ so that the range of $x_1$ is now $(-\epsilon_1,\epsilon_1)$. This figure illustrates the claim in the case where $\frac{d\overset{(1)}{\gamma}_1}{dx_0}>0$ and $\sigma(0)\in \text{Im}(\gamma)$.}
    \label{fig:achronality}
\end{figure}

\textbf{Proof of claim}: Suppose $\frac{d\overset{(1)}{\gamma}_1}{dx_0}>0$ (the case where $\overset{(1)}{\gamma}_1(x_0)$ is strictly decreasing follows similarly) and recall that $\sigma_0(s)$ is strictly increasing. Since $x_0$ is constant along $\mu$ for $x_1>\beta$ or $x_1<\alpha$, the claim follows immediately in the cases where $\sigma_1(0)\geq\beta$ or $\sigma_1(0)<\alpha$. 

Now suppose $\sigma(0)\in\text{Im}\overset{(2)}{\gamma}\vert_{[\alpha,\beta)}$. It follows from part (b) that we can parameterise $\sigma$ by $x_0$. We denote this reparameterisation by $\overset{(1)}{\sigma}$. If $\frac{d\overset{(1)}{\sigma}_1}{dx_0}<0$ at $x_0=\sigma_0(0)$ then the claim follows immediately. Otherwise, since $\sigma$ is timelike we have $\frac{d\overset{(1)}{\gamma}_1}{dx_0}>\frac{d\overset{(1)}{\sigma}_1}{dx_0}\geq0$ at $x_0=\sigma_0(0)$, from which the claim also follows. 

By time reversal, one can also show that if instead $\sigma(1)\in\text{Im}(\mu)$ then $\sigma(s)$ lies below $\mu$ for $s<1$ sufficiently close to 1. 

We now prove part (c)(i) of Corollary \ref{cor:reparam}. Suppose for a contradiction that $\sigma(0),\sigma(1)\in\text{Im}(\gamma)$. Then by the claim above (and its time reversed analogue), we must have $\sigma(s)$ lying above $\mu$ for $s>0$ sufficiently small and below $\mu$ for $s<1$ sufficiently close to 1. Then by the intermediate value theorem, there exists $s_1\in(0,1)$ such that $\sigma(s_1)\in\text{Im}(\mu)$. Since $\sigma$ and $\mu$ are continuous, we may choose the maximal such $s_1$, so $\sigma(s)$ lies below $\mu$ for all $s\in(s_1,1)$, which contradicts the claim above.

(ii) By considering separately the cases where $\overset{(1)}{\gamma}_1(x_0)$ is strictly increasing and strictly decreasing, it follows from \eqref{eqn:derivativebound1+1} that for $x_0\neq x_0'$ we have
\begin{equation}
\begin{split}
        \overset{(1)}{\gamma}_1(x_0')-\overset{(1)}{\gamma}_1(x_0)&=\int_{x_0}^{x'_0}\frac{d\overset{(1)}{\gamma}_1(\rho)}{d\rho}d\rho\\
       \implies \frac{1}{1+\zeta}\left|x_0'-x_0\right|&<\left|\overset{(1)}{\gamma}_1(x_0')-\overset{(1)}{\gamma}_1(x_0)\right|<(1+\zeta)\left|x_0'-x_0\right|\\
       \implies\frac{1}{1+\zeta}&<\frac{\left|x_0'-x_0\right|}{\left|\overset{(1)}{\gamma}_1(x_0')-\overset{(1)}{\gamma}_1(x_0)\right|}<1+\zeta.
\end{split}
\end{equation}
Since $\gamma$ is null, $\overset{(1)}{\gamma}_1(x_0)$ must satisfy \eqref{eqn:2dnullderivativeformula}. Since $\delta<\delta_0<1$, taking the positive root in \eqref{eqn:2dnullderivativeformula} gives $\frac{d\overset{(1)}{\gamma}_1}{dx_0}>0$ while taking the negative root gives $\frac{d\overset{(1)}{\gamma}_1}{dx_0}<0$. Since we have $\frac{1}{1+\zeta}<\left|\frac{d\overset{(1)}{\gamma}_1}{dx_0}\right|$, it follows that $\frac{d\overset{(1)}{\gamma}_1}{dx_0}\neq0$ and hence $\overset{(1)}{\gamma}_1(x_0)$ satisfies \eqref{eqn:2dnullderivativeformula} with either the positive or negative root everywhere in $U$.
\end{proof}


A useful notion which is used extensively in \cite{GalLin16}, \cite{SbierskiSchwarzschild2}, and \cite{sbierski2022uniqueness} is that of the future boundary. 

\begin{definition}\cite[Definition 2.1]{GalLin16}\label{defn:futureboundary}
    Let $(M,g)$ be a time-oriented $(d+1)$-dimensional Lorentzian manifold with $g\in C^0$ and let $\iota:M\hookrightarrow\tilde{M}$ be a $C^0$-extension of $M$. The future boundary of $M$ is the set $\partial^+\iota(M)$ consisting of all points $\tilde{p}\in\tilde{M}$ such that there exists a smooth timelike curve $\tilde{\gamma}:[-1,0]\rightarrow\tilde{M}$ with $Im\left(\tilde{\gamma}\vert_{[-1,0)}\right)\subset\iota(M)$, $\tilde{\gamma}(0)=\tilde{p}\in\partial\iota(M)$, and $\iota^{-1}\circ\tilde{\gamma}\vert_{[-1,0)}$ is future-directed in $M$.
\end{definition}
It is clear that $\partial^+\iota(M)\subset\partial\iota(M)$. Throughout this paper we will abuse notation and write e.g. $\tilde{\gamma}\vert_{[-1,0)}$ for $Im(\tilde{\gamma}\vert_{[-1,0)})$ when referring to images of curves. Under the additional assumption that $(M,g)$ is globally hyperbolic, one can obtain the following result for points in the future boundary $\partial^+\iota(M)$.

\begin{prop}\cite[Proposition 2.3]{sbierski2022uniqueness}\label{prop:futureboundarychart}
    Let $\iota:M\hookrightarrow\Tilde{M}$ be a $C^0$-extension of a time-oriented globally hyperbolic Lorentzian manifold $(M,g)$ with $g\in C^0$ and with Cauchy hypersurface $\Sigma$. Let $\Tilde{p}\in\partial^+\iota(M)$. Then, for every $\delta>0$, in addition to choosing a corresponding near Minkowski neighbourhood $\tilde{\varphi}:\tilde{U}\rightarrow(-\epsilon_0,\epsilon_0)\times(-\epsilon_1,\epsilon_1)^{d}$, $\epsilon_0,\epsilon_1>0$, centred at $\tilde{p}$, there exists a Lipschitz continuous function $f:(-\epsilon_1,\epsilon_1)^d\rightarrow(-\epsilon_0,\epsilon_0)$ with the following property: 
    \begin{equation}\label{eqn:belowgraphoff}
      \tilde{U}_<:= \{(x_0,\underline{x})\in(-\epsilon_0,\epsilon_0)\times(-\epsilon_1,\epsilon_1)^{d}:x_0<f(\underline{x})\}\subset \tilde{\varphi}(\iota(I^+(\Sigma,M))\cap\tilde{U})
    \end{equation}
    and
     \begin{equation}\label{eqn:graph}
        \{(x_0,\underline{x})\in(-\epsilon_0,\epsilon_0)\times(-\epsilon_1,\epsilon_1)^{d}:x_0=f(\underline{x})\}\subset \tilde{\varphi}(\partial^+\iota(M)\cap\tilde{U}).
    \end{equation}
Moreover, the set on the left hand side of \eqref{eqn:graph}, i.e. the graph of $f$, is achronal (with respect to smooth timelike curves) in $\tilde{U}$.
\end{prop}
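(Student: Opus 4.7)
The plan is to define $f(\underline{x})$ as the $x_0$-height at which the vertical timelike line through $\underline{x}$ first exits the open set $\tilde{\varphi}\bigl(\iota(I^+(\Sigma,M))\cap\tilde{U}\bigr)$, and then derive the stated properties from the near-Minkowski cone estimates. First I would apply Lemma \ref{lemma:nearmink} to obtain near Minkowski coordinates $(x_0,\underline{x})$ on $\tilde{U}$ centred at $\tilde{p}$ with $\partial_{x_0}$ future directed. By Definition \ref{defn:futureboundary} there is a smooth future-directed timelike curve $\tilde{\gamma}:[-1,0]\to\tilde{M}$ with $\tilde{\gamma}(0)=\tilde{p}$ and $\tilde{\gamma}|_{[-1,0)}\subset\iota(M)$; using global hyperbolicity, by replacing $\Sigma$ with a later Cauchy hypersurface of $M$ and shortening the parameter interval if necessary, I may assume $\iota^{-1}\circ\tilde{\gamma}|_{[-1,0)}\subset I^+(\Sigma,M)$. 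Since $\iota$ is an open embedding, $\iota(I^+(\Sigma,M))$ is open in $\tilde{M}$. I shrink $\tilde{U}$ so that $\iota(M)\cap\tilde{U}\subset\iota(I^+(\Sigma,M))$, fix $s_0\in(-1,0)$ with $\tilde{\gamma}(s_0)\in\tilde{U}$, and reduce $\epsilon_1$ so that the horizontal slice at $x_0^{\mathrm{floor}}:=x_0(\tilde{\gamma}(s_0))$ lies in $\tilde{\varphi}(\iota(I^+(\Sigma,M))\cap\tilde{U})$. I then define
$$f(\underline{x}):=\inf\bigl\{x_0\in(x_0^{\mathrm{floor}},\epsilon_0)\,:\,(x_0,\underline{x})\notin\tilde{\varphi}(\iota(I^+(\Sigma,M))\cap\tilde{U})\bigr\}.$$

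Next I would verify the two containments. The inclusion \eqref{eqn:belowgraphoff} is immediate from the definition of $f$ as an infimum. Openness of $\iota(I^+(\Sigma,M))$ forces $(f(\underline{x}),\underline{x})\notin\iota(I^+(\Sigma,M))$; combined with $\iota(M)\cap\tilde{U}\subset\iota(I^+(\Sigma,M))$ this gives $(f(\underline{x}),\underline{x})\in\partial\iota(M)$. The vertical $g$-timelike segment from height $x_0^{\mathrm{floor}}$ up to $f(\underline{x})$ at $\underline{x}$, whose $\iota^{-1}$-preimage is future directed in $I^+(\Sigma,M)$, then witnesses $(f(\underline{x}),\underline{x})\in\partial^+\iota(M)$ via Definition \ref{defn:futureboundary}, establishing \eqref{eqn:graph}. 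The main technical step is to show $f(\underline{x})<\epsilon_0$ uniformly on $(-\epsilon_1,\epsilon_1)^d$: at $\underline{x}=\underline{0}$ this is immediate from $\tilde{p}\in\partial\iota(M)$, and at nearby $\underline{x}$ it follows by shrinking $\tilde{U}$ further, using closedness of $\partial\iota(M)$ in $\tilde{M}$ together with the cone control of Corollary \ref{cor:reparam} to transport the boundary obstruction from $\tilde{p}$ to every nearby vertical timelike line.

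Finally, Lipschitz regularity and achronality both follow from the cone inclusion $g\prec\hat{g}_\zeta$ in Lemma \ref{lemma:nearmink}. For Lipschitz with constant $1/(1+\zeta)$, the argument is the standard achronal-boundary cone argument: if the bound were violated for some $\underline{x}_1,\underline{x}_2$, then the straight segment between $(f(\underline{x}_1),\underline{x}_1)$ and $(f(\underline{x}_2),\underline{x}_2)$ would be $\hat{g}_\zeta$-timelike and hence $g$-timelike; combined with openness of $\iota(I^+(\Sigma,M))$ and of $I^-(\cdot,\tilde{M})$, one constructs a $g$-timelike curve from below the graph at one endpoint to below the graph at the other which must cross $\partial\iota(M)$ inside $\tilde{U}_<$, contradicting \eqref{eqn:belowgraphoff}. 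Achronality with respect to smooth $g$-timelike curves in $\tilde{U}$ then follows immediately: by Corollary \ref{cor:reparam}, any such curve connecting two graph points would satisfy $\|\underline{x}_2-\underline{x}_1\|<(1+\zeta)|f(\underline{x}_2)-f(\underline{x}_1)|$, strictly contradicting the Lipschitz bound. The principal obstacle is the uniform upper bound $f(\underline{x})<\epsilon_0$; the rest is a careful application of the cone estimates.
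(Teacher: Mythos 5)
You are reviewing a statement the paper does not prove: it is quoted verbatim from \cite[Proposition 2.3]{sbierski2022uniqueness}, so the only comparison available is with that cited proof, whose essential difficulty your proposal does not engage with. The fatal steps are the two containment claims you simply assert. First, ``I shrink $\tilde{U}$ so that $\iota(M)\cap\tilde{U}\subset\iota(I^+(\Sigma,M))$'' is not only unjustified but false in general: take $\tilde{M}$ to be the flat Lorentzian cylinder $(\mathbb{R}/2\pi\mathbb{Z})_t\times\mathbb{R}_x$, let $h\geq 0$ be smooth with $\sup|h'|<1$, $h\leq 0.1$ and $h(x)=0$ exactly at $x=x_0$, and let $M=\{0<t<2\pi-h(x)\}$ with the induced metric. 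Then $M$ is globally hyperbolic with Cauchy surface $\{t=\pi\}$ and $\tilde{p}=(0,x_0)\in\partial^+\iota(M)$, yet every neighbourhood of $\tilde{p}$ contains points of $\iota(M)$ with $t$ slightly larger than $0$, which lie in $I^-(\Sigma,M)$; note the proposition itself is still true here, precisely because its conclusions only concern the region \emph{below} the graph. Second, the ``floor slice'' claim fails for the same reason you cannot just shrink $\epsilon_1$: the slice $\{x_0=x_0^{\mathrm{floor}}\}\times(-\epsilon_1,\epsilon_1)^d$ shrinks towards the point vertically below $\tilde{p}$, not towards $\tilde{\gamma}(s_0)$, and a priori the only subset of $\iota(M)$ you control near $\tilde{p}$ is the curve $\tilde{\gamma}$ itself. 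Ruling out that $\iota(M)$ is, say, a thin shrinking tube around $\tilde{\gamma}$ (in which case no Lipschitz graph over a full spatial ball can have its entire sub-graph region inside $\iota(I^+(\Sigma,M))$; such tubes are exactly what global hyperbolicity excludes, via compactness of causal diamonds and the behaviour of preimages of curves leaving $\iota(M)$) is the technical heart of the cited proof. Your argument uses global hyperbolicity only to push $\tilde{\gamma}$ into $I^+(\Sigma,M)$, which is a sign that the main content is missing; you correctly flag $f<\epsilon_0$ as nontrivial, but the ``transport the boundary obstruction'' sentence is not an argument either.

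There is also a concrete error in the final step: the cone containments of Lemma \ref{lemma:nearmink} go the other way around. A $\hat{g}_\zeta$-timelike (wide-cone) chord is \emph{not} necessarily $g$-timelike; it is $\hat{g}_{-\zeta/(1+\zeta)}$-causal (narrow-cone) vectors that are $g$-timelike. So violating a Lipschitz bound with constant $1/(1+\zeta)$ gives you nothing, and the constant you could hope to extract from a $g$-achronality statement is $(1+\zeta)$, from which achronality does not conversely follow by Corollary \ref{cor:reparam} as you claim. Moreover, even granting a $g$-timelike chord between two near-graph points, your contradiction does not close: such a chord may leave the sub-graph region through the graph itself, i.e.\ it meets $\partial\iota(M)$ \emph{on} the graph and not inside $\tilde{U}_<$, so \eqref{eqn:belowgraphoff} is not violated; a timelike curve of $\tilde{M}$ is perfectly free to cross $\partial\iota(M)$. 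In the cited proof, achronality of the graph (and of $\partial^+\iota(M)$ locally) is again obtained using global hyperbolicity of $M$ through the preimage curves, and the Lipschitz property with constant $(1+\zeta)$ is then deduced from achronality together with the narrow-cone estimate; it is not a soft consequence of the cone bounds plus the definition of $f$. The first-exit definition of $f$ along vertical timelike lines and the final ordering ``achronal $\Rightarrow$ Lipschitz'' are the right ingredients, but as written the proposal assumes away precisely the statements that make the proposition nontrivial, so it cannot be accepted as a proof.
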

We refer to $(\tilde{U},\tilde{\varphi})$ as a \textbf{future boundary chart} centred at $\Tilde{p}$. It will also be convenient to define 
\begin{equation}
    \tilde{U}_{\leq}:= \{\tilde{\varphi}^{-1}(x_0,\underline{x})\in\tilde{U}:x_0\leq f(\underline{x})\}.
\end{equation}

Restricting now to 1+1-dimensions, we prove two results for Lorentzian manifolds which will be used in Proposition \ref{prop:boundarystructure} to understand the structure of the boundary $\partial\iota(M_t)$, where $M_t$ is as in Section \ref{toymodel}.

\begin{prop}\label{prop:C1nullcurve}
    Let $(M,g)$ be a 1+1-dimensional time-oriented Lorentzian manifold with $g\in C^0$. 
   Let $\gamma:[0,1]\rightarrow M$ be a future-directed locally Lipschitz causal curve such that no two points on $\gamma$ can be connected by a smooth timelike curve in $M$. Then $\gamma$ can be reparameterised as a $C^1$ null curve.
\end{prop}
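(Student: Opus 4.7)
The plan is to work locally in near Minkowski neighborhoods (Lemma \ref{lemma:nearmink}) with $\zeta=1$, show that in each such chart the reparameterization of $\gamma$ by $x_0$ is $C^1$ with null tangent, and then patch these local parameterizations into a global $C^1$ null reparameterization. A useful preliminary observation is that any smooth timelike curve contained in a chart $U\subset M$ is automatically a smooth timelike curve in $M$, so the achronality hypothesis passes to every chart.

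First I would fix a near Minkowski chart centered at a point $\gamma(s_0)$ and let $\eta(x_0)=(x_0,\eta_1(x_0))$ denote the reparameterization of $\gamma$ by $x_0$ provided by Corollary \ref{cor:reparam}(b); by that corollary $\eta_1$ is Lipschitz. Write $n_\pm$ for the two continuous null slopes appearing in \eqref{eqn:2dnullderivativeformula}. The first key step will be to show that at every point $x_0^*$ where $\eta_1$ is differentiable one has $\eta_1'(x_0^*)\in\{n_+(\eta(x_0^*)),n_-(\eta(x_0^*))\}$. If instead $\eta_1'(x_0^*)$ lay strictly between the two null slopes at $\eta(x_0^*)$, differentiability would give
\[\frac{\eta_1(x_0^*+\epsilon)-\eta_1(x_0^*-\epsilon)}{2\epsilon}=\eta_1'(x_0^*)+o(1)\]
as $\epsilon\to 0$; combined with the continuity of $n_\pm$, for all sufficiently small $\epsilon>0$ the straight-line chord from $\eta(x_0^*-\epsilon)$ to $\eta(x_0^*+\epsilon)$ would have constant slope strictly between $n_-(x_0,y)$ and $n_+(x_0,y)$ at every point $(x_0,y)$ on the chord. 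This chord would then be a smooth future-directed $g$-timelike curve joining two points of $\gamma$, contradicting achronality.

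Next I would rule out jumps between the two null slopes by showing that $\eta_1$ is strictly monotonic on $(-\epsilon_0,\epsilon_0)$. Otherwise the intermediate value theorem produces $x_0^a<x_0^b$ with $\eta_1(x_0^a)=\eta_1(x_0^b)$, and then the segment $x_0\mapsto(x_0,\eta_1(x_0^a))$ is a smooth curve with $g$-timelike tangent $\partial_{x_0}$ (timelike by Lemma \ref{lemma:nearmink}) connecting two points of $\gamma$, again a contradiction. The lightcone bounds \eqref{eqn:lightconebound} with $\zeta=1$ yield $n_+>\tfrac{1}{2}>-\tfrac{1}{2}>n_-$, so strict monotonicity together with the rigidity established above forces $\eta_1'(x_0)=n_+(\eta(x_0))$ a.e.\ in the increasing case and $\eta_1'(x_0)=n_-(\eta(x_0))$ a.e.\ in the decreasing case. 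Since $\eta_1$ is Lipschitz (hence absolutely continuous) and $n_\pm\circ\eta$ is continuous, the fundamental theorem of calculus then upgrades $\eta_1$ to $C^1$ with $\eta_1'=n_\pm(\eta)$ identically on the chart; in particular $\eta$ is a $C^1$ null curve there.

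Finally, I would cover $\gamma([0,1])$ by finitely many such charts. On any overlap the two $x_0$-reparameterizations are smooth and strictly monotonic functions of one another (both are smooth time functions, strictly monotonic along $\gamma$ by Corollary \ref{cor:reparam}(b)), so they patch into a single strictly increasing $C^1$ homeomorphism $\phi$ for which $\gamma\circ\phi^{-1}$ is $C^1$ with null tangent everywhere. The main obstacle I anticipate is the rigidity argument in the second paragraph: it is essential both that achronality be imposed against \emph{smooth} timelike curves (so that the chord is an admissible competitor) and that $n_\pm$ be continuous on the chart (so that the open slope condition persists in a full neighborhood), since $g$ being only $C^0$ rules out any appeal to uniqueness of null curves or Lipschitz regularity of the null slope field.
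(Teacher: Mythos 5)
Your proposal is correct, but it takes a genuinely different route from the paper's proof. The paper fixes a point $\gamma(s_0)$, introduces the modulus $\omega(s)$ measuring the deviation of the lightcones of $g$ from the Minkowski cones on balls $B(s)$ together with its integral $\Omega$, and traps the curve in the thin region $C_+$ around the Minkowski null line through the centre (ruling out the regions $A$ and $B$ by the causal gradient bound and by an explicit construction of a $C^1$ timelike connecting curve, upgraded to smooth via \cite[Lemma 2.7]{ChruscielGrant}); this yields differentiability with null derivative at the centre point directly, and continuity of the derivative then follows from \eqref{eqn:2dnullderivativeformula}. You instead work with the almost-everywhere derivative of the Lipschitz reparameterisation: the secant/chord achronality argument pins the a.e.\ derivative to one of the two continuous null slope fields $n_\pm$, strict monotonicity (proved essentially as in the paper, via horizontal timelike segments) selects a single branch, and absolute continuity plus the fundamental theorem of calculus with the continuous integrand $n_\pm\circ\eta$ upgrades $\eta_1$ to $C^1$ with null tangent. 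Your route is shorter and avoids the $\omega$, $\Omega$ trapping construction, at the price of invoking Lebesgue-type machinery (a.e.\ differentiability, absolute continuity, FTC), whereas the paper's argument produces the pointwise derivative at the centre directly from a modulus of continuity of the metric. Two small points to tighten: your claim that $\eta_1'\in\{n_-(\eta),n_+(\eta)\}$ at \emph{every} point of differentiability implicitly requires that the derivative there is causal -- either restrict to the full-measure set where the derivative exists and is future directed causal, which is all your FTC step uses, or note that the derivative is a limit of averages of a.e.\ causal slopes and hence lies in the closed cone; and one should restrict to a connected component of $\gamma^{-1}(U)$ before reparameterising by $x_0$, observing also that the chord lies in the convex coordinate rectangle and is therefore a smooth timelike curve in $M$. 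Both points are routine and do not affect the validity of the argument.
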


\begin{proof}
Let $s_0\in[0,1]$ and $(U,\varphi)$ be a near Minkowski neighbourhood centred on $p:=\gamma(s_0)$ such that $\hat{g}_{-\frac{1}{4}}\prec g\prec\hat{g}_{\frac{1}{4}}$ on $U$. 

For $s\in[0,\epsilon')$, where $\epsilon':=\min\{\epsilon_0,\epsilon_1\}$, define the ball
\begin{equation}
\begin{split}
    B(s)&:=\left\{q\in U:\sqrt{(x_0(q))^2+(x_1(q))^2}< s\right\}
\end{split}
\end{equation}
and
\begin{equation}\label{eqn:omega}
\begin{split}
     w:[0,\epsilon')\rightarrow&\left[0,\frac{1}{4}\right]\\
   s\mapsto
      &\inf\left\{\zeta\in\left[0,\frac{1}{4}\right]:\hat{g}_{- \zeta} \prec g \prec \hat{g}_{\zeta}\text{ on }B(s)\right\}.
\end{split}
\end{equation}
Note that we have bounded the deviation of the lightcones of $g$ from the Minkowski lightcones slightly differently than in \eqref{eqn:lightconebound}.  The function $ w(s)$ measures the maximal deviation in $B(s)$. Since $g$ is continuous and $(x_0,x_1)$ are near Minkowski coordinates, it follows that $ w$ is continuous, non-decreasing and $ w(0)=0$. By reducing $\epsilon_0$ if necessary, it follows from Corollary \ref{cor:reparam} (b) that $\gamma$ can be parameterised by $x_0\in(-\epsilon_0,\epsilon_0)$ such that this reparameterised curve, denoted $\overset{(1)}{\gamma},$ is  Lipschitz in $U$. It follows that  $\overset{(1)}{\gamma}_1(x_0)$ is differentiable almost everywhere on $(-\epsilon_0,\epsilon_0)$. In particular, since $\gamma$ is causal, at points in $B(\epsilon')$ where $\overset{(1)}{\gamma}_1(x_0)$ is differentiable, we have
\begin{equation}\label{eqn:causalgradientbounds}
\begin{split}
    \left|\frac{d\overset{(1)}{\gamma}_1}{dx_0}(x_0)\right|\leq1+  w\left(\sqrt{x_0^2+\overset{(1)}{\gamma}_1(x_0)^2}\right)
    \end{split}
\end{equation}
For $|s|<\frac{\epsilon'}{\sqrt{5}}$, define $\mathcal{W}(s):=\int^{|s|}_0 w(\sqrt{5}s')ds'$ and we now restrict our attention to $B(\frac{\epsilon'}{\sqrt{5}}) \subset U$ such that for $q = (x_0,x_1) \in B(\frac{\epsilon'}{\sqrt{5}}) $ we have that $\mathcal{W}(x_0)$ is defined. We separate $B(\frac{\epsilon'}{\sqrt{5}})$ into four disjoint regions as follows (see Figure \ref{fig:omegaseparation}). 
\begin{itemize}
    \item $A:=\{q\in B(\frac{\epsilon'}{\sqrt{5}}):|x_1|<|x_0|-\mathcal{W}(x_0)\}$
    \item $B:=\{q\in B(\frac{\epsilon'}{\sqrt{5}}):|x_1|>|x_0|+\mathcal{W}(x_0)\}$
    \item $C_-:=\{q\in B(\frac{\epsilon'}{\sqrt{5}}):|x_0+x_1|\leq \mathcal{W}(x_0)\}\setminus\{(0,0)\}$
    \item $C_+:=\{q\in B(\frac{\epsilon'}{\sqrt{5}}):|x_0-x_1|\leq \mathcal{W}(x_0)\}$
\end{itemize}
\begin{figure}[h]
        \centering
        \includegraphics[scale=0.3]{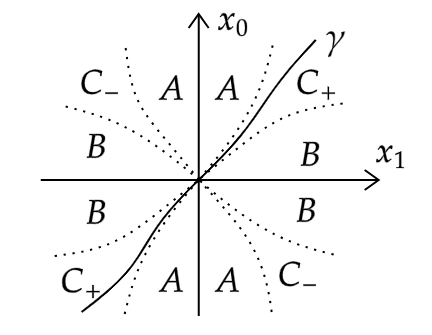}
        \caption{Figure showing the split of $B(\frac{\epsilon'}{\sqrt{5}})$ into the regions $A$, $B$, $C_-$ and $C_+$ as described in the proof of Proposition \ref{prop:C1nullcurve}, where we show that $\gamma\subset C_+$.}
        \label{fig:omegaseparation}
    \end{figure}

\textbf{Claim}: Replacing $x_1$ with $-x_1$ if necessary, we have $\gamma\subset C_+$.

\textbf{Proof of claim}: We first show that $\gamma\cap B=\emptyset$. 
By Corollary \ref{cor:reparam} (b), we have $|\overset{(1)}{\gamma}_1(x_0)|< 2|x_0|$. Combining this with \eqref{eqn:causalgradientbounds} and using the fundamental theorem of calculus, we have
\begin{equation}
    \begin{split}
    \overset{(1)}{\gamma}_1(x_0)&=\int_0^{x_0}\frac{d\overset{(1)}{\gamma}_1(x_0')}{dx_0'}dx_0'\\
       \implies |\overset{(1)}{\gamma}_1(x_0)|&\leq\int_0^{|x_0|}1+ w\left(\sqrt{x_0'^2+\overset{(1)}{\gamma}_1(x_0')^2}\right)dx_0'\\
       &\leq \int_0^{|x_0|}1+ w\left(\sqrt{5}x_0'\right)dx_0'\\
        &= |x_0|+ \mathcal{W}(x_0)
    \end{split}
\end{equation}
and hence $\gamma\cap B=\emptyset$. 

Now suppose $\overset{(1)}{\gamma}(x_0^*)\in A$ for some $x_0^* \in (-\epsilon_0,\epsilon_0)$ (note that this requires $x^*_0\neq0$) and define $x_1^*:=\overset{(1)}{\gamma}_1(x_0^*)$. We assume that $x_0^*>0$ and $x_1^*\geq 0$, so 
\begin{equation}\label{EqArrow}
x_1^*<x_0^*- \mathcal{W}(x_0^*)
\end{equation}
(the other cases follow similarly).
We will show that there exists a $C^1$ timelike curve $\sigma$ from $\overset{(1)}{\gamma}(0)$ to $\overset{(1)}{\gamma}(x_0^*)$. To do this, we seek a $C^1$ function $\sigma_1:[0,x_0^*]\rightarrow[0,x_1^*]$ such that $\sigma_1(0)=0$, $\sigma_1(x_0^*)=x_1^*$ and $|\sigma_1'(x_0)|<1- w(\sqrt{5}x_0)$. Such a function would satisfy $|\sigma_1(x_0)|\leq x_0$ and hence
\begin{equation}
    \begin{split}
        \left|\sigma_1'(x_0)\right|&< 1- w(\sqrt{5}x_0) \leq   1- w\left(\sqrt{(x_0)^2+(\sigma_1(x_0))^2}\right),
    \end{split}
\end{equation}
where we have used the monotonicity of $ w$, and from which it then follows that defining $\sigma(x_0):=(x_0,\sigma_1(x_0))$ would give the required curve. We claim that $\sigma_1(x_0):=- \mathcal{W}(x_0)+\frac{x_0}{x_0^*}(x_1^*+ \mathcal{W}(x_0^*))$ satisfies the conditions. We have $\sigma_1(0)=0$, $\sigma(x_0^*)=x_1^*$ and 
\begin{equation}
     \begin{split}
        \sigma_1'(x_0)&=- w(\sqrt{5}x_0)+\frac{1}{x_0^*}(x_1^*+ \mathcal{W}(x_0^*))\\
        \overset{\eqref{EqArrow}}{\implies} 1- w(\sqrt{5}x_0)>\sigma_1'(x_0)&\geq - w(\sqrt{5}x_0)\\
        &> -1+ w(\sqrt{5}x_0)\;,
          \end{split}
\end{equation}
where we have also used $0 \leq  w \leq \frac{1}{4}$.
Moreover, $\sigma_1$ is $C^1$ since $ w$ is continuous.

Then, since the timelike future of a point is the same whether defined with respect to $C^1$ or smooth curves \cite[Lemma 2.7]{ChruscielGrant}, it follows that there exists a future-directed smooth timelike curve from $\overset{(1)}{\gamma}(0)$ to $\overset{(1)}{\gamma}(x_0^*)$, which contradicts the assumption of the proposition. We conclude that $\gamma\cap A=\emptyset$ and hence $\gamma\subset C_-\cup C_+$.

In order to show that $\gamma\cap C_-=\emptyset$ (after possibly replacing $x_1$ with $-x_1$), we first show that $\overset{(1)}{\gamma}_1(x_0)$ is a strictly monotonic function on $(-\epsilon_0,\epsilon_0)$. Suppose this were not the case, so for example there exist $a<b<c$ such that $\overset{(1)}{\gamma}(a),\overset{(1)}{\gamma}_1(c)\leq \overset{(1)}{\gamma}_1(b)$ (the case where $\overset{(1)}{\gamma}_1(b)<\overset{(1)}{\gamma}_1(a),\overset{(1)}{\gamma}_1(c)$ follows similarly). If $\overset{(1)}{\gamma}_1(a)=\overset{(1)}{\gamma}_1(b)$ then the straight line from $(a,\overset{(1)}{\gamma}_1(a))$ to $(b,\overset{(1)}{\gamma}_1(a))$ is a smooth, future-directed timelike curve connecting distinct points on $\overset{(1)}{\gamma}$. This is again a contradiction. We get a similar contradiction if $\overset{(1)}{\gamma}_1(c)=\overset{(1)}{\gamma}_1(b)$. We may therefore assume that $\overset{(1)}{\gamma}_1(a),\overset{(1)}{\gamma}_1(c)< \overset{(1)}{\gamma}_1(b)$. Let $\alpha=\frac{1}{2}(\overset{(1)}{\gamma}_1(b)+\max\{\overset{(1)}{\gamma}_1(a),\overset{(1)}{\gamma}_1(c)\})\in(\overset{(1)}{\gamma}_1(a),\overset{(1)}{\gamma}_1(b))\cap(\overset{(1)}{\gamma}_1(c),\overset{(1)}{\gamma}_1(b))$. By the intermediate value theorem, there exist $a'\in(a,b)$ and $b'\in(b,c)$ such that $\overset{(1)}{\gamma}_1(a')=\overset{(1)}{\gamma}_1(c')=\alpha$. Once again we obtain a contradiction by considering the straight line from $(a',\overset{(1)}{\gamma}_1(a'))$ to $(c',\overset{(1)}{\gamma}_1(a'))$.  Without loss of generality (replacing $x_1$ with $-x_1$ if necessary) we can therefore assume that $\overset{(1)}{\gamma}_1(x_0)$ is strictly increasing. 

Recalling $\overset{(1)}{\gamma}_1(0) = 0$, it  now follows that, along $\overset{(1)}{\gamma}$ in $B(\frac{\epsilon'}{\sqrt{5}})$, we have
\begin{equation}
\begin{split}
    |x_0+\overset{(1)}{\gamma}_1(x_0)|&\geq|x_0|\\
&\geq\int_0^{|x_0|} w(\sqrt{5}s)ds\\
&= \mathcal{W}(x_0)
\end{split}
\end{equation}
since $ w(s)\in\left[0,\frac{1}{4}\right]$. Furthermore, equality in the above holds if and only if $x_0=0$. We conclude that $\gamma\cap C_-=\emptyset$ and hence $\gamma\subset C_+$.

Finally we show that $\overset{(1)}{\gamma}(x_0)$ is $C^1$ and null. Since $\gamma\subset C_+$, we have
\begin{equation}
    \begin{split}
    \left|\frac{\overset{(1)}{\gamma}_1(x_0)}{x_0}-1\right|&=\frac{|\overset{(1)}{\gamma}_1(x_0)-x_0|}{|x_0|}\\
        &\leq\frac{ \mathcal{W}(x_0)}{|x_0|}\\
        &\leq w(\sqrt{5}x_0)\\
        &\rightarrow0\text{ as }x_0\rightarrow0.
    \end{split}
\end{equation}
Hence $\overset{(1)}{\gamma}$ is differentiable at $(0,0)$ with $\frac{d\overset{(1)}{\gamma}_1}{dx_0}=1$ (so in particular $\gamma$ is null at this point). Since this point on $\gamma$ was arbitrary, we conclude that for any $s\in[0,1]$ we can choose near Minkowski coordinates $(\hat{x}_0,\hat{x}_1)$ centred at $\gamma(s)$ such that the reparameterisation of $\gamma$ by $\hat{x}_0$ is differentiable and null at this new point. Returning to the original near Minkowski coordinates $(x_0,x_1)$, since $x_0$ is a smooth time function on $U$
it follows from Lemma \ref{lemma:reparamdifferentiability} that 
$\overset{(1)}{\gamma}(x_0)$
is differentiable and null everywhere. Hence by Corollary \ref{cor:reparam} (c)(ii) we have
\begin{equation}\label{eqn:nullgradient2}
    \left.\frac{d\overset{(1)}{\gamma}_1}{dx_0}\right\vert_{x_0=x_0^*}=\frac{1}{g_{11}(x_0^*)}\left(-g_{01}(x_0^*)+\sqrt{g_{01}(x_0^*)^2-g_{00}(x_0^*)g_{11}(x_0^*)}\right)
\end{equation}
   where we have chosen the positive root since $\overset{(1)}{\gamma}(x_0)$ was assumed to be strictly increasing. 
    Since $g$ is continuous, it follows that $\frac{d\overset{(1)}{\gamma}}{dx_0}$ is continuous.
\end{proof}

\begin{remark}
In \cite{LorentzMeetsLipschitz} it was shown that if the metric is $C^{0,1}$ then causal maximisers admit a $C^{1,1}$ parameterisation, while if the metric is $C^{0,\alpha}$ ($\alpha\in(0,1)$) then they admit only a $C^{1,\frac{\alpha}{4}}$ parameterisation. Proposition \ref{prop:C1nullcurve} shows that, in \underline{two spacetime dimensions}, if the metric is $C^0$ then null maximisers admit a $C^1$ parameterisation. 
\end{remark}

\begin{definition} Let  $(M,g)$ be a time-oriented Lorentzian manifold with $g \in C^0$.    For $p,q\in M$ we define the timelike distance 
    \begin{equation}
        d_g(p,q):=\begin{cases}\sup\{L(\gamma):\gamma\text{ is a  future-directed piecewise}
        \\
        \hspace{1.8cm}\text{smooth timelike curve from $p$ to $q$}\} & \text{if }q\in I^+(p,M)\\
       0 & \text{otherwise.}            
       \end{cases}
   \end{equation} 
\end{definition}

Here, $L(\gamma) = \int \sqrt{-g(\dot{\gamma}(s), \dot{\gamma}(s))} \, ds$ denotes the standard Lorentzian length of the timelike curve $\gamma$.
The following lemma will be used to prove the achronality of the boundary in Proposition \ref{prop:boundarystructure} (b).
\begin{lemma}\label{lemma:semicontinuity}
    The timelike distance function, $d_g$, is lower semicontinuous. 
\end{lemma}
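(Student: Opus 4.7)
The plan is to establish lower semicontinuity of $d_g$ directly from the definition, exploiting the fact that $I^{\pm}(\cdot, M)$ is open (which holds for continuous metrics when $I^\pm$ is defined via piecewise smooth timelike curves, as noted in the excerpt following \cite[Lemma 2.7]{FutureNotAlwaysOpen}). Fix $(p,q) \in M \times M$ and $\epsilon > 0$. We want to produce open neighbourhoods $U \ni p$ and $V \ni q$ such that $d_g(p', q') > d_g(p,q) - \epsilon$ whenever $p' \in U$ and $q' \in V$.

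The case $d_g(p,q) = 0$ is trivial since $d_g \geq 0$, so we assume $d_g(p,q) > 0$, which means $q \in I^+(p, M)$. By the definition of $d_g$ as a supremum, I choose a future directed piecewise smooth timelike curve $\gamma : [0,1] \to M$ from $p$ to $q$ with
\begin{equation}
L(\gamma) > d_g(p,q) - \tfrac{\epsilon}{3}\;.
\end{equation}
By continuity of the length integral, I pick $0 < t_0 < t_1 < 1$ close enough to $0$ and $1$ respectively so that $L(\gamma|_{[0,t_0]}) < \tfrac{\epsilon}{3}$ and $L(\gamma|_{[t_1,1]}) < \tfrac{\epsilon}{3}$.

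Now I set $U := I^-(\gamma(t_0), M)$ and $V := I^+(\gamma(t_1), M)$. Both are open (again using that $I^\pm$ is open for $C^0$ metrics when defined with piecewise smooth timelike curves), and they contain $p$ and $q$ respectively since $\gamma|_{[0,t_0]}$ and $\gamma|_{[t_1,1]}$ are piecewise smooth timelike curves connecting the relevant points. Given any $p' \in U$ and $q' \in V$, there exist piecewise smooth future directed timelike curves $\alpha$ from $p'$ to $\gamma(t_0)$ and $\beta$ from $\gamma(t_1)$ to $q'$. The concatenation $\alpha * \gamma|_{[t_0, t_1]} * \beta$ is a piecewise smooth future directed timelike curve from $p'$ to $q'$, and since Lorentzian length is additive over concatenation and $L(\alpha), L(\beta) \geq 0$, its length is at least
\begin{equation}
L(\gamma|_{[t_0,t_1]}) = L(\gamma) - L(\gamma|_{[0,t_0]}) - L(\gamma|_{[t_1,1]}) > d_g(p,q) - \tfrac{\epsilon}{3} - \tfrac{\epsilon}{3} - \tfrac{\epsilon}{3} = d_g(p,q) - \epsilon\;.
\end{equation}
Taking the supremum over all such connecting curves gives $d_g(p', q') > d_g(p,q) - \epsilon$, establishing lower semicontinuity.

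I do not anticipate a serious obstacle here: the argument is the standard one from smooth Lorentzian geometry, and the only point requiring care is checking that openness of $I^\pm$ and the existence of timelike connecting curves are still available in the $C^0$ setting. Both are guaranteed by the convention, fixed at the start of Section \ref{SecDefBackground}, that $I^\pm$ is defined via \emph{piecewise smooth} timelike curves, together with \cite[Lemma 2.7]{FutureNotAlwaysOpen}.
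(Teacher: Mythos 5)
Your proof is correct and follows essentially the same route as the paper's: truncate a nearly length-maximising timelike curve near its endpoints, use the openness of $I^\mp$ at the truncation points to join nearby points $p'$, $q'$ by timelike segments, and concatenate. The only difference is that you phrase lower semicontinuity via neighbourhoods while the paper uses sequences, which is immaterial.
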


\begin{proof}
    Let $\epsilon>0$ and suppose $(p_n)_{n=1}^\infty$, $(q_n)_{n=1}^\infty$ are sequences in $M$ such that $p_n\rightarrow p\in M$ and $q_n\rightarrow q\in M$ as $n\rightarrow\infty$. We will show that 
    \begin{equation}
        d(p_n,q_n)\geq d(p,q)-\epsilon
    \end{equation}
for all $n$ sufficiently large.

    If $q\notin I^+(p,M)$ then $d(p,q)=0$ and there is nothing to show. Now suppose $q\in I^+(p,M)$ and let $\gamma:[0,1]\rightarrow M$ be a future-directed timelike curve from $p$ to $q$ such that $L(\gamma)\geq d(p,q)-\frac{\epsilon}{2}$. 

    Next choose $\delta>0$ sufficiently small so that $L(\gamma\vert_{[\delta,1-\delta]})\geq L(\gamma)-\frac{\epsilon}{2}$ and define $p':=\gamma(\delta)$, $q':=\gamma(1-\delta)$. 
Recall that $I^-(p',M)$ and $I^+(q',M)$ are open \cite[Lemma 2.7]{FutureNotAlwaysOpen}. It follows that for $n$ sufficiently large 
there exist future-directed timelike curves $\gamma^{(1,n)}$ from $p_n$ to $p'$ and
$\gamma^{(2,n)}$ 
from $q'$ to $q_n$. 
We conclude that
    \begin{equation}
        \begin{split}
            d(p_n,q_n)&\geq L(\gamma^{(1,n)})+L(\gamma\vert_{[\delta,1-\delta]})+L(\gamma^{(2,n)})\\
            &\geq L(\gamma)-\frac{\epsilon}{2}\\
            &\geq d(p,q)-\epsilon.
        \end{split}
    \end{equation}
\end{proof}


The following lemma concerns the 1+1-dimensional spacetime $(M_t,g_t)$ from Section \ref{toymodel} and will also be used in the proof of Proposition \ref{prop:boundarystructure}.
\begin{lemma}\label{lemma:causalhomotopyexistence}
    Let $\gamma:[0,1]\rightarrow M_t$ be a locally Lipschitz future-directed causal curve in $(M_t,g_t)$ and suppose $p\in J^+(\gamma(0),M_t)\cap J^-(\gamma(1),M_t)$. Then there exists a causal homotopy of $\gamma$ with fixed endpoints (see \cite[Lemma 2.12]{Sbie22a}), denoted $\Gamma:[0,1]\times[0,1]\rightarrow M_t$, such that $p=\Gamma(s;\lambda)$ for some $(s,\lambda)\in[0,1]\times[0,1]$.
\end{lemma}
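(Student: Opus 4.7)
The plan is to exploit the simplicity of the causal structure in the double null gauge. Since $g_t=-\Omega^2(du\otimes dv + dv\otimes du)$ is conformally related to the flat metric $-du\otimes dv - dv\otimes du$ on $M_t$, and time orientation makes both $\partial_u$ and $\partial_v$ future directed null, a locally Lipschitz curve in $M_t$ is future directed causal if and only if its $u$- and $v$-components are both non-decreasing functions of the parameter. Moreover, since $M_t=(-1,1)\times(-1,0)$ is convex in $(u,v)$-coordinates, straight-line segments with non-decreasing $u,v$-components are admissible future directed causal curves, and hence $J^+(q,M_t)=\{q'\in M_t:q'_u\geq q_u,\ q'_v\geq q_v\}$. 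The hypothesis $p\in J^+(\gamma(0),M_t)\cap J^-(\gamma(1),M_t)$ therefore translates to $\gamma(0)_u\leq p_u\leq \gamma(1)_u$ and $\gamma(0)_v\leq p_v\leq \gamma(1)_v$.

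First I would construct an auxiliary future directed causal curve $\tilde{\gamma}:[0,1]\to M_t$ from $\gamma(0)$ to $\gamma(1)$ that passes through $p$. Fix any $s_0\in(0,1)$ and define $\tilde{\gamma}$ to be the piecewise affine curve (in $(u,v)$-coordinates) that travels from $\gamma(0)$ to $p$ on $[0,s_0]$ and from $p$ to $\gamma(1)$ on $[s_0,1]$, each segment linearly parameterised. Both segments lie in $M_t$ by convexity and have non-decreasing $u$- and $v$-components by the above translation of the hypothesis, so $\tilde{\gamma}$ is a locally Lipschitz future directed causal curve from $\gamma(0)$ to $\gamma(1)$ through $p$, with $\tilde{\gamma}(s_0)=p$.

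Next I would define the homotopy by the componentwise convex combination
$$\Gamma(s,\lambda):=(1-\lambda)\,\gamma(s)+\lambda\,\tilde{\gamma}(s),$$
where the sum is taken in $(u,v)$-coordinates. Continuity of $\Gamma$ on $[0,1]\times[0,1]$ is immediate, and convexity of $M_t$ in these coordinates ensures $\Gamma([0,1]\times[0,1])\subset M_t$. For fixed $\lambda$, the components $\Gamma_u(\cdot,\lambda)$ and $\Gamma_v(\cdot,\lambda)$ are convex combinations of non-decreasing locally Lipschitz functions, hence are themselves non-decreasing and locally Lipschitz; so $\Gamma(\cdot,\lambda)$ is a future directed causal curve from $\gamma(0)$ to $\gamma(1)$. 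Thus $\Gamma$ is a causal homotopy of $\gamma$ with fixed endpoints, and $\Gamma(s_0,1)=\tilde{\gamma}(s_0)=p$.

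I do not expect any serious obstacle: the double null gauge reduces causality to componentwise monotonicity, which is preserved under convex combinations, and convexity of $M_t$ in $(u,v)$ prevents the homotopy from leaving the manifold. The only point needing care is compatibility with the precise notion of causal homotopy in \cite[Lemma 2.12]{Sbie22a}; if that definition were to require, say, smoothness in $\lambda$ or a stronger regularity condition at the break point $s_0$, one could replace $\tilde\gamma$ by a small rounding near $s_0$ within the causal diamond of $p$, keeping both coordinate components monotone, or interpolate not with $\tilde\gamma$ but with a smoothed version agreeing with $\tilde\gamma$ outside a neighbourhood of $s_0$.
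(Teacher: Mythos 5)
Your proposal is correct, and it reaches the conclusion by a genuinely different construction from the paper's. Both arguments start from the same observation that in the double null gauge a locally Lipschitz curve is future directed causal precisely when its $u$- and $v$-components are non-decreasing (and, correspondingly, that $p\in J^+(\gamma(0),M_t)\cap J^-(\gamma(1),M_t)$ is equivalent to the coordinate inequalities $\gamma_u(0)\leq u(p)\leq\gamma_u(1)$, $\gamma_v(0)\leq v(p)\leq\gamma_v(1)$). The paper then builds the homotopy intrinsically out of $\gamma$ itself: for each $\lambda$ the initial portion of $\gamma$ is replaced by an ``L-shaped'' curve running first along constant $u$ and then along constant $v$, and two applications of the intermediate value theorem (first in $\lambda$ to match $v(p)$, then in $s$ to match $u(p)$) locate the parameters with $\Gamma(s_*;\lambda_*)=p$. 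You instead construct an explicit causal curve $\tilde\gamma$ through $p$ (piecewise affine, by convexity of $M_t$ in the $(u,v)$-plane) and interpolate linearly between $\gamma$ and $\tilde\gamma$ in coordinates, using that componentwise monotonicity is preserved under convex combinations; this avoids the IVT bookkeeping entirely and hits $p$ by construction at $\lambda=1$. The trade-off is small: your route needs the convexity of $M_t$ and of the causal condition in these coordinates (both true here, but more gauge-dependent in flavour), whereas the paper's homotopy only deforms along coordinate lines within the causal diamond and is closer in spirit to how the homotopy is subsequently fed into \cite[Lemma 2.12]{Sbie22a}. One minor point you should dispose of explicitly: if $p=\gamma(0)$ or $p=\gamma(1)$, one affine segment of $\tilde\gamma$ degenerates to a constant piece, which fails the non-vanishing-derivative convention for causal curves; but in that case $p$ already lies on $\gamma$ and the trivial homotopy suffices (and the paper's own construction has analogous degenerate-parametrisation pieces when $\gamma_v$ is locally constant, so the notion of causal homotopy in use is insensitive to this).
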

\begin{proof}
    Since $\gamma$ is a future-directed causal curve then, since $\gamma$ is locally Lipschitz, it is differentiable almost everywhere with $\gamma_u'(s),\gamma_v'(s)\geq0$. Since $p\in J^+(\gamma(0),M_t)\cap J^-(\gamma(1),M_t)$, it follows that $\gamma_u(1)\geq u(p)\geq\gamma_u(0)$ and $\gamma_v(1)\geq v(p)\geq\gamma_v(0)$. By the intermediate value theorem, there exists $\lambda_*\in [0,1]$ such that $\gamma_v(\lambda_*)=v(p)$. We first consider the case $u(p)\leq \gamma_u(\lambda_*)$. 
    Define (see Figure \ref{fig:homotopylemma})
    \begin{equation}
        \begin{split}
            \Gamma(s;\lambda):=\begin{cases}
                \big(\gamma_u(0),\gamma_v(2s)\big) & \text{for } s\in\big[0,\frac{\lambda}{2}\big]\\
                \big(\gamma_u(2s -\lambda),\gamma_v(\lambda)\big) & \text{for } s\in\big[\frac{\lambda}{2},\lambda\big] \\
                \gamma(s) & \text{for } s\in[\lambda,1].
            \end{cases}
        \end{split}
    \end{equation}
This is a causal homotopy of $\gamma$ with fixed endpoints, where $s$ is the curve parameter and $\lambda$ the homotopy parameter. Since $\gamma_u(0) \leq u(p) \leq \gamma_u(\lambda_*)$ it follows from the intermediate value theorem that there is $s_* \in [\frac{\lambda_*}{2}, \lambda_*]$ such that 
$\Gamma(s_*;\lambda_*)=p$. The case $u(p)>\gamma_u(s_*)$ follows similarly.
\begin{figure}
    \centering
    \includegraphics[scale=0.25]{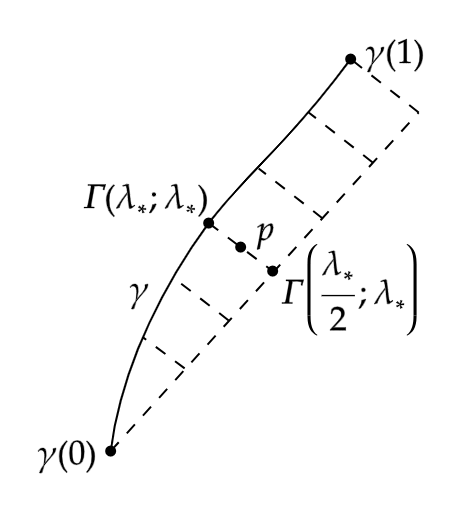}
    \caption{Construction (in the case $u(p)\leq\gamma_u(\lambda_*)$) of $\Gamma(s;\lambda)$, a causal homotopy of $\gamma$ with fixed endpoints such that $\Gamma(s_*;\lambda_*)=p$ for some $\lambda_*\in[0,1]$ and some $s_*\in[\frac{\lambda_*}{2},\lambda_*]$. $\Gamma(s;\lambda)$ is shown as a dotted line for $s\in[0,\lambda)$. For $s\in[\lambda,1]$, points $\Gamma(s;\lambda)$ lie on $\gamma$.}
    \label{fig:homotopylemma}
\end{figure}
\end{proof}

\section{First Structural Properties of Extensions in 1+1-Dimensions} \label{SecBdry}

The aim of this section is to start the investigation of the local structure of the boundary $\partial\iota(M_t)$, where $\iota : M_t \hookrightarrow \tilde{M}$ is a $C^0$-extension across $\{v = 0\}$ of the two dimensional Lorentzian manifold $(M_t,g_t)$ introduced in Section \ref{toymodel}. In Proposition \ref{prop:boundarystructure} we will show that, locally, $\iota$ extends continuously to the boundary of $\overline{M}_t$. Moreover, we make a special choice of near Minkowski coordinates centred at a point on the boundary $\partial \iota(M_t)$ which may be considered as introducing a `normal form' for such continuous extensions. Various general properties of such $C^0$-extensions with respect to these coordinates will be collated, which will be used in the following sections.    We define $\gamma^{u_0}$ to be the curve $u=u_0$ in $M_t$ (parameterised by $v$) and $\lambda^{v_0}$ to be the curve $v=v_0$ (parameterised by $u$). We also define $\tilde{\gamma}^{u_0}:=\iota\circ\gamma^{u_0}$ and $\tilde{\lambda}^{v_0}:=\iota\circ\lambda^{v_0}$. Furthermore, we recall the following definition:



\begin{definition}\cite[Definition 2.6.3]{ElementsofCausality}\label{defn:accumulationcurve}
    Let $\mu_n:I\rightarrow M$, $n\in\mathbbm{N}$, be a sequence of continuous curves in $(M,g)$, where $g\in C^0$. We shall say that $\mu:I\rightarrow M$ is an accumulation curve of the $\mu_n$'s, or that the $\mu_n$'s accumulate at $\mu$, if there exists a subsequence of $(\mu_{n})_{n=1}^\infty$ that converges to $\mu$ uniformly (with respect to local coordinates) on compact subsets of $I$.
    \end{definition}

\begin{prop}\label{prop:boundarystructure}
Suppose $\iota:M_t\hookrightarrow\tilde{M}$ is a $C^0$-extension of $(M_t,g_t)$ to $(\tilde{M},\tilde{g})$. Suppose $\tau : [-1,0) \to M_t$ is a $C^1$ future-directed causal curve such that  $\lim_{s \to 0} \tau_u(s)=: u_* <1$, $\lim_{s \to 0} \tau_v(s) = 0$ and $\tilde{p}_{u_*}:=\lim_{s \to 0}(\iota \circ \tau)(s) \in \rd \iota(M_t) \subset \tilde{M}$ exists. 

(a)(i) There exists a near Minkowski neighbourhood, $(\tilde{U},\tilde{\varphi}=(\tilde{x}_0,\tilde{x}_1))$ centred at $\tilde{p}_{u_*}$ and $\epsilon>0$, $v_*<0$ such that $\iota:[u_*-\epsilon,u_*+\epsilon]\times[v_*,0)\hookrightarrow \tilde{U}$ extends continuously to $\iota:[u_*-\epsilon,u_*+\epsilon]\times[v_*,0]\hookrightarrow\tilde{U}$ and \begin{equation}\label{eqn:monotonicityconventions}
    \frac{\partial\tilde{x}_1}{\partial u}<0<\frac{\partial\tilde{x}_0}{\partial u},\frac{\partial\tilde{x}_0}{\partial v},\frac{\partial\tilde{x}_1}{\partial v}
\end{equation}
holds on the connected component of $\iota(M_t) \cap \tilde{U}$ that contains $\iota\big([u_* - \epsilon, u_* + \epsilon] \times [v_*,0)\big)$.\footnote{Here, and in the following, by slight abuse of notation we have denoted the restriction of $\iota$ to $[u_*-\epsilon,u_*+\epsilon]\times[v_*,0)$ as well its continuous extension to $[u_*-\epsilon,u_*+\epsilon]\times[v_*,0]$  by the same symbol.}

(ii) For all $u\in[u_*-\epsilon,u_*+\epsilon]$, parameterising the extension of $\tilde{\gamma}^u$ up to $\partial\iota(M_t)$ by $\tilde{x}_0$ gives a $C^1$-regular null curve   with $\frac{d\overset{(1)}{\tilde{\gamma}^u}_1}{d\tilde{x}_0}>0$ (including at the boundary point). 

(iii) Define the (a priori only continuous) curve $\tilde{\lambda}^0:[u_*-\epsilon,u_*+\epsilon]\rightarrow\tilde{U}$ by  $\tilde{\lambda}^0(u)=\iota(u,0)$.\footnote{Note that the image of this curve may consist of a single point - see Example \ref{example:cornerextension}.} Then the interior of the region bound by the curves $\tilde{\gamma}^{u_*-\epsilon}\vert_{[v_*,0)}$, $\tilde{\gamma}^{u_*+\epsilon}\vert_{[v_*,0)}$, $\tilde{\lambda}^{v_*}\vert_{[u_*-\epsilon,u_*+\epsilon]}$ and $\tilde{\lambda}^{0}\vert_{[u_*-\epsilon,u_*+\epsilon]}$ is equal to $\iota\big{(}(u_*-\epsilon,u_*+\epsilon)\times(v_*,0)\big{)}$.

(b) Suppose $\tilde{\lambda}^0(u_*-\epsilon)\neq\tilde{\lambda}^0(u_*)$ or $\tilde{\lambda}^0(u_*+\epsilon)\neq\tilde{\lambda}^0(u_*)$. Then the image of $\tilde{\lambda}^0\vert_{[u_*-\epsilon,u_*+\epsilon]}$ can be parameterised by $\tilde{x}_0$ to give a $C^1$ null curve $\overset{(1)}{\tilde{\lambda}^0}(\tilde{x}_0)$ with $\frac{d\overset{(1)}{\tilde{\lambda}^0}_1}{d\tilde{x}_0}<0$ which intersects the extension of $\tilde{\gamma}^u$ transversely for all $u\in[u_*-\epsilon,u_*+\epsilon]$.\footnote{For emphasis: we do not claim here that there is a surjective function  $h : I \to [u_* - \epsilon, u_* + \epsilon]$ such that $\left(\tilde{x}_0\circ\tilde{\lambda}^0 \circ h\right)(s) = s$, although this will follow eventually from Theorem \ref{thm:C0structure} where we will show that $\tilde{\lambda}^0:[u_*-\epsilon,u_*+\epsilon]\rightarrow\tilde{U}$ is injective. Here we claim that the image of $\tilde{\lambda}^0\vert_{[u_*-\epsilon,u_*+\epsilon]}$ equals the image of $\overset{(1)}{\tilde{\lambda}^0}$.}
\end{prop}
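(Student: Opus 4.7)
The plan is to set up a well-chosen near Minkowski neighbourhood $(\tilde{U}, \tilde{\varphi} = (\tilde{x}_0, \tilde{x}_1))$ centred at $\tilde{p}_{u_*}$ in which Proposition \ref{prop:futureboundarychart} pins the local boundary $\partial^+\iota(M_t) \cap \tilde{U}$ as a Lipschitz achronal graph $\tilde{x}_0 = f(\tilde{x}_1)$. Inside $\tilde{U}$ the two smooth null families $\tilde{\gamma}^u$ and $\tilde{\lambda}^v$ admit $C^1$ reparameterisations by $\tilde{x}_0$ of constant-sign slope by Corollary \ref{cor:reparam}(c), governed by the ODE \eqref{eqn:2dnullderivativeformula} with continuous coefficients extending across the boundary. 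Tracking where the $\tilde{\gamma}^u$ meet the graph of $f$ will produce the continuous extension of $\iota$ and the monotonicity \eqref{eqn:monotonicityconventions}, and Proposition \ref{prop:C1nullcurve} applied to the achronal image of $\tilde{\lambda}^0$ will give its $C^1$ null structure in (b).

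\textbf{Step 1: verifying $\tilde{p}_{u_*} \in \partial^+\iota(M_t)$ and fixing coordinate signs.} To invoke Proposition \ref{prop:futureboundarychart} I first produce a piecewise smooth timelike curve terminating at $\tilde{p}_{u_*}$. Since $\tau$ is future directed causal in the double-null basis, $\tau_u, \tau_v$ are both non-decreasing; extract a subsequence $s_k \to 0$ along which both are strictly increasing, which lets me join $\tau(s_k)$ to $\tau(s_{k+1})$ in $M_t$ by a short timelike segment (strict increase in both $u$ and $v$ gives timelike separation for $g_t$), and control the step sizes in a preliminary near Minkowski chart so the concatenation stays near $\tilde{p}_{u_*}$. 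Proposition \ref{prop:futureboundarychart} then furnishes $(\tilde{U}, \tilde{\varphi})$. Corollary \ref{cor:reparam}(c)(ii) applied to each null $\tilde{\gamma}^u$ inside the connected component of $\iota(M_t) \cap \tilde{U}$ containing the tail of $\tau$ yields a slope of constant sign; after possibly flipping $\tilde{x}_1 \to -\tilde{x}_1$ I take this sign positive, which forces the other null family $\tilde{\lambda}^v$ to have negative slope. The four inequalities of \eqref{eqn:monotonicityconventions} are then immediate: both $\rd_u$ and $\rd_v$ are future directed null (since $g_t(\rd_u,\rd_v) = -\Omega^2 < 0$ puts them in the same time-cone), so $\rd_u\tilde{x}_0, \rd_v\tilde{x}_0 > 0$, while the signs of the two null slopes give $\rd_v\tilde{x}_1 > 0$ and $\rd_u\tilde{x}_1 < 0$.

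\textbf{Step 2: continuous extension, (a)(ii), (a)(iii).} For each $u$ near $u_*$, the curve $\tilde{\gamma}^u$ parameterised by $\tilde{x}_0$ is a $C^1$ null curve whose slope satisfies \eqref{eqn:2dnullderivativeformula} with right-hand side continuous on all of $\tilde{U}$ (since $\tilde{g}$ is $C^0$ on $\tilde{M}$, including across $\partial\iota(M_t)$). Hence the curve extends as a $C^1$ null curve up to its first intersection with the graph of $f$, and by continuous dependence of ODEs on initial data this endpoint varies continuously in $u$, defining $\overline{\iota}(u, 0)$. For $\epsilon, v_*$ small enough that all $\tilde{\gamma}^u\vert_{[v_*,0]}$ stay in $\tilde{U}$, piecing this together with $\iota$ on $v < 0$ yields a continuous extension on $[u_* - \epsilon, u_* + \epsilon] \times [v_*, 0]$. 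Part (a)(ii) is built in: the extension is $C^1$ null at the endpoint, with positive slope, because \eqref{eqn:2dnullderivativeformula} is continuous at the boundary. For (a)(iii), the monotonicity \eqref{eqn:monotonicityconventions} makes $\overline{\iota}$ a homeomorphism onto its image, and the interior of the image equals the region enclosed by the four named curves by an intermediate value argument.

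\textbf{Step 3: part (b) and the main obstacle.} Suppose $\tilde{\lambda}^0$ is not a single point. Its image lies in $\partial^+\iota(M_t) \cap \tilde{U} \subset \{\tilde{x}_0 = f(\tilde{x}_1)\}$, which is achronal with respect to smooth timelike curves in $\tilde{U}$; hence no two of its points are joined by a smooth timelike curve. Proposition \ref{prop:C1nullcurve} then reparameterises $\tilde{\lambda}^0$ as a $C^1$ null curve $\overset{(1)}{\tilde{\lambda}^0}$. Its slope is the negative root of \eqref{eqn:2dnullderivativeformula}, since $\tilde{\lambda}^0$ is the boundary limit of the curves $\tilde{\lambda}^v$ whose slopes are negative by Step 1, and continuity of $g_{\mu\nu}$ preserves this sign in the limit. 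Transversality with each $\tilde{\gamma}^u$ then follows from the opposite signs together with the uniform bound \eqref{eqn:derivativebound1+1}. The most delicate step I expect is the control in Step 2 that the entire slab $[u_* - \epsilon, u_* + \epsilon] \times [v_*, 0)$ sits inside a single connected component of $\iota^{-1}(\tilde{U})$, so that the $\tilde{\gamma}^u$ cannot ``braid'' around $\tilde{p}_{u_*}$ and re-enter $\tilde{U}$ through a different component; I would rule this out by combining the Lipschitz bounds of Corollary \ref{cor:reparam}(b) with the causal homotopy construction of Lemma \ref{lemma:causalhomotopyexistence}.
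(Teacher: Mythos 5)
The opening move is where your argument breaks down. To invoke Proposition \ref{prop:futureboundarychart} you need $\tilde{p}_{u_*}\in\partial^+\iota(M_t)$, i.e.\ a curve that is smooth and timelike in $(\tilde{M},\tilde{g})$ up to and including the endpoint $\tilde{p}_{u_*}$, with its earlier portion in $\iota(M_t)$. This is not known at this stage and is in general \emph{false}: for the corner extension of Example \ref{example:cornerextension} one has $\lim_{v\to 0}(\iota_{corner,\beta}\circ\gamma^u)(v)=(0,0)\notin\partial^+\iota_{corner,\beta}(M_t)$, and yet parts (a)(i)--(iii) of the proposition must (and do) apply to that extension, so no proof of (a) can legitimately start from a future boundary chart. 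Your construction does not repair this: if $\tau$ is a constant-$u$ null curve (allowed by the hypotheses) there is no subsequence along which both $\tau_u$ and $\tau_v$ strictly increase; and even when a piecewise smooth timelike curve in $\iota(M_t)$ limiting to $\tilde{p}_{u_*}$ exists, its velocity can degenerate to a null direction at the endpoint, so it need not extend to a \emph{smooth timelike} curve of $\tilde{M}$ through $\tilde{p}_{u_*}$ as Definition \ref{defn:futureboundary} requires -- which is exactly what happens at the corner. In the paper, $\tilde{p}_{u_*}\in\partial^+\iota(M_t)$ is only obtained in Remark \ref{remark:futureboundary}, \emph{after} the proposition and under the hypothesis of part (b), using (a)(iii). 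Consequently your Steps 2 and 3, which lean on the Lipschitz graph $f$ (extending $\tilde{\gamma}^u$ ``up to its first intersection with the graph of $f$'', and achronality of $\mathrm{Im}(\tilde{\lambda}^0)$ because it ``lies in the graph''), have no foundation as written.

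There are two further genuine gaps. First, the heart of (a)(i) is the containment $\iota\big([u_*-\epsilon,u_*+\epsilon]\times[v_*,0)\big)\ssubset\tilde{U}$; you only flag this at the end and propose the Lipschitz bounds plus Lemma \ref{lemma:causalhomotopyexistence}, but the causal-homotopy argument only controls the side $u\leq u_*$ (points causally between $\tau(s_0)$ and $\tilde{p}_{u_*}$), while for $u>u_*$ the paper needs a separate trapping argument (the compact region $K(v_0)$ bounded by $\tilde{\gamma}^{u_*}$ and two straight lines, exploiting injectivity of $\iota$ and the slope bounds), which your sketch does not supply. Relatedly, ``continuous dependence of ODEs on initial data'' is not available here: \eqref{eqn:2dnullderivativeformula} has a merely continuous right-hand side (no uniqueness), and the curves $\tilde{\gamma}^u$, $\tilde{\lambda}^0$ are prescribed by $\iota$, not produced by solving that ODE in $\tilde{U}$; the paper instead obtains the boundary extension from monotone limits plus compactness and proves its continuity by re-centring the whole argument at each nearby boundary point. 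Second, in (b), Proposition \ref{prop:C1nullcurve} takes as input a future directed \emph{locally Lipschitz causal} curve, whereas $\tilde{\lambda}^0$ as you use it is a priori only continuous (and lying in an achronal Lipschitz graph would not even make it causal -- spacelike graphs are achronal too). The paper manufactures the required parameterisation via uniform (Dini) convergence of the $\tilde{\lambda}^v$, uniformly Lipschitz reparameterisations over a common interval, the limit-curve theorem, and then gets the no-timelike-relation hypothesis from lower semicontinuity of the time separation (Lemma \ref{lemma:semicontinuity}); none of these steps can be skipped, although your observation that, once $\tilde{p}_{u_*}\in\partial^+\iota(M_t)$ is established à la Remark \ref{remark:futureboundary}, the achronality could alternatively be read off the boundary graph is a reasonable shortcut for that single ingredient.
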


Let us remark that there are $C^0$-extensions across $\{v=0\}$ for which the assumption $\tilde{\lambda}^0(u_*-\epsilon)\neq\tilde{\lambda}^0(u_*)$ made in part (b) is not satisfied -- see Example \ref{example:cornerextension}. Note also that in part (b) we are not assuming $\tilde{\lambda}^0(u)\neq\tilde{\lambda}^0(u_*)$ for $u\in(u_*,u_*+\epsilon]$ (although this will follow from Proposition \ref{prop:nomixedcornerextension}).

\begin{proof} (a)(i) Let $(\Tilde{U},\tilde{\varphi}=(\tilde{x}_0,\tilde{x}_1))$ be a near Minkowski neighbourhood centred at $\Tilde{p}_{u_*}$. Choose $s_0 \in (-1,0)$ sufficiently close to $0$ such that $\tilde{\tau}([s_0,0))\subset\tilde{U}$, where $\tilde{\tau}:=\iota\circ\tau$. If necessary, after replacing $\tilde{x}_0$ with $-\tilde{x}_0$ we can arrange that the time orientation on $\Tilde{U}$ agrees with the time orientation on $\tilde{\tau}|_{[s_0,0)}$ induced from $M_t$. In other words, $\partial_{\tilde{x}_0}$ as well as   $\iota_*\partial_u=\frac{\partial\tilde{x}_0}{\partial u}\partial_{\tilde{x}_0}+\frac{\partial\tilde{x}_1}{\partial u}\partial_{\tilde{x}_1}$ and $\iota_*\partial_v=\frac{\partial\tilde{x}_0}{\partial v}\partial_{\tilde{x}_0}+\frac{\partial\tilde{x}_1}{\partial v}\partial_{\tilde{x}_1}$ (which are non-zero and null since $\iota$ is an isometry) are future-directed along $\tilde{\tau}|_{[s_0,0)} \subset \tilde{U}$. Also note that the light cone bound \eqref{eqn:lightconebound} implies $\frac{\partial\tilde{x}_0}{\partial u},\frac{\partial\tilde{x}_0}{\partial v}, \frac{\partial\tilde{x}_1}{\partial u}, \frac{\partial\tilde{x}_1}{\partial v}\neq0$ on $\tilde{U}\cap\iota(M_t)$.
This gives directly $\frac{\partial\tilde{x}_0}{\partial u},\frac{\partial\tilde{x}_0}{\partial v} >0$ on $\tilde{\tau}|_{[s_0,0)} \subset \tilde{U}$. Furthermore, after replacing $\tilde{x}_1$ with $-\tilde{x}_1$ if necessary and using again \eqref{eqn:lightconebound}, we can assume  that
\begin{equation}\label{eqn:monotonicity}
    \frac{\partial\tilde{x}_1}{\partial u}<0<\frac{\partial\tilde{x}_0}{\partial u},\frac{\partial\tilde{x}_0}{\partial v},\frac{\partial\tilde{x}_1}{\partial v}
\end{equation} 
on $\tilde{\tau}|_{[s_0,0)} \subset \tilde{U}$.
Since $\frac{\partial\tilde{x}_0}{\partial u},\frac{\partial\tilde{x}_0}{\partial v}, \frac{\partial\tilde{x}_1}{\partial u}, \frac{\partial\tilde{x}_1}{\partial v}\neq0$ in $\tilde{U}\cap\iota(M_t)$, it follows that \eqref{eqn:monotonicity} holds on the connected component of $\iota(M_t)\cap\tilde{U}$ containing $\tilde{\tau}\big([s_0,0)\big)$.

\textbf{Claim:} There exist $v_*<0<\epsilon$ such that\footnote{We will write $A\ssubset B$ to denote that $A$ is compactly contained in $B$.} $\iota([u_*-\epsilon,u_*+\epsilon]\times[v_*,0))\ssubset\tilde{U}$.


We begin with $u\leq u_*$. By reducing $|s_0|$ if necessary, it follows from \eqref{eqn:gradientboundcausal} that 
\begin{equation} \label{EqCpct}
J^+(\tilde{\tau}(s_0),\tilde{U})\cap J^-(\tilde{p}_{u_*},\tilde{U})\ssubset\tilde{U}\;.
\end{equation}
A priori we may have $u(\tau(s_0)) = u_*$, i.e., the segment $\tau|_{[s_0, 0)}$ may be a curve of constant $u = u_*$. If this is the case, we move to a point on $\tau|_{[s_0, 0)}$ even closer to its future end point and
then modify $\tau$ to the past of this point by adding a short timelike segment to the past. In this way we can, without loss of generality, assume $u(\tau(s_0)) < u_*$ in addition to \eqref{EqCpct}. This will be needed further below.

For any $s' \in (s_0,0)$ the restriction $\tau|_{[s_0,s']}$ is a $C^1$-causal curve in $M_t$ such that, by Lemma \ref{lemma:causalhomotopyexistence}, any point in $J^+\big(\tau(s_0), M_t\big) \cap J^-\big(\tau(s'), M_t \big)= [u(\tau(s_0)), u(\tau(s'))] \times [v(\tau(s_0)),v(\tau(s'))]$ lies on a causal curve in $M_t$ which is causally homotopic to $\tau|_{[s_0,s']}$ with fixed endpoints. It then follows from\footnote{In the cited Lemma, the curve $\gamma$, whose place here is taken by $\tau|_{[s_0,s']}$, is assumed to be timelike. However, this is nowhere needed in the proof, which also holds if $\gamma$ is causal.} \cite[Lemma 2.12]{Sbie22a} that 
\begin{equation}\label{eqn:homotopic}
    \iota \Big([u(\tau(s_0)), u(\tau(s'))] \times [v(\tau(s_0)),v(\tau(s'))]\Big) \subset J^+(\tilde{\tau}(s_0),\tilde{U})\cap J^-(\tilde{p}_{u_*},\tilde{U})\ssubset\tilde{U}\end{equation}
for all $s' \in (s_0,0)$ (see Figure \ref{fig:homotopy}). Letting $s' \to 0$ gives $\iota \Big([u(\tau(s_0)), u_*)\times [v(\tau(s_0)),0)\Big) \ssubset \tilde{U}$. Let  $\epsilon :=u_*-u(\tau(s_0))>0$ and $v_0 := v(\tau(s_0))<0$. By compactness and the continuity of $\iota$, we obtain 
$\iota \Big([u_* - \epsilon, u_*]\times [v_0,0)\Big) \ssubset \tilde{U}$.

\begin{figure}[h]
    \centering
    \begin{minipage}{0.48\textwidth}
            \centering
    \includegraphics[scale=0.25]{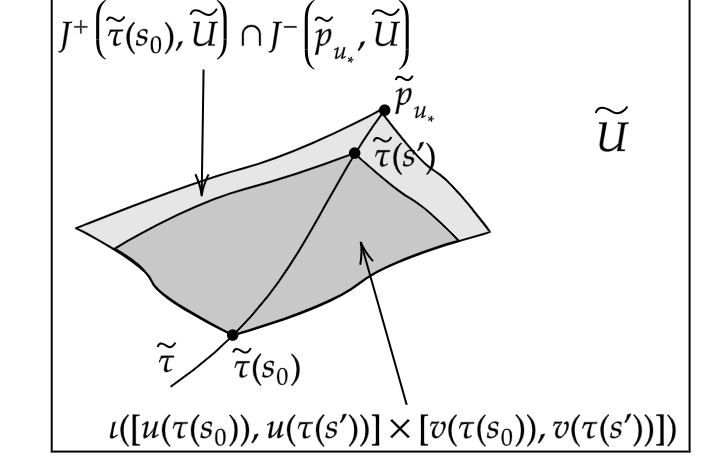}
    \caption{The homotopy argument used to show $\iota\left([u_*-\epsilon,u_*]\times[v_0,0)\right)\ssubset\tilde{U}$.}
    \label{fig:homotopy}
    \end{minipage}%
     \minipage{0.04\textwidth}
    \includegraphics[scale=0.05]{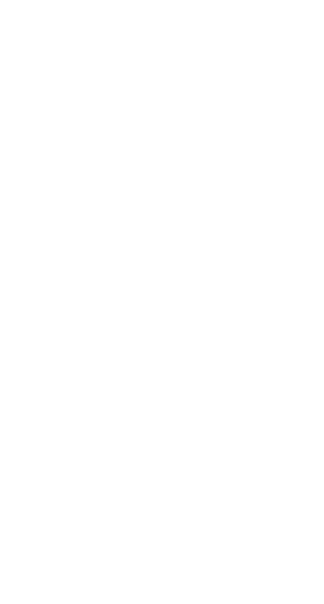}
 \endminipage
    \begin{minipage}{0.48\textwidth}
        \centering
\includegraphics[scale=0.28]{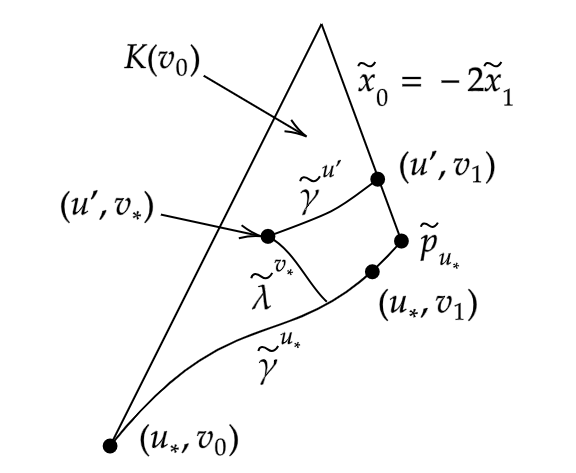}
    \caption{The contradiction argument used to show that 
    $\tilde{\gamma}^{u'}([v_*,0)) \subset K(v_0)$
    for $u'\in(u_*,u_*+\epsilon]$.}
    \label{fig:Kv0}
    \end{minipage}
\end{figure}

Next we consider $u>u_*$. Reducing $|v_0|$ if necessary, define $K(v_0)$ to be the compact subset of $\tilde{U}$ enclosed by $\tilde{\gamma}^{u_*}\vert_{[v_0,0)}$; the line $\tilde{x}_0=-2\tilde{x}_1$; and the line through $(u_*,v_0)$ with gradient $2$ (see Figure \ref{fig:Kv0}). 
Recall here our convention regarding near Minkowski neighbourhoods that \eqref{eqn:lightconebound} holds with $\zeta = 1$. In 1+1-dimensions this implies that right going null curves in near Minkowski coordinates must have slope between $\frac{1}{2}$ and $2$ and left going ones between $-2$ and $-\frac{1}{2}$. This motivates the choice of the straight lines in Figure \ref{fig:Kv0}.
Let $v_*\in(v_0,0)$. Reducing $\epsilon>0$ if necessary, it follows from \eqref{eqn:monotonicity} that $\tilde{\lambda}^{v_*}\vert_{(u_*,u_*+\epsilon]}\subset \text{int}(K(v_0))$. Let $u'\in(u_*,u_*+\epsilon]$. Suppose there exists $v_1\in(v_*,0)$ such that $\tilde{\gamma}^{u'}\vert_{[v_*,v_1)}\subset \text{int}(K(v_0))$ and $\iota\big((u',v_1)\big)\in\partial K(v_0)$. From \eqref{eqn:gradientbound1+1} and \eqref{eqn:monotonicity}, the point $\iota\big((u',v_1)\big)$ cannot lie on the boundary line of $K(v_0)$ with gradient $2$. The injectivity of $\iota$ means it also cannot lie on $\tilde{\gamma}^{u_*}\vert_{[v_0,0)}$. Hence this point must lie on the line $\tilde{x}_0=-2\tilde{x}_1$. Consider the curve $\tilde{\lambda}^{v_1}\vert_{[u_*,u']}$. By \eqref{eqn:monotonicity}, this curve must enter the interior of $K(v_0)$ and, since $\iota\big((u',v_1)\big)\in\partial K(v_0)$, it must also exit. From \eqref{eqn:gradientbound1+1} and \eqref{eqn:monotonicity}, in order to do so, $\tilde{\lambda}^{v_1}\vert_{[u_*,u')}$ must intersect $\tilde{\gamma}^{u'}$ or $\tilde{\lambda}^{v_*}$, which contradicts the injectivity of $\iota$. We conclude that $\tilde{\gamma}^{u'}\vert_{[v_*,0)}\subset K(v_0)\ssubset\tilde{U}$ for all $u' \in(u_*, u_* + \epsilon]$. 

Since $v_0 < v_* <0$, this establishes the claim. Since \eqref{eqn:monotonicity} holds in the connected component of $\iota(M_t) \cap \tilde{U}$ containing $\tilde{\tau}\big([s_0,0)\big)$, this also proves the claim about \eqref{eqn:monotonicityconventions} in (a)(i). 

We now extend $\iota$ to $[u_* - \epsilon, u_* + \epsilon] \times [v_*,0]$ by setting $\iota(u,0):=\lim_{v\rightarrow0}\iota(u,v)\in\tilde{U}$. This limit exists by compactness and the monotonicity properties \eqref{eqn:monotonicityconventions}.


We now show this extension is continuous. Let $u'\in[u_*-\epsilon,u_*+\epsilon]$ and let $\tilde{N}$ be any neighbourhood of $\tilde{p}_{u'}:=\iota(u',0)$. Repeating the arguments above -- with a $C^1$ future-directed causal curve $\tau' : [-1,0) \to M$ with $\lim_{s \to 0} \tau'_u(s) = u'$ and $\lim_{s \to 0} \tau'_v(s) = 0$ instead of $\tau$ -- if $\tilde{U}'\subset \tilde{N}$ is a near Minkowski neighbourhood centred on $\iota(u',0)$, we can choose $v_*'<0<\epsilon'$ such that $\iota([u'-\epsilon',u'+\epsilon']\times[v_*',0])\subset\tilde{U}'\subset{N}$.  We conclude that the extension of $\iota$ defined above is continuous. 

(a)(ii) Let $u\in[u_*-\epsilon,u_*+\epsilon]$. Lemma \ref{lemma:reparamdifferentiability} implies that reparameterising $\tilde{\gamma}^u\vert_{[v_*,0)}$ by $\tilde{x}_0$ gives a $C^1$ null curve, which we denote $\overset{(1)}{\tilde{\gamma}^u}(\tilde{x}_0)$. 
It follows from \eqref{eqn:monotonicityconventions} that $\overset{(1)}{\tilde{\gamma}_1^u}(\tilde{x}_0)$ is strictly increasing and hence its derivative satisfies \eqref{eqn:2dnullderivativeformula} with the positive root.
Let ${\alpha_u}:=\tilde{x}_0(u,0)$. By part (a)(i), $\overset{(1)}{\tilde{\gamma}_1^u}$  extends continuously to $[\tilde{x}_0\big(\tilde{\gamma}^u(v_*)\big), \alpha_u]$ and we denote this extended curve by the same symbol.
 
 Let $\epsilon'=\alpha_u-\tilde{x}_0\big(\tilde{\gamma}^u(v_*)\big)$ and define the function 
\begin{equation}
    \begin{split}
        f_{u}:[{\alpha_u}-\epsilon',{\alpha_u}]&\rightarrow\mathbbm{R}_{>0}\\
        \tilde{x}_0&\mapsto \frac{1}{\tilde{g}_{11}(\tilde{x}_0)}\left(-\tilde{g}_{01}(\tilde{x}_0)+\sqrt{\tilde{g}_{01}(\tilde{x}_0)^2-\tilde{g}_{00}(\tilde{x}_0)\tilde{g}_{11}(\tilde{x}_0)}\right),
    \end{split}
\end{equation}
where we write $\tilde{g}_{\mu \nu}(\tilde{x}_0)$ to denote $ \tilde{g}_{\mu \nu}\big(\overset{(1)}{\tilde{\gamma}^u}(\tilde{x}_0)\big)$. The continuity of $\overset{(1)}{\tilde{\gamma}^u}$ on $[\tilde{x}_0\big(\tilde{\gamma}^u(v_*)\big), \alpha_u]$ combined with the continuity of $\tilde{g}_{\mu\nu}$ ensures that $f_u$ is continuous.
It follows that
\begin{equation}
    \begin{split}
        \lim_{\tilde{x}_0\nearrow{\alpha_u}}\frac{\overset{(1)}{\tilde{\gamma}^u}_1(\tilde{x}_0)-\overset{(1)}{\tilde{\gamma}^u}_1({\alpha_u})}{\tilde{x}_0-{\alpha_u}}&=\lim_{\tilde{x}_{0}\nearrow{\alpha_u}}\frac{\int_{\alpha_u}^{\tilde{x}_{0}}\frac{d\overset{(1)}{\tilde{\gamma}^u}_1}{ds}ds}{\tilde{x}_0-{\alpha_u}}\\
        &=\lim_{\tilde{x}_0\nearrow{\alpha_u}}\frac{\int_{\alpha_u}^{\tilde{x}_0}f_{u}(s)ds}{\tilde{x}_0-{\alpha_u}}\\
        &=f_{u}({\alpha_u})\\
        &>0
    \end{split}
\end{equation}
where we have used the continuity of $f_{u}$ in the penultimate line. We conclude that $\overset{(1)}{\tilde{\gamma}^u}$ extends to $\partial\iota(M_t)$  as a $C^1$ null curve and $\frac{d\overset{(1)}{\tilde{\gamma}^u_1}}{d\tilde{x}_0}\vert_{\tilde{x}_0=\alpha_u}>0$. 

(a)(iii) Let $\tilde{A}$ denote the interior of the region bound by $\tilde{\gamma}^{u_*-\epsilon}\vert_{[v_*,0)}$, $\tilde{\gamma}^{u_*+\epsilon}\vert_{[v_*,0)}$, $\tilde{\lambda}^{v_*}\vert_{[u_*-\epsilon,u_*+\epsilon]}$ and $\tilde{\lambda}^{0}\vert_{[u_*-\epsilon,u_*+\epsilon]}$. Note that for $v\in[v_*,0)$, \eqref{eqn:monotonicityconventions} implies that $\tilde{x}_0(u,v)$ is strictly increasing in $u$ and $\tilde{x}_1(u,v)$ is strictly decreasing in $u$. Thus
\begin{equation}\label{eqn:monotonicityatboundary}\begin{split}
        &\bullet\tilde{x}_0(u,0) \text{ is non-decreasing in }u \text{ and }\\ &\bullet\tilde{x}_1(u,0)\text{ is non-increasing in }u.
\end{split}
\end{equation}

We first show that $\tilde{A}\subset \iota\big((u_* - \epsilon, u_* + \epsilon) \times (v_*,0)\big)$. This argument is depicted in Figure \ref{fig:interior}. Let $\tilde{q}\in\tilde{A}$
and consider the past-directed timelike curve in $\tilde{U}$ of constant $\tilde{x}_1$, starting at $\tilde{q}$. By the monotonicity properties of $\tilde{x}_0(u,0)$ and $\tilde{x}_1(u,0)$, as well as \eqref{eqn:gradientbound1+1} and \eqref{eqn:monotonicityconventions}, this curve must intersect $\tilde{\lambda}^{v_*}|_{[u_* - \epsilon, u_* + \epsilon]}$ or $\tilde{\gamma}^{u_* - \epsilon}|_{[v_*,0)}$. We denote by $\tilde{\sigma} : [0,1] \to \tilde{U}$ the future-directed timelike curve of constant $\tilde{x}_1$ starting at this intersection point and ending at $\tilde{q}$. Consider the largest $s_0 \in [0,1]$ such that $\tilde{\sigma}|_{[0,s_0)}$ lies in $\iota(M_t)$. By the openness of $\iota(M_t)$ we have $s_0 >0$. Moreover, \eqref{eqn:monotonicityconventions} implies that the timelike curve $\iota^{-1} \circ \tilde{\sigma}|_{[0,s_0)}$ in $M_t$ must enter $(u_* - \epsilon, u_* + \epsilon) \times (v_*,0)$ but cannot leave  this region again since otherwise $\tilde{\sigma}$ would intersect either $\tilde{\gamma}^{u_* + \epsilon}|_{[v_*,0)}$ or $\tilde{\lambda}^0|_{[u_* - \epsilon, u_* + \epsilon]}$. Since $\tilde{\sigma}(s_0)$ does not lie on $\tilde{\gamma}^{u_* + \epsilon}|_{[v_*,0)}$ or $\tilde{\lambda}^0|_{[u_* - \epsilon, u_* + \epsilon]}$, it follows that $\tilde{\sigma}(s_0)\in\iota\big((u_* - \epsilon, u_* + \epsilon) \times (v_*,0)\big)$ and hence $s_0=1$.
We conclude that $\tilde{q}\in\iota\big((u_* - \epsilon, u_* + \epsilon) \times (v_*,0)\big)$.

We now show that $\iota\big((u_* - \epsilon, u_* + \epsilon) \times (v_*,0)\big)\subset \tilde{A}$. Let $(u,v)\in(u_*-\epsilon,u_*+\epsilon)\times(v_*,0)$ and consider a continuous curve $\mu:[0,1]\rightarrow(u_*-\epsilon,u_*+\epsilon)\times(v_*,0)$ from $(u,v)$ to $\iota^{-1}(\tilde{q})$, where $\tilde{q}\in \tilde{A} \subset\iota\big((u_* - \epsilon, u_* + \epsilon) \times (v_*,0)\big)$ is arbitrary. The continuity of $\iota$ implies that $\iota\circ\mu$ is a continuous curve in $\iota\big((u_* - \epsilon, u_*+\epsilon) \times (v_*,0)\big)$ from $\iota(u,v)$ to $\tilde{q}\in\tilde{A}$. Hence $\iota\circ\mu$ cannot intersect $\tilde{\lambda}^0\vert_{[u_*-\epsilon,u_*+\epsilon]}$. Since $\iota$ is injective on $(u_* - \epsilon, u_*+\epsilon) \times (v_*,0)$, it follows that $\iota\circ\mu$ also cannot intersect $\tilde{\gamma}^{u_*-\epsilon}\vert_{[v_*,0)}$, $\tilde{\gamma}^{u_*+\epsilon}\vert_{[v_*,0)}$ or $\tilde{\lambda}^{v_*}\vert_{[u_*-\epsilon,u_*+\epsilon]}$. We conclude that $\iota(u,v)\in\tilde{A}$.

(b)
By part (a)(i) we have 
\begin{equation} \label{EqConvU}
    \lim_{v \to 0} \tilde{\lambda}^v\vert_{[u_* - \epsilon, u_* + \epsilon]}(u) = \tilde{\lambda}^0\vert_{[u_* - \epsilon, u_* + \epsilon]}(u) \qquad \textnormal{ for all } u \in [u_* - \epsilon, u_* + \epsilon] \;.     
\end{equation}
 In the $\tilde{x}_0$, $\tilde{x}_1$ coordinates, \eqref{eqn:monotonicityconventions} implies that this convergence is monotone, and hence by Dini's theorem it is also uniform.  In the following, to ease notation we may write $(u,v)$ or $(u,0)$ instead of $\iota\left((u,v) \right)$ or $\iota\left((u,0) \right)$ when labelling points in $\tilde{U}$ .

We will apply the theory of accumulation curves (contained for example in \cite{ElementsofCausality}, \cite{ChruscielGrant}, \cite{AspectsofC0causaltheory}) to the curves $\tilde{\lambda}^v\vert_{[u_*-\epsilon,u_*+\epsilon]}$. To do this, we first need to reparameterise these curves (over a common interval) so that they are equicontinuous.
Let $v\in[v_*,0)$. By Corollary \ref{cor:reparam} (b), $\tilde{\lambda}^v\vert_{[u_*-\epsilon,u_*+\epsilon]}$ can be reparameterised by $\tilde{x}_0$. We denote this reparameterised curve by $\overset{(1)}{\tilde{\lambda}^v}$.
Let $\beta_-:=\tilde{x}_0(u_*-\epsilon,v_*)<0\leq\beta_+:=\tilde{x}_0(u_* + \epsilon, 0)$ and, for fixed $v\in[v_*,0)$, define 
\begin{equation}
    \begin{split}
        s(t;v):[\beta_-,\beta_+]&\rightarrow [\tilde{x}_0(u_*-\epsilon,v),\tilde{x}_0(u_*+\epsilon,v)]\\
        t&\mapsto \frac{\beta_+-t}{\beta_+-\beta_-}\tilde{x}_0(u_*-\epsilon,v)+\frac{t-\beta_-}{\beta_+-\beta_-}\tilde{x}_0(u_*+\epsilon,v).
    \end{split}    
\end{equation}
We can then reparameterise $\tilde{\lambda}^v\vert_{[u_*-\epsilon,u_*+\epsilon]}$ as follows 
\begin{equation}
    \begin{split}
        \overset{(2)}{\tilde{\lambda}^v}:[\beta_-,\beta_+]&\rightarrow\tilde{U}\\
        t&\mapsto\overset{(1)}{\tilde{\lambda}^v}(s(t;v)).
    \end{split}
\end{equation}
This gives the parameterisation over a common interval. Before we proceed, we claim that
\begin{equation} \label{EqAddEstimatePara}
  \tilde{x}_0(u_* + \epsilon,0)  - \tilde{x}_0(u_* - \epsilon,0) >0\;.
\end{equation}
Note that \eqref{eqn:monotonicityatboundary} implies $\tilde{x}_0(u_* + \epsilon,0) \geq \tilde{x}_0(u_* - \epsilon,0)$. Suppose for a contradiction that equality holds. Recall that by assumption and \eqref{eqn:monotonicityatboundary} we have $\tilde{\lambda}^0(u_* - \epsilon) \neq \tilde{\lambda}^0(u_* + \epsilon)$, so by \eqref{eqn:monotonicityatboundary} it follows that $\tilde{x}_1(u_*+\epsilon,0)<\tilde{x}_1(u_*-\epsilon,0)$. 
By \eqref{eqn:monotonicityconventions} we can reparameterise $\tilde{\lambda}^{v}|_{[u_* - \epsilon, u_* + \epsilon]}$ by $\tilde{x}_1$:
\begin{equation*}
    [\tilde{\lambda}^{v}_1 (u_* + \epsilon), \tilde{\lambda}^{v}_1 (u_* - \epsilon)] \ni \mathfrak{s} \mapsto \overset{(3)}{\tilde{\lambda}^{v}}(\mathfrak{s}) \;.
\end{equation*}
By the mean value theorem, there exists $\mathfrak{s}'_v\in (\tilde{\lambda}^{v}_1 (u_* + \epsilon), \tilde{\lambda}^{v}_1 (u_* - \epsilon))$ such that
\begin{equation*}
    \frac{\overset{(3)}{\tilde{\lambda}^{v}}_0\big(\tilde{\lambda}^{v}_1 (u_* - \epsilon)\big) - \overset{(3)}{\tilde{\lambda}^{v}}_0\big(\tilde{\lambda}^{v}_1 (u_* + \epsilon)\big)}{\tilde{\lambda}^{v}_1 (u_* - \epsilon) - \tilde{\lambda}^{v}_1 (u_* + \epsilon)} = \frac{d}{d \mathfrak{s}}\overset{(3)}{\tilde{\lambda}^{v}}_0(\mathfrak{s}'_v)
\end{equation*}
The left-hand side goes to zero as $v\to0$ and hence so does the right-hand side. However the parameterisation implies $\frac{d}{d \mathfrak{s}}\overset{(3)}{\tilde{\lambda}^{v}}_1(\mathfrak{s}) = 1$. This contradicts \eqref{eqn:derivbound2} (after reversing the parameterisation of $\overset{(3)}{\tilde{\lambda}^{v}}$  so that the curve is future directed) since the curves $\tilde{\lambda}^{v}$ are null. This proves \eqref{EqAddEstimatePara}.

Returning to the parameterisation $\overset{(2)}{\tilde{\lambda}^v}$, for each $v\in[v_*,0)$, we have
\begin{equation}
    \begin{split}
        |\overset{(2)}{\tilde{\lambda}^v}_0(t')-\overset{(2)}{\tilde{\lambda}^v}_0(t)| =|s(t';v)-s(t;v)|
         = \frac{|t'-t|}{\beta_+-\beta_-}\left(\tilde{x}_0(u_*+\epsilon,v)-\tilde{x}_0(u_*-\epsilon,v)\right)
    \end{split}
\end{equation}
It now follows from \eqref{eqn:monotonicityconventions} together with \eqref{EqAddEstimatePara} that for $v<0$ close enough to $0$ we have
\begin{equation} \label{EqLipschitzAnti}
     \frac{1}{2}\underbrace{\frac{(\tilde{x}_0(u_*+\epsilon,0)-\tilde{x}_0(u_*-\epsilon,0))}{\beta_+ - \beta_-}}_{>0} |t' - t| \leq |\overset{(2)}{\tilde{\lambda}^v}_0(t')-\overset{(2)}{\tilde{\lambda}^v}_0(t)| \leq |t' - t| \;.
\end{equation}
 Combining this with \eqref{eqn:gradientbound1+1}, we have
\begin{equation}
    \begin{split}
        |\overset{(2)}{\lambda^v}_1(t')-\overset{(2)}{\lambda^v}_1(t)|&=|\overset{(1)}{\tilde{\lambda}^v}_1(s(t';v))-\overset{(1)}{\tilde{\lambda}^v}_1(s(t;v))|\\
        &<2\left|s(t';v)-s(t;v)\right|\\
        &\leq2|t'-t|.
    \end{split}
\end{equation}
We conclude that, for $v<0$ close enough to $0$, $\overset{(2)}{\tilde{\lambda}^v}:[\beta_-,\beta_+]\mapsto\tilde{U}$ is a family of uniformly Lipschitz future-directed causal curves. 
Moreover, by \eqref{EqConvU} we have $\overset{(2)}{\tilde{\lambda}^v}(\beta_-)=\tilde{\lambda}^v(u_*-\epsilon) \to \tilde{\lambda}^0(u_* - \epsilon)$ and $\overset{(2)}{\tilde{\lambda}^v}(\beta_+)=\tilde{\lambda}^v(u_*+\epsilon) \to \tilde{\lambda}^0(u_* + \epsilon)$ as $v \to 0$. 
It now follows from Arzel\`a-Ascoli 
that a subsequence, $\overset{(2)}{\tilde{\lambda}^{v_n}}$, accumulates at some Lipschitz continuous curve $\overset{(2)}{\tilde{\lambda}^0} : [\beta_-, \beta_+] \to \tilde{U}$ from $\tilde{\lambda}^0(u_* - \epsilon)$ to $\tilde{\lambda}^0(u_* + \epsilon)$.
At points where $\overset{(2)}{\tilde{\lambda}^0}$ is differentiable, it follows from the left-hand side of \eqref{EqLipschitzAnti} that the derivative cannot vanish.  Hence, \cite[Lemma 2.24]{AspectsofC0causaltheory} implies that this accumulation curve is  causal.\footnote{Note that causal curves in \cite{AspectsofC0causaltheory} are defined to be locally Lipschitz.} 

We now show that the image of $\overset{(2)}{\tilde{\lambda}^0}$ is equal to the image of $\tilde{\lambda}^0\vert_{[u_* - \epsilon, u_* + \epsilon]}$. Since $\mathrm{Im}\big(\tilde{\lambda}^{v}\vert_{[u_* - \epsilon, u_* + \epsilon]}\big) = \mathrm{Im} \big(\overset{(2)}{\tilde{\lambda}^{v}} \big)$ for all $v\in[v_*,0)$, we have, for $t \in [\beta_-, \beta_+]$, that $$\overset{(2)}{\tilde{\lambda}^0}(t) = \lim_{n \to \infty}\overset{(2)}{\tilde{\lambda}^{v_n}}(t) = \lim_{n \to \infty}\tilde{\lambda}^{v_n}\vert_{[u_* - \epsilon, u_* + \epsilon]}(u_n)$$ for some sequence $(u_n)_{n=1}^\infty$ in $[u_*-\epsilon,u_*+\epsilon]$. Now, since the convergence in \eqref{EqConvU} is uniform, it follows that the right hand side lies in the image of $\tilde{\lambda}^0\vert_{[u_* - \epsilon, u_* + \epsilon]}$. Thus $\overset{(2)}{\tilde{\lambda}^0}$ maps into the image of $\tilde{\lambda}^0\vert_{[u_* - \epsilon, u_* + \epsilon]}$. Vice versa, consider $\tilde{\lambda}^0\vert_{[u_* - \epsilon, u_* + \epsilon]}(u) = \lim_{n \to \infty}\tilde{\lambda}^{v_n}(u) = \lim_{n \to \infty} \overset{(2)}{\tilde{\lambda}^{v_n}}(t_n)$ for some sequence $(t_n)_{n=1}^\infty$ in $[\beta_-,\beta_+]$. Again, the uniform convergence of the $\overset{(2)}{\tilde{\lambda}^{v_n}}$ implies that the right hand side is in the image of $\overset{(2)}{\tilde{\lambda}^0}$ and thus $\mathrm{Im}\big(\tilde{\lambda}^0\vert_{[u_* - \epsilon, u_* + \epsilon]}\big) = \mathrm{Im} \big(\overset{(2)}{\tilde{\lambda}^0} \big)$.


For any $v\in[v_*,0)$, Corollary \ref{cor:reparam} (c)(i) implies $\overset{(2)}{\tilde{\lambda}^v}|_{[\beta_-,\beta_+]}$ is achronal with respect to smooth timelike curves in $\tilde{U}$. It follows from the lower semicontinuity of the timelike distance function (Lemma \ref{lemma:semicontinuity}) that $d_{\tilde{g}\vert_{\tilde{U}}}(\overset{(2)}{\tilde{\lambda}^0}(t),\overset{(2)}{\tilde{\lambda}^0}(t'))=0$ for all $t,t'\in[\beta_-,\beta_+]$. We conclude from Proposition \ref{prop:C1nullcurve} that $\overset{(2)}{\tilde{\lambda}^0}$ can be reparameterised as a $C^1$ null curve. Lemma \ref{lemma:reparamdifferentiability} then implies that parameterising $\overset{(2)}{\tilde{\lambda}^0}$ by $\tilde{x}_0$ (which is a smooth time function on $\tilde{U}$) gives a $C^1$ null curve, which we denote $\overset{(1)}{\tilde{\lambda}^0}(\tilde{x}_0)$. It follows from \eqref{eqn:monotonicityconventions} that $\frac{d\overset{(1)}{\tilde{\lambda}^0}_1}{d\tilde{x}_0} \leq 0$ and hence $\frac{d\overset{(1)}{\tilde{\lambda}^0}_1}{d\tilde{x}_0}$ satisfies \eqref{eqn:2dnullderivativeformula} with the negative root. We conclude that, for all $u\in[u_*-\epsilon,u_*+\epsilon]$, the extension of $\tilde{\gamma}^{u}$ intersects $\tilde{\lambda}^0$ transversely. 
\end{proof}

\begin{remark}\label{remark:futureboundary}
In the setting of Proposition \ref{prop:boundarystructure}, if in part (b) we have the case $\tilde{\lambda}^0(u_*-\epsilon) \neq \tilde{\lambda}^0(u_*)$, then
     $\tilde{\lambda}^0(u_*)\in\partial^+\iota(M_t)$ and (after possibly reducing $\epsilon$, $|v_*|$ and $\tilde{U}$), $(\tilde{U},\tilde{\varphi})$ can be chosen to be a future boundary chart with $\iota([u_*-\epsilon,u_*+\epsilon]\times[v_*,0])\subset\tilde{U}_{\leq}$.
\end{remark}

\begin{proof}
It follows from \eqref{eqn:monotonicityatboundary} that the curve $\tilde{\lambda}^0|_{[u_* - \epsilon, u_*]}$ must lie in $\{\tilde{x}_0 \leq 0\} \cap \{\tilde{x}_1 \geq 0\}$. Furthermore, by assumption we have $\tilde{\lambda}^0(u_* - \epsilon) \in \{\tilde{x}_0 \leq 0\} \cap \{\tilde{x}_1 \geq 0\} \setminus (0,0)$. Combined with the monotonicity properties \eqref{eqn:monotonicityconventions} and the gradient bound \eqref{eqn:gradientbound1+1}, this implies that a segment of the line $\tilde{x}_0=2\tilde{x}_1$ directly below $\tilde{p}_{u_*}$ lies in the region bound by the curves $\tilde{\gamma}^{u_*}|_{[v_*,0)}$, $\tilde{\lambda}^{v_*}|_{[u_* - \epsilon, u_*]}$, $\tilde{\gamma}^{u_* - \epsilon}|_{[v_*,0)}$, and $\tilde{\lambda}^0|_{[u_* - \epsilon, u_*]}$ (see Figure \ref{fig:futureboundary}). By Proposition \ref{prop:boundarystructure} (a)(iii), this region is equal to $\iota\big((u_* - \epsilon, u_*) \times (v_*,0)\big)$. Since the curve $\tilde{x}_0=2\tilde{x}_1$ is timelike (by \eqref{eqn:gradientbound1+1}) we conclude that $\tilde{p}_{u_*}\in\partial^+\iota(M_t)$.

\begin{figure}[h]
\begin{minipage}{0.48\textwidth}
    \centering
    \includegraphics[scale=0.25]{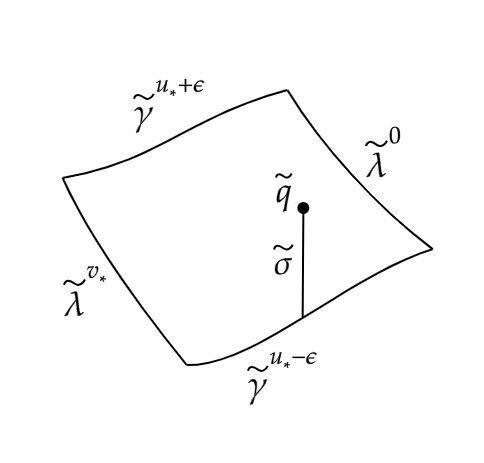}
    \caption{Illustrating the proof of Proposition \ref{prop:boundarystructure} (a)(iii). The figure here shows the case where  $\tilde{\lambda}^0(u_*-\epsilon)$ and $ \tilde{\lambda}^0(u_*+\epsilon)$ are distinct, although in general this may not be the case (see Theorem \ref{thm:C0structure}).\\
    \medskip\medskip}
    \label{fig:interior}
     \end{minipage}
    \minipage{0.04\textwidth}
  \includegraphics[scale=0.05]{blank.png}
 \endminipage
    \begin{minipage}{0.48\textwidth}
    \centering
    \includegraphics[scale=0.23]{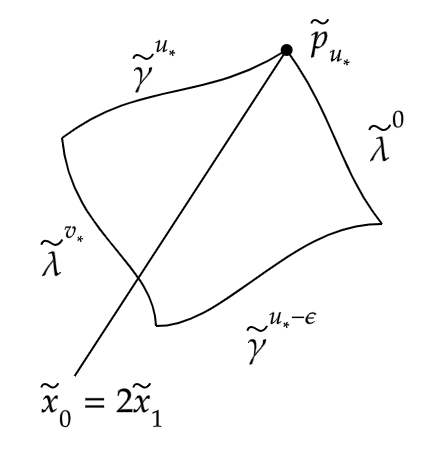}
    \caption{The interior of the region bound by the curves $\tilde{\gamma}^{u_*-\epsilon}|_{[v_*,0)}$, $\tilde{\gamma}^{u_*}|_{[v_*,0)}$, $\tilde{\lambda}^{v_*}|_{[u_* - \epsilon, u_*]}$,  and $\tilde{\lambda}^0|_{[u_* - \epsilon, u_*]}$  is equal to $\iota\big((u_* - \epsilon, u_*) \times (v_*,0)\big)$ and contains a segment of the line $\tilde{x}_0=2\tilde{x}_1$ directly below $\tilde{p}_{u_*}$. \\}
    \label{fig:futureboundary}
       \end{minipage}
\end{figure}
By Proposition \ref{prop:futureboundarychart}, we can therefore choose $\tilde{U}$, $\epsilon$ and $|v_*|$ smaller if necessary so that $(\tilde{U},\tilde{\varphi})$ is a future boundary chart centred at $\tilde{p}_{u_*}$ and the results of part (a)(i) still hold. Since $\text{graph}(f)$ is achronal (see Proposition \ref{prop:futureboundarychart}), it follows that points on the line $\tilde{x}_0=2\tilde{x}_1$ just below $\tilde{p}_{u_*}$ lie in $\tilde{U}_<$. The connectedness of $\iota([u_*-\epsilon,u_*+\epsilon]\times[v_*,0))$ then implies that $\iota([u_*-\epsilon,u_*+\epsilon]\times[v_*,0))\subset\tilde{U}_{<}$. 
We conclude that $\iota([u_*-\epsilon,u_*+\epsilon]\times[v_*,0])\subset\tilde{U}_{\leq}$.

\end{proof}

\section{The $C^0$-Structure of Extensions in 1+1-Dimensions}\label{The $C^0$ Structure of the Extension in 1+1-Dimensions}
\subsection{Dichotomy of the $C^0$-structure of local extensions: the corner and reference extensions }\label{Rigidity of the $C^0$-structure}

Having derived some basic structural results for the boundary in Proposition \ref{prop:boundarystructure}, we now investigate the $C^0$-structure of $C^0$-extensions. 
A natural first question to ask is whether this structure is fixed by the requirement that the metric extends continuously. In Example \ref{example:cornerextension} we show that this is not the case for $(M_t,g_t)$ by defining a $C^0$-extension across $\{v=0\}$ which is $C^0$-inequivalent to the reference extension $\overline{M_t}$. 

\begin{example}[The corner extension]\label{example:cornerextension}
Under the additional assumption that $\Omega$ extends as a continuous strictly positive function to $[-1,1] \times (-1,0]$,\footnote{I.e. $\Omega$ also extends continuously to $\{u = \pm 1\}$ as a strictly positive function.} we will construct a $C^0$-extension of $(M_t,g_t)$, denoted $\iota_{corner,\beta}$,  such that $\lim_{v\rightarrow0}(\iota_{corner,\beta}\circ\gamma^u)(v)$ exists for all $u\in(-1,1)$ and is independent of $u$. We proceed in two steps.

\textbf{Step 1:} Let $g'_0=-du\otimes dv-dv\otimes du$ be a Lorentzian metric on $M_t$. We begin by constructing a $C^0$-extension $\iota_{1,\beta} : M_t \hookrightarrow \tilde{M}_t$ of $(M_t,g'_0)$ such that $\lim_{v\rightarrow0}(\iota_{1,\beta}\circ\gamma^u)(v)$ exists for all $u\in(-1,1)$ and is independent of $u$ (see Figure \ref{fig:corner}).
\begin{figure}[h]
    \centering
    \includegraphics[scale=0.25]{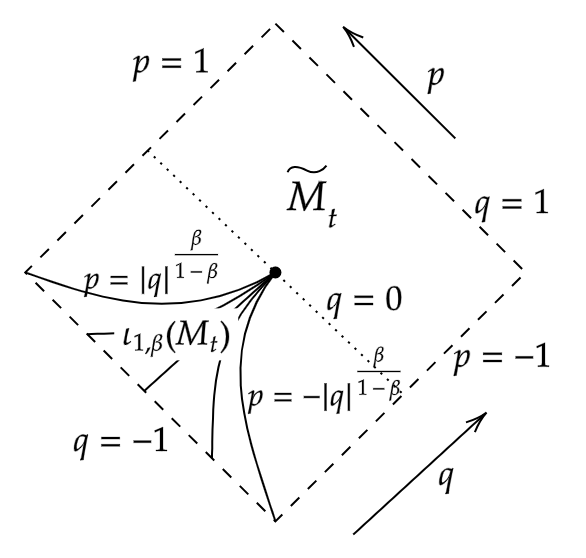}
    \caption{Figure showing $\tilde{M}_t$. The point $\lim_{v\rightarrow0}(\iota_{1,\beta}\circ\gamma^u)(v)$ exists for all $u\in(-1,1)$ and is independent of $u$.}
    \label{fig:corner}
\end{figure}

For $\beta\in\left(\frac{1}{2},1\right)$, let $(\tilde{M}_t,\tilde{g}_\beta)$ be the 1+1-dimensional Lorentzian manifold with global coordinates $(p,q)\in(-1,1)\times(-1,1)$, where $\tilde{g}_\beta$ is given in these coordinates by
    \begin{equation}
        \tilde{g}_\beta=\begin{cases}
            -\frac{1}{(1-\beta)}\left(dp\otimes dq + dq\otimes dp\right)+ \frac{2\beta p}{(1-\beta)^2q}dq\otimes dq & |p| < |q|^{\frac{\beta}{1-\beta}}, \text{ }-1<q<0;\\
            -\frac{1}{(1-\beta)}\left(dp\otimes dq + dq\otimes dp\right) +\frac{2\beta }{(1-\beta)^2}|q|^{\frac{2\beta-1}{1-\beta}}dq\otimes dq & \text{ }p\leq -|q|^{\frac{\beta}{1-\beta}}, -1<q<0;\\
            -\frac{1}{(1-\beta)}\left(dp\otimes dq + dq\otimes dp\right) - \frac{2\beta }{(1-\beta)^2}|q|^{\frac{2\beta-1}{1-\beta}}dq\otimes dq & \text{ }p\geq|q|^{\frac{\beta}{1-\beta}}, -1<q<0;\\
            -\frac{1}{(1-\beta)}\left(dp\otimes dq + dq\otimes dp\right)  & \text{ }0\leq q<1.            
        \end{cases}
    \end{equation}
    The restriction on $\beta$ ensures that $\tilde{g}_\beta$ is continuous. 

    Define the embedding $\iota_{1,\beta}:M_t\hookrightarrow\tilde{M}_t$ in terms of coordinates $(u,v)\in(-1,1)\times(-1,0)$ by
    \begin{equation}
        \begin{split}
            p(u,v)&=u|v|^\beta\\
            q(u,v)&=-|v|^{1-\beta}.
        \end{split}
    \end{equation}
    Inverting this, we have
    \begin{equation}
        \begin{split}
            u(p,q)&=p|q|^{-\frac{\beta}{1-\beta}}\\
            v(p,q)&=-|q|^{\frac{1}{1-\beta}}
        \end{split}
    \end{equation}
from which we see that $\iota_{1,\beta}$ isometrically embeds $(M_t,g'_0)$ into $(\tilde{M}_t,\tilde{g}_\beta)$, with 
\begin{equation}
    \begin{split}
        \iota_{1,\beta}(M)= \{(p,q)\in(-1,1)\times(-1,0):|p| < |q|^{\frac{\beta}{1-\beta}}\}. 
    \end{split}
\end{equation}
Note that $\lim_{v\rightarrow0}(\iota_{1,\beta}\circ\gamma^u)(v)=(0,0)\in\partial\iota_{1,\beta}(M_t)$ for all $u\in(-1,1)$. 

\textbf{Step 2:} We now generalise this construction to obtain a $C^0$-extension of $(M_t,g_t)$, denoted $\iota_{corner,\beta}$, such that $\lim_{v\rightarrow0}(\iota_{corner,\beta}\circ\gamma^u)(v)$ exists for all $u\in(-1,1)$ and is independent of $u$.

Define
\begin{equation}
    \begin{split}
        \iota_2:M_t&\hookrightarrow M_t\\
        (u,v)&\mapsto \left(\frac{2}{C}\int_{-1}^{u}\Omega(s,0)^2ds-1,v\right),
    \end{split}
\end{equation}
where $C=\int_{-1}^1\Omega(s,0)^2ds$.\footnote{Note that this map is not smooth in general.} This map defines an isometric embedding of $(M_t,g_t)$ into $(M_t,g_t')$, where $g_t'=-\Omega'(u,v)^2(du\otimes dv+dv\otimes du)$ and $\Omega'(u,v)=\sqrt{\frac{C}{2}}\frac{\Omega(\iota^{-1}_2(u,v))}{\Omega(\iota^{-1}_2(u,0))}$ (and we have used the strict positivity of $\Omega(u,v)^2$ to invert $\iota_2$). 

The conformal factor $\Omega'(u,v)^2$ extends continuously to $v=0$ as a strictly positive function with $\Omega'(u,0)^2=\frac{C}{2}$ independent of $u$. Hence we can define a continuous, strictly positive function $\tilde{\Omega}:\tilde{M}_t\rightarrow \R_{>0}$ in terms of coordinates $(p,q)$ by
\begin{equation}
    \begin{split}
       \tilde{\Omega}(p,q) = \begin{cases}
            \Omega'(p|q|^{-\frac{\beta}{1-\beta}},-|q|^{\frac{1}{1-\beta}}) & |p| < |q|^{\frac{\beta}{1-\beta}}, \text{ }-1<q<0;\\
            \Omega'(-1,-|q|^{\frac{1}{1-\beta}}) & \text{ }p\leq -|q|^{\frac{\beta}{1-\beta}}, -1<q<0;\\
           \Omega'(1,-|q|^{\frac{1}{1-\beta}}) & \text{ }p\geq|q|^{\frac{\beta}{1-\beta}}, -1<q<0;\\
          \Omega'(0,0) & \text{ }0\leq q<1.     
       \end{cases}
    \end{split}
\end{equation}
Note that the continuity of $\tilde{\Omega}$ relies on the extra assumption that $\Omega$ extends continuously to $\{u=\pm1\}$.

This allows us to use $\iota_{1,\beta}$ from Step 1 to construct a $C^0$-extension across $v=0$ of $(M_t,g_t')$ as follows. For $\beta\in(\frac{1}{2},1)$ and $(p,q)\in(-1,1)\times(-1,1)$, define $\tilde{g}'_\beta(p,q):= \tilde{\Omega}^2(p,q)\tilde{g}_\beta(p,q)$. Then $\iota_{1,\beta}$
defines a $C^0$-extension across $v=0$ of $(M_t,g_t')$ into $(\tilde{M}_t,\tilde{g}'_\beta)$ such that $\lim_{v\rightarrow0}(\iota_{1,\beta}\circ\gamma^u)(v)=(0,0)$ for all $u\in(-1,1)$.

We conclude that $\iota_{corner,\beta}:=\iota_{1,\beta}\circ\iota_2$ defines a $C^0$-extension across $v=0$ of $(M_t,g_t)$ into $(\tilde{M}_t,\tilde{g}'_\beta)$ such that $\lim_{v\rightarrow0}(\iota_{corner,\beta}\circ\gamma^u)(v)=(0,0)$ for all $u\in(-1,1)$. It follows that for any $u_*\in(-1,1)$ and any $\epsilon>0$ sufficiently small, the continuous extension of $\iota_{corner,\beta}|_{(u_* - \epsilon, u_* + \epsilon) \times (-1,0)}$ to $\iota_{corner,\beta}:(u_*-\epsilon,u_*+\epsilon)\times(-1,0]\rightarrow\tilde{M}_t$ is not a homeomorphism. We conclude that $\iota_{corner,\beta}$ is not locally $C^0$-equivalent to $\overline{M_t}$ at $(u,0)\in\overline{M_t}$, for any $u \in(-1,1)$. 

Observe that for $q\in(-1,0)$ and $\epsilon'\in(0,|q|^{\frac{\beta}{1-\beta}})$ we have
\begin{equation}
    \begin{split}
       \frac{(\tilde{g}_\beta)_{qq}(|q|^{\frac{\beta}{1-\beta}},q) -(\tilde{g}_\beta)_{qq} (|q|^{\frac{\beta}{1-\beta}}-\epsilon',q)}{\epsilon'}&=\frac{2\beta}{(1-\beta)^2q}
    \end{split}
\end{equation}
which is unbounded as $q\rightarrow0$. It follows that the metric $\tilde{g}_\beta$ (and similarly $\tilde{g}_\beta'$) is not locally Lipschitz. Furthermore, for each $u\in(-1,1)$ we have $\lim_{v\rightarrow0}(\iota_{corner,\beta}\circ\gamma^u)(v)=(0,0)\notin\partial^+\iota_{corner,\beta}(M_t)$. This cannot happen for locally Lipschitz extensions \cite[Proposition 3.5]{sbierski2022uniqueness}). Furthermore, in the language of \cite[Proposition 1.10 (vi)]{ChruscielGrant}, the past bubble set of the point $(0,0)\in\tilde{M}_t$ is given by $\iota_{1,\beta}(M_t)$, so in particular is open and non-empty. This cannot happen if the metric is locally Lipschitz \cite[Corollary 1.17]{ChruscielGrant}.


\end{example}

We have so far seen two different local $C^0$-structures for $C^0$-extensions across $v=0$ of $(M_t,g_t)$. These are given by $\overline{M_t}$ from Section \ref{toymodel} and $\iota_{corner,\beta}$ from Example \ref{example:cornerextension}. In Theorem \ref{thm:C0structure} we show that these are the only two possible local $C^0$-structures for such extensions. The proof relies on Proposition \ref{prop:nomixedcornerextension}, which implies a rigidity property on the $C^0$-structure of the corner extension: if a $C^0$-extension of $(M_t,g_t)$ across $v=0$ locally exhibits the $C^0$-structure of the corner extension, then it must do so globally. This relies heavily on our assumption in Section \ref{toymodel} that the conformal factor $\Omega$ extends continuously as a positive function to $\overline{M_t}$, i.e. the existence of our reference extension.

\begin{prop}\label{prop:nomixedcornerextension}

Consider the setting of Proposition \ref{prop:boundarystructure} $(a)$ with near Minkowski neighbourhood $(\tilde{U},\tilde{\varphi}=(\tilde{x}_0,\tilde{x}_1))$ centred on $\tilde{p}$ and $v_*<0 < \epsilon$. Suppose there exist $u_-,u_+\in[u_* - \epsilon,u_* + \epsilon]$ with $u_-<u_+$ such that $\lim_{v \to 0}\tilde{\gamma}^{u_-}(v) = \lim_{v \to 0}\tilde{\gamma}^{u_+}(v) = \tilde{p}$. 



Reducing $|v_*|$ if necessary, let $\tilde{B}\subset\tilde{U}$ be the closure of the region bound by the curves $\tilde{x}_0=2\tilde{x}_1$, $\tilde{x}_0=\frac{1}{2}\tilde{x}_1$ and the straight lines through $\iota(u_*,v_*)$ with gradient $-\frac{1}{2}$ and $-2$ (see Figure \ref{fig:Btilde}). Then $\iota$ extends continuously to $\iota :(-1,1) \times [v_*,0] \hookrightarrow \tilde{B}$ with $\iota(u,0) = \tilde{p}$ for all $u \in (-1,1)$. Moreover, \eqref{eqn:monotonicityconventions} holds on $\iota\big((-1,1) \times [v_*,0) \big) \subset \tilde{B}$.

\end{prop}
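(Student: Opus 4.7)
My plan is to first establish the conclusion on $[u_-,u_+]$ by a direct squeeze argument, then bootstrap to all of $(-1,1)$ using Propositions \ref{prop:boundarystructure}(a)--(b), and finally verify the containment $\iota\big((-1,1)\times[v_*,0)\big)\subset\tilde{B}$ along with the monotonicity. For the squeeze, I invoke the monotonicity \eqref{eqn:monotonicityconventions} on the connected component of $\iota(M_t)\cap\tilde{U}$ containing $\iota([u_*-\epsilon,u_*+\epsilon]\times[v_*,0))$; this gives, for every $u\in[u_-,u_+]$ and $v\in[v_*,0)$, the sandwich $\tilde{x}_0(\iota(u_-,v))\leq\tilde{x}_0(\iota(u,v))\leq\tilde{x}_0(\iota(u_+,v))$ (and the reverse for $\tilde{x}_1$). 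Since by hypothesis the bounding sides each tend to $0$ as $v\to 0$, the squeeze theorem forces $\iota(u,v)\to\tilde{p}$ for every $u\in[u_-,u_+]$.

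To extend to the rest of $(-1,1)$, define $U_+:=\sup\{u\in[u_+,1):\tilde{\gamma}^{u'}\text{ terminates at }\tilde{p}\text{ for every }u'\in[u_+,u]\}$, and analogously $U_-$. The first step gives $U_+\geq u_+$, and it suffices to show $U_+=1$ (the argument $U_-=-1$ being symmetric). Assume for contradiction $U_+<1$. A continuity/squeeze argument against sequences $u_n\nearrow U_+$ in $[u_+,U_+)$ forces $\tilde{\gamma}^{U_+}$ also to terminate at $\tilde{p}$, so applying Proposition \ref{prop:boundarystructure}(a) at $(U_+,0)$ with the $C^1$ future directed causal (in fact null) approaching curve $\tau'(s):=(U_+,s)$ yields a new near Minkowski chart $\tilde{U}'$ centred at $\tilde{p}$ together with $\epsilon''>0$, $v_*''<0$ and a continuous extension $\iota:[U_+-\epsilon'',U_++\epsilon'']\times[v_*'',0]\hookrightarrow\tilde{U}'$ satisfying the new-chart analogue of \eqref{eqn:monotonicityconventions}. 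Choosing $\epsilon''$ small enough that $[U_+-\epsilon'',U_+]\subset[u_+,U_+]$, the squeeze of the first step transplanted to the new chart gives $\tilde{\lambda}^0(u)=\tilde{p}$ on $[U_+-\epsilon'',U_+]$. Since $U_+$ is the supremum, there exists $u'\in(U_+,U_++\epsilon'']$ with $\tilde{\lambda}^0(u')\neq\tilde{p}$, so Proposition \ref{prop:boundarystructure}(b) applies in the new chart (with the role of $u_*$ played by $U_+$), yielding that the image of $\tilde{\lambda}^0|_{[U_+-\epsilon'',U_++\epsilon'']}$ can be parameterised by $\tilde{x}_0$ as a $C^1$ null curve $\overset{(1)}{\tilde{\lambda}^0}$ with $\frac{d\overset{(1)}{\tilde{\lambda}^0}_1}{d\tilde{x}_0}<0$, transversal to each extended $\tilde{\gamma}^u$.

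The main obstacle is to extract a contradiction from this configuration. The image of $\tilde{\lambda}^0|_{[U_+-\epsilon'',U_++\epsilon'']}$ consists of the single point $\tilde{p}$ (from the constant portion $u\in[U_+-\epsilon'',U_+]$) together with a non-degenerate $C^1$ null arc emerging from $\tilde{p}$ into the quadrant $\{\tilde{x}_0>0,\tilde{x}_1<0\}$ (forced by the new-chart monotonicity for $u\in(U_+,U_++\epsilon'']$). I anticipate that the contradiction will come from combining two pieces of information: all curves $\tilde{\gamma}^u$ with $u\in[U_+-\epsilon'',U_+]$ approach $\tilde{p}$ along the common right-null tangent direction at $\tilde{p}$ determined by the null ODE \eqref{eqn:2dnullderivativeformula}, while the transversality condition of Proposition \ref{prop:boundarystructure}(b) at $\tilde{p}$ constrains how $\overset{(1)}{\tilde{\lambda}^0}$ can cross this family; together these should prevent $\overset{(1)}{\tilde{\lambda}^0}$ from simultaneously degenerating to $\{\tilde{p}\}$ for $u\leq U_+$ and being a non-trivial $C^1$ null arc emerging from $\tilde{p}$ for $u>U_+$.

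For the containment and monotonicity, once $\iota(u,v)\to\tilde{p}$ is known for all $u\in(-1,1)$: each right-null $\tilde{\gamma}^u$ ending at $\tilde{p}$ is, by the gradient bound \eqref{eqn:gradientbound1+1}, confined to the wedge $\{2\tilde{x}_1\leq\tilde{x}_0\leq\tfrac{1}{2}\tilde{x}_1\}$ (in $\{\tilde{x}_1\leq 0\}$) bounded by the two lines through $\tilde{p}$ of gradients $2$ and $\tfrac{1}{2}$; symmetrically, the left-null curve $\tilde{\lambda}^{v_*}$ through $\iota(u_*,v_*)$ is confined by \eqref{eqn:gradientbound1+1} to the wedge through $\iota(u_*,v_*)$ of gradients $-\tfrac{1}{2}$ and $-2$. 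After reducing $|v_*|$ so that $\tilde{\lambda}^{v_*}$ stays inside $\tilde{U}$ uniformly in $u\in(-1,1)$ (possible by continuity of the extended $\iota$ on $(-1,1)\times\{0\}$ collapsing to $\tilde{p}$), the image lies in the intersection of the two wedges, which is precisely $\tilde{B}$. The monotonicity \eqref{eqn:monotonicityconventions} then propagates to the whole image by the connectedness argument of Proposition \ref{prop:boundarystructure}(a)(i).
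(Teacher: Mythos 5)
Your overall scaffold (establish the conclusion on $[u_-,u_+]$, then propagate in $u$ by a sup/openness--closedness argument, then deduce the containment in $\tilde{B}$ and the monotonicity) parallels the paper's proof, which runs the same induction via the sets $S^{\pm}_{\tilde{p}}$. The squeeze argument on $[u_-,u_+]$, the closedness step at $U_+$, and the re-centring of Proposition \ref{prop:boundarystructure} (a) at $(U_+,0)$ are all fine. But the heart of the proposition is precisely the step you leave open: ruling out the ``mixed'' configuration in which $\tilde{\lambda}^0(u)=\tilde{p}$ for $u\le U_+$ while $\tilde{\lambda}^0(u')\neq\tilde{p}$ for some $u'>U_+$. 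You only ``anticipate'' a contradiction from the common right-null tangent direction of the $\tilde{\gamma}^u$ at $\tilde{p}$ together with the transversality in Proposition \ref{prop:boundarystructure} (b). That mechanism does not yield a contradiction: the conclusion of part (b) concerns only the \emph{image} of $\tilde{\lambda}^0|_{[U_+-\epsilon'',U_++\epsilon'']}$, which in the mixed scenario is just the closure of the nondegenerate null arc with $\tilde{p}$ as its endpoint, and a family of right-going null curves all terminating at the endpoint of a transverse left-going null arc is perfectly consistent with every causal and $C^1$ constraint established so far (this is exactly what happens at the corner point of Example \ref{example:cornerextension}). So at the level of tangent directions and transversality there is nothing to contradict.

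The paper's proof closes this gap with a quantitative, metric-level argument that your sketch never touches and that uses the standing hypothesis on $\Omega$ in an essential way: since $\Omega$ extends continuously and positively to $v=0$, the spacetime volumes of $A_{\pm}(v)=(u_1,u_1\pm\epsilon')\times(v,0)$ satisfy the uniform ratio bound \eqref{eqn:volumeinequality}, $\mathrm{vol}(A_+(v))/\mathrm{vol}(A_-(v))\ge C_-^2/C_+^2>0$. On the other hand, in the extension the image $\iota(A_+(v))$ (all of whose bounding null curves terminate at the single point $\tilde{p}$) is trapped in a thin triangle of area of order $\alpha\, q_1(v)^2$, while $\iota(A_-(v))$ contains a region of area of order $q_1(v)^2$ -- here one uses $\iota(u_1-\epsilon',0)\neq\tilde{p}$ and the transversality of $\tilde{\lambda}^0$ from Proposition \ref{prop:boundarystructure} (b) to build the lower bound. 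Choosing $\alpha$ small enough contradicts \eqref{eqn:volumeinequality}. Without some such volume (or length/metric) comparison, the rigidity simply is not true at the level of the structures you invoke; indeed the paper stresses that this proposition ``relies heavily'' on the reference-extension assumption on $\Omega$, which your argument never uses. A secondary, fixable point: to invoke Proposition \ref{prop:boundarystructure} (b) in the new chart you need $\tilde{\lambda}^0(U_+ + \epsilon'')\neq\tilde{\lambda}^0(U_+)$, not merely inequality at some interior $u'$; this does follow from the monotonicity of $\tilde{x}_0(u,0)$ and $\tilde{x}_1(u,0)$, but it should be said. As it stands, however, the proposal is missing the key idea of the proof.
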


\begin{figure}[h]
    \centering
\includegraphics[scale=0.25]{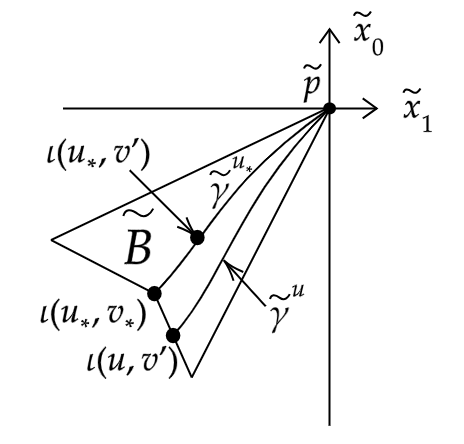}
    \caption{Proof that $\tilde{\gamma}^u\vert_{[v',0)}\subset\tilde{B}$ for $u\in[u_-,u_+]$ (the figure shown corresponds to the case $u<u_*$) used to show the non-emptiness of $S^-_{\tilde{p}}$.}
    \label{fig:Btilde}
\end{figure}

Note that although this proposition only makes an assumption about the local behaviour of the extension near $\tilde{p}$, the result obtained places a restriction on its global (i.e. for all $u \in (-1,1)$) structure. 

\begin{proof} (a) Define 
\begin{equation}
    \begin{split}
        S^+_{\tilde{p}}&:=\{u\in(u_-,1):\text{ } \iota \text{ extends continuously to }\iota:[u_-,u]\times[v_*,0]\rightarrow\tilde{B}\}\\
        S^-_{\tilde{p}}&:=\{u\in(-1,u_+):\text{ } \iota \text{ extends continuously to }\iota:[u,u_+]\times[v_*,0]\rightarrow\tilde{B} \}.
    \end{split}
\end{equation}

We will prove that $S^-_{\tilde{p}}=(-1,u_+)$ by showing that this set is non-empty as well as open and closed in $(-1,u_+)$. A similar argument shows that $S_{\tilde{p}}^+=(u_-,1)$. 
By Proposition \ref{prop:boundarystructure} (a)(i) it will then follow that \eqref{eqn:monotonicityconventions} holds on $\iota\big((-1,1) \times [v_*,0) \big) \subset \tilde{B}$. As in the proof of Proposition \ref{prop:boundarystructure} (a)(iii), we infer from this that $\tilde{x}_0(u,0)$ is non-decreasing in $u$ and $\tilde{x}_1(u,0)$ is non-increasing. Since $\iota(u,0)\in\tilde{B}$, we have $\tilde{x}_0(u,0),\tilde{x}_1(u,0)\leq0$ and moreover $\tilde{x}_0(u,0)=0\iff\tilde{x}_1(u,0)=0$. For $u\in(-1,u_+)$, it follows from the fact that $\tilde{x}_1(u,0)$ is non-increasing that $\tilde{x}_1(u,0)=0$ and hence $\iota(u,0)=\tilde{p}$. Similarly, for $u\in(u_-,1)$, since $\tilde{x}_0(u,0)$ is non-decreasing we have $\tilde{x}_0(u,0)=0$  and hence $\iota(u,0)=\tilde{p}$. We conclude that
$\iota(u,0)=\tilde{p}$ for all $u\in(-1,1)$ which concludes the proof. 

\textbf{Step 1 (non-emptiness):} 
%
It follows from \eqref{eqn:monotonicityconventions} that $\tilde{x}_0(u,0)$ and $\tilde{x}_1(u,0)$ are monotonic functions on $[u_-,u_+]$ and hence $\iota(u,0)=\tilde{p}$ for all $u\in[u_-,u_+]$. Fix $u\in[u_-,u_+]$.  From \eqref{eqn:gradientbound1+1} and \eqref{eqn:monotonicityconventions} we have $\tilde{\gamma}^u\vert_{[v,0)}\subset\tilde{B}$ for $v<0$ sufficiently small. Suppose $\tilde{\gamma}^u\vert_{[v',0)}\subset\tilde{B}$ and $\tilde{\gamma}^u(v')\in\partial\tilde{B}$ for some $v'\in(v_*,0)$. Assume that $u<u_*$ (the case where $u>u_*$ follows similarly). Then \eqref{eqn:gradientbound1+1} and \eqref{eqn:monotonicityconventions} imply that $\tilde{\gamma}^u\vert_{[v',0)}$ lies below $\tilde{\gamma}^{u_*}\vert_{[v_*,0)}$ in $\tilde{B}$ and $\tilde{\gamma}^{u}(v')$ lies on the straight line of gradient $-2$ through $\iota(u_*,v_*)$ (see Figure \ref{fig:Btilde}). Consider the curve $\tilde{\lambda}^{v'}\vert_{[u,u_*]}$ originating at $\iota(u_*,v')$. This curve must exit $\tilde{B}$, however it follows from \eqref{eqn:gradientbound1+1} and \eqref{eqn:monotonicityconventions} that in order to do so it must first intersect $\tilde{\gamma}^u\vert_{(v',0)}$. This contradicts the injectivity of $\iota$, so we conclude that $\tilde{\gamma}^u\vert_{[v_*,0)}\subset\tilde{B}$ and hence $\iota([u_-,u_+]\times[v_*,0])\subset\tilde{B}$.
It follows that $u_-\in S_{\tilde{p}}^-$, so this set is non-empty. 

\textbf{Step 2 (closedness):} 
Let $(u'_n)_{n=1}^\infty$ be a sequence in $S_{\tilde{p}}^-$ converging to $u_1\in(-1,u_+)$. By restricting to a subsequence if necessary, we assume that $(u'_n)_{n=1}^\infty$ is strictly monotonic. It is clear that $[u_n',u_+)\subset S_{\tilde{p}}^-$, so if the sequence is strictly increasing then it follows immediately that $u_1\in S_{\tilde{p}}^-$. 


Now suppose the sequence is strictly decreasing.
It follows that $\iota$ extends continuously to $\iota((u_1,u_+]\times[v_*,0])\subset\tilde{B}$ and Proposition \ref{prop:boundarystructure} (a)(i) implies $\eqref{eqn:monotonicityconventions}$ holds on $\iota((u_1,u_+]\times[v_*,0))$.
Since $\tilde{B}$ is compact, it follows from the monotonicity \eqref{eqn:monotonicityconventions} that $\tilde{\gamma}^{u_1}(v)=\lim_{n\rightarrow\infty}\tilde{\gamma}^{u_n'}(v)\in\tilde{B}$ for all $v\in[v_*,0)$. The connectedness of $\iota([u_1,u_+]\times[v_*,0))$ along with Proposition \ref{prop:boundarystructure} (a)(i) imply that \eqref{eqn:monotonicityconventions} holds along $\tilde{\gamma}^{u_1}\vert_{[v_*,0)}$ and hence we can take a limit $\tilde{\gamma}^{u_1}(0):=\lim_{v\rightarrow0}\tilde{\gamma}^{u_1}(v)\in\tilde{B}$. We conclude that $\iota([u_1,u_+]\times[v_*,0])\subset\tilde{B}$. The continuity of $\iota:[u_1,u_+]\times[v_*,0]\rightarrow\tilde{B}$ then follows from Proposition \ref{prop:boundarystructure} (a)(i).

\textbf{Step 3 (openness):} 
Let $u_1\in S_{\tilde{p}}^-$. We will show that $(u_1-\epsilon',u_1+\epsilon')\subset S_{\tilde{p}}^-$ for some $\epsilon'>0$. By Proposition \ref{prop:boundarystructure} (a)(i), there exists $-1<v_*'<0<\epsilon'$ such that $\iota$ extends continuously to $\iota:[u_1-\epsilon',u_1+\epsilon']\times[v_*',0]\rightarrow\tilde{U}$ with \eqref{eqn:monotonicityconventions} holding on $\iota([u_1-\epsilon',u_1+\epsilon']\times[v_*',0))$.
Reducing $\epsilon'$ if necessary, it is clear that $[u_1,u_1+\epsilon']\subset S_{\tilde{p}}^-$. If $\iota(u',0)=\tilde{p}$ for some $u'\in[u_1-\epsilon',u_1)$ then it follows from the argument in Step 1 (replacing $u_-$ and $u_+$ by $u'$ and $u_1$ respectively) that $[u',u_1]\in S_{\tilde{p}}^-$ and we are done. Suppose this is not the case, i.e. suppose $\iota(u,0)\neq\tilde{p}$ for all $u\in[u_1-\epsilon',u_1)$. We will show this leads to a contradiction. 

First we recall that the spacetime volume of $A\subset M_t$ is defined by
\begin{equation}
        \text{vol}_{g_t} (A):=\int_A\sqrt{-\text{det}g_t} \,dudv 
                = \int_{A}\Omega(u,v)^2\,dudv\;.
       \end{equation}
For $v\in[v_*',0)$, we consider the following regions in $M_t$: 
\begin{equation}\label{eqn:AplusAminus}
    \begin{split}
A_+(v)&:=(u_1,u_1 + \epsilon')\times(v,0)\subset M_t\\
A_-(v)&:=(u_1 - \epsilon',u_1)\times(v,0)\subset M_t.
    \end{split}
\end{equation}
Since $\Omega(u,v)$ extends continuously to a strictly positive function on $(-1,1)\times(-1,0]$, and since $[u_1 - \epsilon', u_1 + \epsilon'] \times [v_*',0]$ is compactly contained in this region, there exist constants $0<C_-\leq C_+$ such that $C_-\leq\Omega(u,v)\leq C_+$ on $[u_1 - \epsilon', u_1 + \epsilon'] \times [v_*',0]$. Thus we can estimate
\begin{align*}
    &\text{vol}_{g_t} (A_+(v))\geq C_-^2\epsilon' |v|\\
    &\text{vol}_{g_t} (A_-(v)) \leq C_+^2\epsilon' |v| \;.
\end{align*}
Since $\iota$ is an isometry, it follows that
\begin{equation}
     \frac{\text{vol}_{\tilde{g}}(\iota(A_+(v)))}{\text{vol}_{\tilde{g}}(\iota(A_-(v)))} = \frac{\text{vol}_{g_t}(A_+(v))}{\text{vol}_{g_t}(A_-(v))}\geq\frac{C_-^2}{C_+^2}>0\;,\label{eqn:volumeinequality}
\end{equation}
where $\text{vol}_{\tilde{g}}$ denotes the spacetime volume on $(\tilde{U}, \tilde{g})$.
We now estimate the respective volumes using the geometry of $(\tilde{U}, \tilde{g})$ together with the assumption that $\iota(u,0) \neq \tilde{p}$ for $u \in [u_1 - \epsilon', u_1)$. This will contradict \eqref{eqn:volumeinequality} since the ratio of these volumes can be made arbitrarily small for, in particular, $|v|$ small enough.
We introduce new coordinates $p:=\frac{1}{\sqrt{2}}(\tilde{x}_0-\tilde{x}_1)$ and $q:=\frac{1}{\sqrt{2}}(\tilde{x}_0+\tilde{x}_1)$ on $\tilde{U}$. The spacetime volume of $\tilde{A}\subset\tilde{U}$ is computed in these coordinates by
$$\text{vol}_{\tilde{g}}(\Tilde{A}):= \int_{\Tilde{A}}\sqrt{-\text{det}\tilde{g}}\,dpdq.$$

In these coordinates, the metric $\Tilde{g}$ at $\tilde{p}$ is given by 
\begin{equation}\label{eqn:metricindoublenull}
    \Tilde{g}|_{\tilde{p}}=-dp\otimes dq-dq\otimes dp
\end{equation}
and hence $\partial_p$ and $\partial_q$ are null and linearly independent at $\tilde{p}$. 

We have 
\begin{equation}
      \frac{\partial\tilde{x}_1}{\partial p}<0<\frac{\partial\tilde{x}_0}{\partial p},\frac{\partial\tilde{x}_0}{\partial q},\frac{\partial\tilde{x}_1}{\partial q}
\end{equation} 
and hence comparing with the monotonicity properties \eqref{eqn:monotonicityconventions} we see that $\frac{\iota_*\partial_u}{\lVert\iota_*\partial_u\rVert_{\mathbbm{R}^2}}$ and $\frac{\iota_*\partial_v}{\lVert\iota_*\partial_v\rVert_{\mathbbm{R}^2}}$, defined on $\iota([u_1-\epsilon',u_1+\epsilon']\times[v_*',0))$, extend continuously to $\partial_p$ and $\partial_q$ respectively at $\tilde{p}$. Here, $\lVert.\rVert_{\R^2}$ denotes the norm defined with respect to the Riemannian metric $\delta=d\tilde{x}_0\otimes d\tilde{x}_0+d\tilde{x}_1\otimes d\tilde{x}_1$ on $\tilde{U}$ (note that $\lVert\partial_p\rVert_{\mathbbm{R}^2}=\lVert\partial_q\rVert_{\mathbbm{R}^2}=1$). 
Let $L,\underline{L}$ be two continuous, linearly independent null vector fields in $\tilde{U}$ such that $L\vert_{\tilde{p}}=\partial_p$ and $\underline{L}\vert_{\tilde{p}}=\partial_q$. We see that, in $\iota([u_1-\epsilon',u_1+\epsilon']\times[v_*',0))$, $L$ and $\underline{L}$ are proportional to $\iota_*\partial_u$ and $\iota_*\partial_v$ respectively. Since $\tilde{\lambda}^0\vert_{[u_1-\epsilon',u_1+\epsilon']}$ is null and transverse to the extension of $\tilde{\gamma}^u\vert_{[v_*',0)}$ for all $u\in[u_1-\epsilon',u_1+\epsilon']$ by Proposition \ref{prop:boundarystructure} (b), it follows that $L$ is tangent to $\tilde{\lambda}^0\vert_{[u_1-\epsilon',u_1+\epsilon']}$.

We now choose $\alpha\in(0,1)$ sufficiently small that $ \frac{2\alpha(1+\alpha)(1+\alpha^2)}{(1-\alpha)^5} < \frac{C_-^2}{C_+^2}$. By continuity, we can choose $\tilde{V}\subset\tilde{U}$ to be a sufficiently small neighbourhood of $\tilde{p}$ such that
\begin{align}
        \frac{\langle \partial_p,L\rangle_{\mathbbm{R}^2}}{\lVert L\rVert_{\mathbbm{R}^2}}, \frac{\langle \partial_q,\underline{L}\rangle_{\mathbbm{R}^2}}{\lVert\underline{L}\rVert_{\mathbbm{R}^2}}&>\frac{1}{\sqrt{1+\alpha^2}}\label{eqn:angles}\\
        \text{and } 1-\alpha\leq \sqrt{-\text{det}\tilde{g}}&\leq1+\alpha \label{eqn:determinant}
\end{align} 
in $\tilde{V}$, where $\langle .,.\rangle_{\R^2}$ denotes the  inner product defined with respect to $\delta$.

Choosing $\epsilon' $ and $|v_*'|$ smaller if necessary, we have $\iota([u_1-\epsilon',u_1+\epsilon']\times[v_*',0])\subset \tilde{V}$. It follows from \eqref{eqn:angles} that, in $\iota([u_1-\epsilon',u_1+\epsilon']\times[v_*',0])$, the curves $\tilde{\gamma}^u$ and $\tilde{\lambda}^v$ (including $\tilde{\lambda}^0$) have gradients (when graphed in the $(q,p)$ plane) with absolute value less than $\alpha$ and greater than $1/\alpha$ respectively.

We use \eqref{eqn:angles} and \eqref{eqn:determinant} to obtain upper and lower bounds (expressed in terms of $\alpha$ and $q_1(v):= q(\tilde{\gamma}^{u_1}(v))$) for $\text{vol}_{\tilde{g}}(\iota(A_+(v))$ and $\text{vol}_{\tilde{g}}(\iota(A_-(v))$ respectively.

\textbf{Upper bound on $\text{vol}_{\tilde{g}}(\iota(A_+(v)))$:} For $v \in [v_*',0)$ sufficiently small,  the curves $p=\pm\alpha q$ and $p=\alpha q_1(v)+\frac{q_1(v)-q}{\alpha}$ enclose a region in $\tilde{U}$ which we call $T_+^{(\alpha)}(v)$ (see Figure \ref{fig:Tplus}).  By \eqref{eqn:monotonicityconventions} and \eqref{eqn:angles}, the curves $\tilde{\gamma}^{u_1}\vert_{[v,0]}$, $\tilde{\gamma}^{u_1 + \epsilon'}\vert_{[v,0]}$ and $\tilde{\lambda}^v\vert_{[u_1,u_1 + \epsilon']}$ are contained in $T_+^{(\alpha)}(v)$.

It follows from Proposition \ref{prop:boundarystructure} (a)(iii) that $\iota(A_+(v))\subset T_+^{(\alpha)}(v)$ and hence (see Figure \ref{fig:upperboundcalc}) 
\begin{equation}\label{eqn:volumeupperbound}
    \begin{split}
        \text{vol}_{\tilde{g}}(\iota(A_+(v)))&=\int_{\iota(A_+(v))}\sqrt{-\text{det}\tilde{g}}\,dpdq\\
        &\leq \int_{T^{(\alpha)}_+(v)}\sqrt{-\text{det}\tilde{g}}\,dpdq\\
        &\leq (1+\alpha)\int_{T^{(\alpha)}_+(v)}\,dpdq\\
       &=\frac{\alpha(1+\alpha^2)}{(1-\alpha)}q_1(v)^2.
    \end{split}
\end{equation}

\begin{figure}[h]
    \centering
    \begin{minipage}{0.48\textwidth}
            \centering
    \includegraphics[scale=0.2]{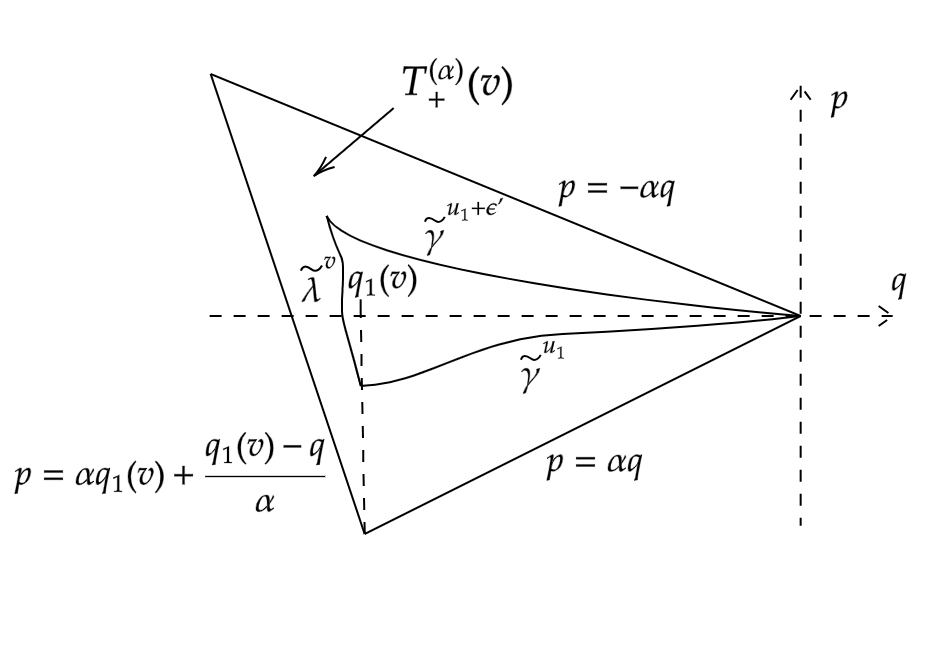}
    \caption{For $|v|$ sufficiently small $T^{(\alpha)}_+(v)\subset\tilde{U}$ is bound by the curves $p=\pm\alpha q$ and $p=\alpha q_1+\frac{q_1-q}{\alpha}$. The bounds \eqref{eqn:angles} combined with \eqref{eqn:monotonicityconventions} ensure that $\iota(A_+(v))\subset T^{(\alpha)}_+(v)$.}
    \label{fig:Tplus}
   \end{minipage}
    \minipage{0.04\textwidth}
  \includegraphics[scale=0.05]{blank.png}
 \endminipage
    \begin{minipage}{0.4\textwidth}
        \centering
    \includegraphics[scale=0.22]{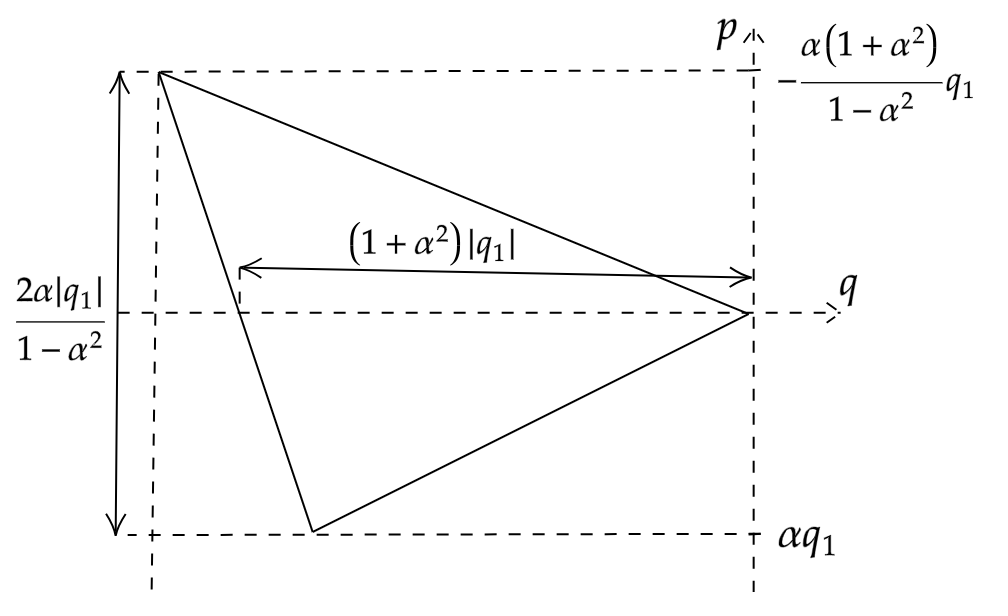}
    \caption{The area of the triangle $T^{(\alpha)}_+(v)$ in the $(q,p)$ plane can be computed using simple geometry.\\ \\}
   \label{fig:upperboundcalc}
    \end{minipage}
\end{figure}

\textbf{Lower bound on $\text{vol}(\iota(A_-(v)))$:} Now consider the region $\iota(A_-(v))$ for $v \in [v_*',0)$. For $|v|$ sufficiently small (i.e. $|q_1(v)|$ sufficiently small), the curves $p=\alpha q$, $p=\frac{q}{\alpha}$ and $p=\frac{q_1(v)-q}{\alpha}-\alpha q_1(v)$ enclose a region in $\tilde{U}$ which we call $P^{(\alpha)}(v)$.
From \eqref{eqn:monotonicityconventions} and \eqref{eqn:angles}, as well as from Proposition \ref{prop:boundarystructure} (b), it follows that the four curves $\tilde{\gamma}^{u_1 - \epsilon'}\vert_{[v,0)}$, $\tilde{\gamma}^{u_1}\vert_{[v,0)}$, $\tilde{\lambda}^{0}\vert_{[u_1 - \epsilon',u_1]}$ and $\tilde{\lambda}^{v}\vert_{[u_1 - \epsilon',u_1]}$ are contained in four wedges formed by straight lines with gradients in $\{\pm\alpha,\pm\frac{1}{\alpha}\}$ (see Figure \ref{fig:lowerbound}). It is here we use the fact that $\iota(u_1-\epsilon',0) \neq \tilde{p}$. The exact point at which $\tilde{\gamma}^{u_1 - \epsilon'}$ intersects $\tilde{\lambda}^0$ is not important -- all we need is that this is not $\tilde{p}$. This allows us to  choose $v$ sufficiently small (i.e. $q_1(v)$ sufficiently small) so that the lines $p=\frac{q}{\alpha}$ and $p=\frac{q_1(v)-q}{\alpha}-\alpha q_1(v)$ intersect above the wedge containing $\tilde{\gamma}^{u_1 - \epsilon'}\vert_{[v,0)}$. It follows that none of the four wedges in Figure \ref{fig:lowerbound} intersect $P^{(\alpha)}(v)$. Combining this with Proposition \ref{prop:boundarystructure} (a)(iii) we conclude that $P^{(\alpha)}(v)\subset \iota(A_-(v))\subset\tilde{U}$.

To simplify calculations, we consider $T^{(\alpha)}_-(v)\subset\tilde{U}$, where this is the region enclosed by the curves $p=\alpha q$, $p=\frac{q}{\alpha}$ and $p=-q+(1-\alpha)q_1(v)$ (see Figure \ref{fig:Tminus}). Since $\alpha\in(0,1)$, the line $p=-q+(1-\alpha)q_1(v)$ lies above the line $p=\frac{q_1(v)-q}{\alpha}-\alpha q_1(v)$ for $q>q_1(v)$ and hence $T^{(\alpha)}_-(v)\subset P^{(\alpha)}(v)\subset\iota(A_-(v))$. It follows that
(see Figure \ref{fig:lowerboundcalc}) 
\begin{equation}\label{eqn:volumelowerbound}
\begin{split}
    \text{vol}_{\tilde{g}}(\iota(A_-(v)))&=\int_{\iota(A_-)}\sqrt{-\det \Tilde{g}}\,dpdq\\
    &\geq \int_{T^{(\alpha)}_-(v)}\sqrt{-\det \Tilde{g}}\,dpdq\\
    &\geq (1-\alpha)\int_{T^{(\alpha)}_-(v)}\,dpdq\\
    &= \frac{(1-\alpha)^4}{2(1+\alpha)}q_1(v)^2.
\end{split}
\end{equation}
\begin{figure}[h]
    \centering
    \begin{minipage}{0.48\textwidth}
        \centering
    \includegraphics[scale=0.2]{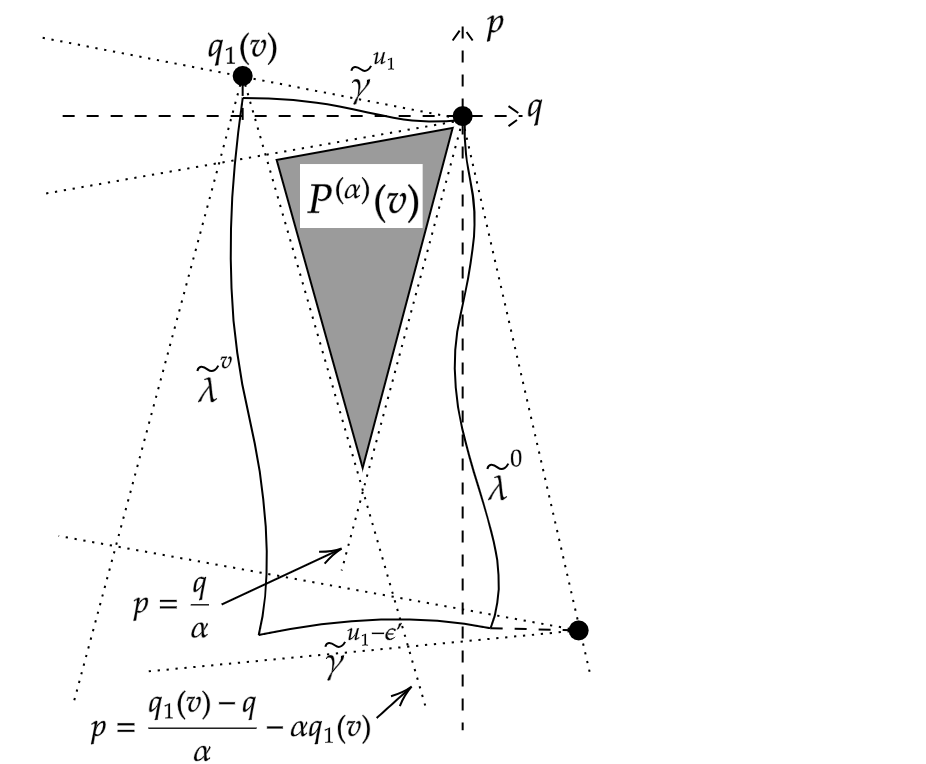}
    \caption{Dotted lines show four wedges (originating at the three black spots), formed by lines with gradients in $\{\pm\alpha,\pm\frac{1}{\alpha}\}$, which contain the four curves $\tilde{\gamma}^{u_1 - \epsilon'}\vert_{[v,0)}$, $\tilde{\gamma}^{u_1}\vert_{[v,0)}$, $\tilde{\lambda}^{0}\vert_{[u_1 - \epsilon',u_1]}$ and $\tilde{\lambda}^{v}\vert_{[u_1 - \epsilon',u_1]}$. Here, we use that $\iota(u_1 - \epsilon',0) \neq \tilde{p}$. For $\alpha$ fixed, choosing $v<0$ sufficiently small ensures 
 $P^{(\alpha)}(v)\subset\iota(A_-(v))$.}
    \label{fig:lowerbound}
    \end{minipage}%
     \minipage{0.02\textwidth}
    \includegraphics[scale=0.05]{blank.png}
 \endminipage
    \begin{minipage}{0.48\textwidth}
        \centering
    \includegraphics[scale=0.2]{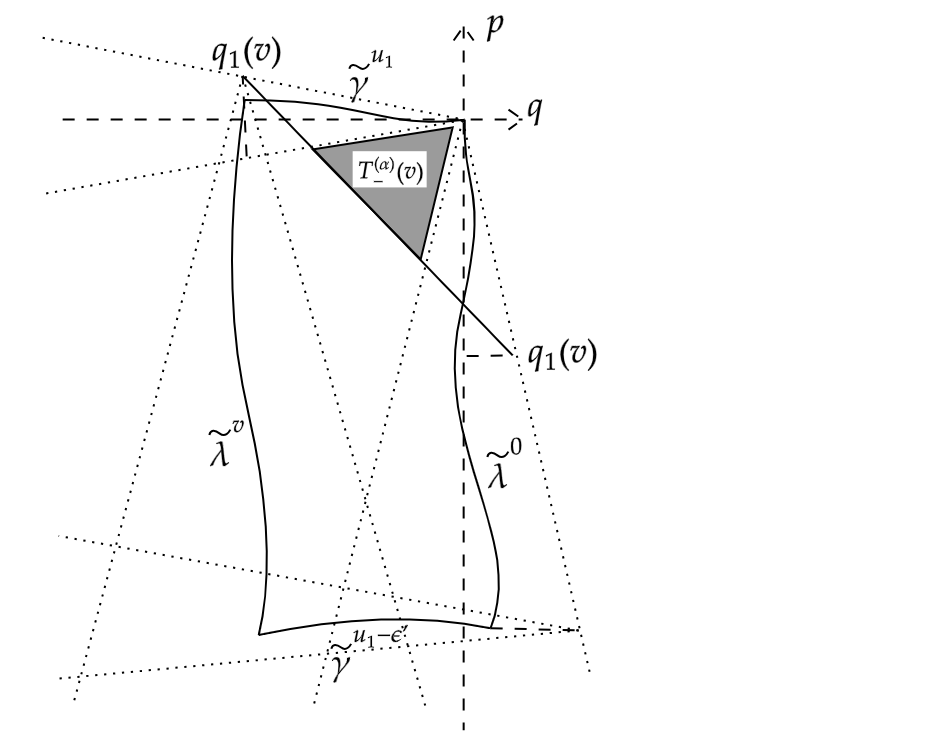}
    \caption{
    We use the region $T_-^{(\alpha)}(v)\subset P^{(\alpha)}(v)$ to obtain a lower bound for $\text{vol}_{\tilde{g}}(\iota(A_-(v)))$.\\ \\ \\ \\ \\}
    \label{fig:Tminus}
    \end{minipage}
\end{figure}
\begin{figure}[h]
    \centering
      \includegraphics[scale=0.3]{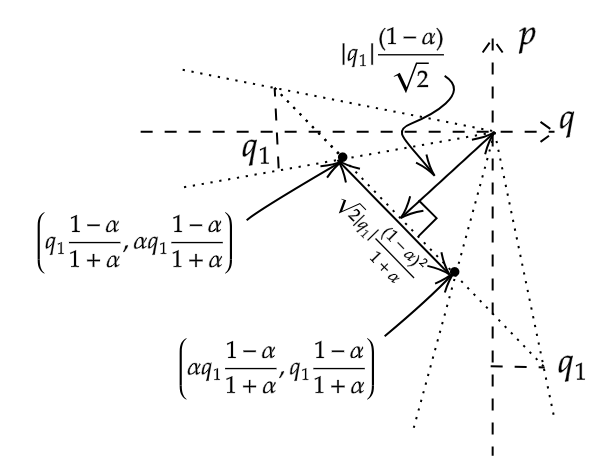}
    \caption{The area of $T^{(\alpha)}_-(v)$ in the $(q,p)$ plane can be computed using simple geometry.}
    \label{fig:lowerboundcalc}
\end{figure}

Combining this with \eqref{eqn:volumeinequality} and \eqref{eqn:volumeupperbound}, we obtain for $|v|$ sufficiently small
\begin{equation} 
0< \frac{C_-^2}{C_+^2}
\leq\frac{\text{vol}_{\tilde{g}}(\iota(A_+(v)))}{\text{vol}_{\tilde{g}}(\iota(A_-(v)))} \leq \frac{2\alpha(1+\alpha)(1+\alpha^2)}{(1-\alpha)^5}\;,
\end{equation}
which gives the desired contradiction by our choice of $\alpha$.

\end{proof}

\begin{thm}\label{thm:C0structure}
    Let $\iota : M_t \hookrightarrow \tilde{M}$ be a $C^0$-extension of $(M_t,g_t)$ such that $\tilde{p}_{u_*}:=\lim_{v \to 0} (\iota \circ \gamma^{u_*})(v) \in \partial \iota(M)$ exists for some $u_*\in(-1,1)$. Then one of the following holds:
    \begin{enumerate}
        \item[(a)] $\iota$ extends continuously to $\iota : (-1,1) \times (-1,0] \to 
        \tilde{M}$ with $\iota(u,0) = \tilde{p}_{u_*}$ for all $u \in (-1,1)$. 
        \item[(b)] 
        $\tilde{p}_{u_*}\in\partial^+\iota(M_t)$ and there exist $v_*<0<\epsilon$
        such that $\iota$ extends to $\iota : [u_*-\epsilon, u_*+\epsilon] \times [v_*,0] \to \tilde{U}_{\leq}$ as a homeomorphism onto its image, where $\big(\tilde{U}, \tilde{\varphi} = (\tilde{x}_0, \tilde{x}_1) \big)$ is a future boundary chart centred at $\tilde{p}_{u_*}$.
    \end{enumerate}
\end{thm}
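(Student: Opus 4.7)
The plan is to combine Propositions \ref{prop:boundarystructure} and \ref{prop:nomixedcornerextension} with Remark \ref{remark:futureboundary} into a dichotomy based on whether the boundary curve $\tilde{\lambda}^0$ collapses to a single point near $u_*$ or not. As a setup, I would first apply Proposition \ref{prop:boundarystructure}(a) to the curve $\tau = \gamma^{u_*}$ to obtain a near Minkowski neighbourhood $(\tilde{U},\tilde{\varphi}=(\tilde{x}_0,\tilde{x}_1))$ centred at $\tilde{p}_{u_*}$, together with $\epsilon_0>0$ and $v_0<0$ such that $\iota$ extends continuously to $[u_*-\epsilon_0,u_*+\epsilon_0]\times[v_0,0]\to\tilde{U}$ with the monotonicity properties \eqref{eqn:monotonicityconventions}. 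As noted in the proofs of Proposition \ref{prop:boundarystructure}(a)(iii) and Proposition \ref{prop:nomixedcornerextension}, these monotonicity properties imply that $\tilde{x}_0(u,0)$ is non-decreasing and $\tilde{x}_1(u,0)$ is non-increasing, so any failure of injectivity of $\tilde{\lambda}^0$ on an interval forces $\tilde{\lambda}^0$ to be constant on that interval.

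I would then split into two cases according to whether $\tilde{\lambda}^0$ is injective on $[u_*-\epsilon_0,u_*+\epsilon_0]$. In the non-injective case, there exist $u_-<u_+$ in this interval with $\tilde{\lambda}^0(u_-)=\tilde{\lambda}^0(u_+)=:\tilde{p}$; the aim is then to deduce case (a) of the theorem. I would re-apply Proposition \ref{prop:boundarystructure}(a) at $\tilde{p}$, using $\tau=\gamma^{u_-}$, to obtain a fresh near Minkowski neighbourhood $\tilde{U}'$ centred on $\tilde{p}$ with parameters $\epsilon_1,v_*^1$. Replacing $u_+$ by $\min(u_+,u_-+\epsilon_1)$ if necessary (which preserves $\tilde{\lambda}^0(u_-)=\tilde{\lambda}^0(u_+)=\tilde{p}$ by the monotonicity established in the first step), Proposition \ref{prop:nomixedcornerextension} applies and yields a continuous extension $\iota:(-1,1)\times[v_*,0]\to\tilde{B}$ with $\iota(u,0)=\tilde{p}$ for every $u\in(-1,1)$. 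In particular $\tilde{p}=\iota(u_*,0)=\tilde{p}_{u_*}$, so case (a) of the theorem holds.

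In the injective case, since $u_*\in(u_*-\epsilon_0,u_*+\epsilon_0)$, injectivity of $\tilde{\lambda}^0$ gives $\tilde{\lambda}^0(u_*-\epsilon_0)\neq\tilde{\lambda}^0(u_*)=\tilde{p}_{u_*}$. Hence Proposition \ref{prop:boundarystructure}(b) and Remark \ref{remark:futureboundary} apply: possibly after shrinking $\epsilon$, $|v_*|$ and $\tilde{U}$, we obtain that $\tilde{p}_{u_*}\in\partial^+\iota(M_t)$, that $(\tilde{U},\tilde{\varphi})$ is a future boundary chart centred at $\tilde{p}_{u_*}$, and that $\iota([u_*-\epsilon,u_*+\epsilon]\times[v_*,0])\subset\tilde{U}_{\leq}$. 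Injectivity of $\tilde{\lambda}^0$ is preserved on the smaller interval. It then remains to verify that this continuous map from a compact domain is a homeomorphism onto its image: since the target is Hausdorff, it suffices to establish injectivity. Injectivity on the $v<0$ slab follows because $\iota$ is an embedding of $M_t$; injectivity on $\{v=0\}$ is precisely the present case assumption; and the images of the two pieces are disjoint because $\iota(M_t)$ is open whereas $\iota(\{v=0\})\subset\partial\iota(M_t)$. This gives case (b) of the theorem.

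The main obstacle is the passage in Case (A) from a local collapse of $\tilde{\lambda}^0$ on some small interval $[u_-,u_+]$ to the global conclusion $\iota(u,0)=\tilde{p}_{u_*}$ for all $u\in(-1,1)$; this global rigidity is exactly the content of Proposition \ref{prop:nomixedcornerextension}, so the real work is carried out there, and the role of this proof is to arrange the hypotheses correctly and then to handle the relatively soft injectivity/homeomorphism argument in Case (B).
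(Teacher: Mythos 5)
Your proposal is correct and follows essentially the same route as the paper: apply Proposition \ref{prop:boundarystructure}(a)(i) at $\tilde{p}_{u_*}$, dichotomize on injectivity of $u \mapsto \iota(u,0)$, invoke Proposition \ref{prop:nomixedcornerextension} in the non-injective case, and in the injective case combine compactness (continuous bijection on a compact set is a homeomorphism) with Remark \ref{remark:futureboundary}. Your explicit re-centring of the near Minkowski neighbourhood at the common limit point $\tilde{p}$ (via a fresh application of Proposition \ref{prop:boundarystructure} along $\gamma^{u_-}$ and shrinking $u_+$) and your spelled-out injectivity/disjointness argument are slightly more careful than the paper's terse treatment, but they do not change the argument.
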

In particular, if we are not in case $(a)$ then the $C^0$-extension $\iota : M_t \hookrightarrow \tilde{M}$ is locally $C^0$-equivalent to the reference extension $\overline{M_t}$ at $(u_*,0)\in\overline{M_t}$ and the conclusions of Proposition \ref{prop:boundarystructure} (b) apply.

\begin{proof} By Proposition \ref{prop:boundarystructure} (a)(i), we can choose a near Minkowski neighbourhood centred at $\tilde{p}_{u_*}$ as well as $v_*<0<\epsilon$ such that \eqref{eqn:monotonicityconventions} holds on $\iota([u_*-\epsilon,u_*+\epsilon]\times[v_*,0))$ and $\iota$ extends continuously to $\iota:[u_*-\epsilon,u_*+\epsilon]\times[v_*,0]\hookrightarrow\tilde{U}$. If $\iota(u_-,0)=\iota(u_+,0)$ for some $u_-,u_+\in[u_*-\epsilon,u_*+\epsilon]$ with $u_-<u_+$ then by Proposition \ref{prop:nomixedcornerextension}, $\iota$ extends continuously to $\iota : (-1,1) \times (-1,0] \to \tilde{M}$ with $\iota(u,0) = \tilde{p}_{u_*}$ for all $u \in (-1,1)$. Otherwise, $\iota :[u_*-\epsilon,u_*+\epsilon]\times[v_*,0]\hookrightarrow\tilde{U}$ is injective and thus a bijection onto its image. Since it is continuous and defined on a compact set, its inverse is also continuous \cite[Theorem 4.17]{Rudin}.
The remaining properties follow from Remark \ref{remark:futureboundary}.
\end{proof}

Theorem \ref{thm:C0structure} constitutes the first main result of this paper. It classifies the possible local $C^0$-structures of $C^0$-extensions of $(M_t,g_t)$ across $\{v = 0\}$: either the $C^0$-structure is locally equivalent to that of the reference extension $\overline{M_t}$ or all future-directed null curves of constant $u$ extend to the same point. In this paper, we have only obtained this classification in 1+1-dimensions. Nevertheless, 
this is sufficient for exploring strongly spherically symmetric extensions (defined below) of black hole interiors to Cauchy horizons in 3+1-dimensions. This is addressed in the next section.

\subsection{Ruling out corner extensions from global structure}\label{cornerextensions}

In Example \ref{example:cornerextension} we constructed a \textit{corner extension} across $\{v=0\}$ of $(M_t,g_t)$ -- a $C^0$-extension such that $\iota(u,0)=\tilde{p}$ for all $u\in(-1,1)$. This construction required the extra assumption that $\Omega$ extends continuously to $\{u=\pm1\}$ as a strictly positive function.

In this section we leave our \emph{local} toy spacetime $(M_t, g_t)$ behind (recall this was a toy example describing a finite $u$-piece of a weak null singularity or of a Cauchy horizon).  We now consider a toy spacetime which models the entire weak null singularity or Cauchy horizon to the past. That is, we consider the  1+1-dimensional Lorentzian manifold $(Q,g_Q)$, where $Q=(-\infty,0)\times(-\infty,0)$ with global coordinates $(U,V)$,
    \begin{equation}
        g_Q=-\Omega(U,V)^2(dU\otimes dV+dV\otimes dU)
    \end{equation}
and $\Omega$ is a smooth, strictly positive function on $Q$ which extends  to $\overline{Q}:=(-\infty,0)\times(-\infty,0]$ as a continuous, strictly positive function.\footnote{The extension of the $V$-domain from $(-1,0)$ to $(-
\infty , 0)$ is of no importance for the result in this section.}
This is motivated by the study of \textit{strongly spherically symmetric} $C^0$-extensions of the subextremal Reissner Nordstr\"{o}m spacetime across the Cauchy horizon (see Section \ref{Implications for strongly spherically symmetric extensions across the Reissner-Nordstrom Cauchy horizon} for further discussion). In view of the rigidity of corner extensions exhibited in Proposition \ref{prop:nomixedcornerextension}, in this section we show that the \emph{global} structure of $(Q,g_Q)$ can rule out case $(a)$ in Theorem \ref{thm:C0structure} even for \emph{local} extensions.


Observe that for any $-\infty<U_1<U_2<0$, Example \ref{example:cornerextension} allows us to construct a corner extension across $\{V=0\}$ of the ``finite $U$-chunk'' $Q^*:=(U_1,U_2)\times(-\infty,0)\subset Q$. 
Indeed, consider $(Q^*, g_Q)$ with reference extension $\overline{Q^*}:=(U_1,U_2) \times (-\infty, 0]$. Rescaling $U$ and $V$ puts us in the setting of Example \ref{example:cornerextension} and we thus obtain a $C^0$-extension $\iota_{corner,\beta} : Q^* \hookrightarrow \tilde{Q}$ with $\iota_{corner,\beta}(U,0) = \tilde{p} \in \tilde{Q}$ for all $U \in (U_1,U_2)$.

We now ask whether this construction can be ``globalised'' to construct a corner extension of the full $(Q,g_Q)$ across $\{V=0\}$. The following example gives a condition which can be used to rule out the existence of such a global corner extension. It is based on the simple observation that, due to the local nature of the corner region, infinite spacetime volumes cannot be isometrically embedded into such regions.

\begin{prop}\label{prop:cornervolumecondition}
   Let $\iota$ be a $C^0$-extension of $(Q,g_Q)$ to $(\tilde{Q},\tilde{g}_{\tilde{Q}})$ such that $\tilde{p}:=\lim_{V\rightarrow0}\iota(U_*,V)\in\tilde{Q}$ exists for some $U_*\in(-\infty,0)$. Suppose $\mathrm{vol}_{g_Q}((-\infty,0)\times(V_*,0)):=\int_{-\infty}^0\int_{V_*}^0\Omega(U,V)^2dVdU=\infty$ for all $V_*<0$. Then there exists $\epsilon>0$ such that $\iota$ extends to $\iota:[U_*-\epsilon,U_*+\epsilon]\times(-\infty,0]\rightarrow\overline{Q}$ as a homeomorphism onto its image. In particular, $\iota$ is locally $C^0$-equivalent to $\overline{Q}$ at $(U_*,0) \in \overline{Q}$. 
\end{prop}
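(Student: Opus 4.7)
The plan is to invoke the dichotomy of Theorem \ref{thm:C0structure} on a finite $U$-chunk of $(Q,g_Q)$ containing $(U_*,0)$ and then rule out the corner alternative using the global infinite-volume hypothesis.

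We first apply Proposition \ref{prop:boundarystructure}(a)(i) to $\iota$, using the constant-$U_*$ null curve $\tau(s):=(U_*,s)$. This produces a near Minkowski neighbourhood $(\tilde U,\tilde\varphi=(\tilde x_0,\tilde x_1))$ centred at $\tilde p$, together with $\epsilon_0>0$ and $V_0<0$, such that $\iota:[U_*-\epsilon_0,U_*+\epsilon_0]\times[V_0,0]\hookrightarrow\tilde U$ extends continuously and the monotonicity properties \eqref{eqn:monotonicityconventions} hold on its image. After an affine rescaling of the $(U,V)$-coordinates, $\iota$ restricted to $(U_*-\epsilon_0,U_*+\epsilon_0)\times(V_0,0)$ falls into the setting of Theorem \ref{thm:C0structure}, which therefore yields the dichotomy: either (i) there exist $\epsilon\in(0,\epsilon_0]$ and $V_1<0$ such that $\iota:[U_*-\epsilon,U_*+\epsilon]\times[V_1,0]\hookrightarrow\tilde U_\leq$ is a homeomorphism onto its image (case (b) of Theorem \ref{thm:C0structure}), from which the desired global-in-$V$ homeomorphism follows by gluing with the interior isometric embedding $\iota$ on $\{V<0\}$; or (ii) $\iota(U,0)=\tilde p$ for every $U\in(U_*-\epsilon_0,U_*+\epsilon_0)$ (case (a) of Theorem \ref{thm:C0structure}). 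It remains to rule out alternative (ii).

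Assume we are in alternative (ii) and pick any $U_-<U_+$ in $[U_*-\epsilon_0,U_*+\epsilon_0]$; then $\iota(U_\pm,0)=\tilde p$ by assumption. The proof of Proposition \ref{prop:nomixedcornerextension} carries over verbatim to $(Q,g_Q)$: its three steps (non-emptiness, closedness, and openness of the sets $S^\pm_{\tilde p}$) use only local arguments on compact $U$-subintervals on which the strictly positive $\Omega$ is bounded above and below by positive constants -- in particular the volume estimate \eqref{eqn:volumeinequality} is carried out on such compact chunks -- and they never invoke the boundedness of the ambient $U$-interval. Consequently, $\iota$ extends continuously to
\begin{equation*}
\iota:(-\infty,0)\times[V_*,0]\longrightarrow\tilde B\subset\tilde U,
\end{equation*}
where $V_*<0$ is $V_0$, reduced once at the outset of the argument, and $\tilde B$ is the compact subset of $\tilde U$ defined in Proposition \ref{prop:nomixedcornerextension}. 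Crucially, $V_*$ is a single, fixed negative number, independent of the $U$-chunk.

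Since $\tilde\varphi$ identifies $\tilde U$ with a bounded open subset of $\R^2$ on which $\tilde g_{\tilde Q}$ is continuous, $\mathrm{vol}_{\tilde g_{\tilde Q}}(\tilde U)<\infty$, and hence $\mathrm{vol}_{\tilde g_{\tilde Q}}(\tilde B)<\infty$. As $\iota$ is an isometric embedding and thus volume preserving,
\begin{equation*}
\mathrm{vol}_{g_Q}\big((-\infty,0)\times(V_*,0)\big)=\mathrm{vol}_{\tilde g_{\tilde Q}}\big(\iota\big((-\infty,0)\times(V_*,0)\big)\big)\leq\mathrm{vol}_{\tilde g_{\tilde Q}}(\tilde B)<\infty,
\end{equation*}
in direct contradiction to the standing hypothesis. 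Alternative (ii) is thereby excluded, completing the argument.

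The main obstacle is the careful verification that Proposition \ref{prop:nomixedcornerextension} (and hence Theorem \ref{thm:C0structure}) transfers from the bounded $U$-range $(-1,1)$ of $(M_t,g_t)$ to the unbounded range $(-\infty,0)$ of $(Q,g_Q)$, and in particular that the resulting $V_*$ is \emph{uniform} in the $U$-chunk. This uniformity is the one place where a global feature of the argument is used: it produces an isometric embedding of a set of infinite $g_Q$-volume into a set of finite $\tilde g_{\tilde Q}$-volume, which is the contradiction that drives the proof.
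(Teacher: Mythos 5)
Your proposal is correct and follows essentially the same route as the paper: apply the dichotomy of Theorem \ref{thm:C0structure} to a finite $U$-chunk, and in the corner case use the uniformity of $V_*$ together with the finiteness of $\mathrm{vol}_{\tilde g_{\tilde Q}}(\tilde B)$ to contradict the infinite-volume hypothesis. The only (inessential) difference is presentational: the paper obtains $\iota\big((-\infty,0)\times(V_*,0)\big)\subset\tilde B$ by applying Proposition \ref{prop:nomixedcornerextension} to chunks $(U_*-K,-\tfrac{1}{K})\times(-\infty,0)$ and letting $K\to\infty$ (noting $V_*$ depends only on $\tilde U$ and $\tilde\gamma^{U_*}$), whereas you rerun its proof directly on the unbounded $U$-interval, which amounts to the same uniformity observation.
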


\begin{proof}
    For large $K>0$ consider the subregion $(U_* -K, -\frac{1}{K}) \times (-\infty, 0) \subset Q$. Clearly, restricting $\iota$ to this subregion gives a $C^0$-extension across $\{V=0\}$ to which, after rescaling of the coordinates, we can apply Theorem \ref{thm:C0structure}. If we are in case (b), the desired conclusion follows -- so we assume that case (a) holds and derive a contradiction.
Since case (a) holds, we are in the setting of Proposition \ref{prop:nomixedcornerextension}. Let $(\tilde{U},\tilde{\varphi}=(\tilde{x}_0,\tilde{x}_1))$ be a near Minkowski neighbourhood of $\tilde{p}$. By  Proposition \ref{prop:nomixedcornerextension}, there exists $\tilde{B} \subset \tilde{U}$ and $V_*<0$ such that $\iota\big((U_* - K,-\frac{1}{K}) \times (V_*, 0) \big) \subset \tilde{B}$, cf.\ Figure \ref{fig:cornercondition}. Note that the size of $V_*<0$ depends only on $\tilde{U}$ and $\tilde{\gamma}^{U_*}$, so we can let $K \to \infty$ to infer that $\iota\big((- \infty,0)) \times (V_*, 0) \big) \subset \tilde{B}\subset \tilde{U}$. Since $\tilde{U}$ is a near Minkowski neighbourhood, working in the geometry of $(\tilde{Q},\tilde{g}_{\tilde{Q}})$ we have $\mathrm{vol}_{\tilde{g}_{\tilde{Q}}}\left(\iota\left((-\infty,0)\times(V_*,0)\right)\right):=\int_{\iota((-\infty,0)\times(V_*,0))}\sqrt{-\text{det} \tilde{g}_{\tilde{Q}}}<C$ for some constant $C$. Since $\iota$ is an isometry it follows that $\mathrm{vol}_{g_Q}((-\infty,0)\times(V_*,0))=\mathrm{vol}_{\tilde{g}_{\tilde{Q}}}\left(\iota\left((-\infty,0)\times(V_*,0)\right)\right)<C$ which is a contradiction. 
\end{proof}
\begin{figure}[h]
    \centering
    \includegraphics[scale=0.25]{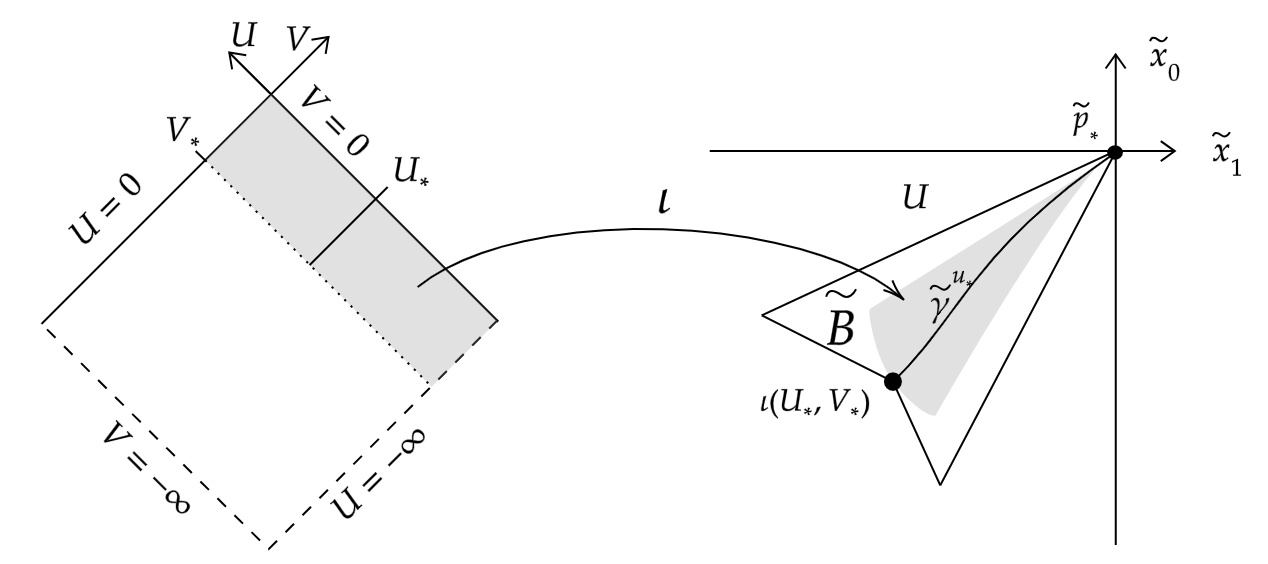}
    \caption{From Proposition \ref{prop:nomixedcornerextension} we can deduce that $\iota\left((-\infty,0)\times(V_*,0)\right)\subset\tilde{B}\subset\tilde{U}$.}
    \label{fig:cornercondition}
\end{figure}




\subsubsection{Implications for strongly spherically symmetric extensions across the Reissner-Nordstr\"{o}m Cauchy horizon}\label{Implications for strongly spherically symmetric extensions across the Reissner-Nordstrom Cauchy horizon}

As an example we consider the Reissner-Nordstr\"om black hole ``interior'' $(M_{RN},g_{RN})$. This spacetime describes the interior (i.e. the region inside the event horizon) of a spherically symmetric black hole of mass $M$ and charge $e$. We assume that $M^2>e^2>0$ (i.e. the black hole is ``sub-extremal''). Consider $\R^2\times\Sp^2$ with global coordinates $(u,v,\theta^A=(\theta,\phi))$ and
      \begin{equation}
        g_{RN}=-\frac{(r-r_+)(r-r_-)}{2r^2}(du\otimes dv+dv\otimes du)+r^2\cancel{g}_{\Sp^2}.
    \end{equation}
    where $r\in(r_-,r_+)$ is defined implicitly by 
\begin{align}\label{eqn:tortoise}
   \frac{v-u}{2}&=r+\frac{1}{2\kappa_+}\log\left(\frac{r_+-r}{r_+}\right)+\frac{1}{2\kappa_-}\log\left(\frac{r-r_-}{r_-}\right)\\
   &=:r_*(r)
\end{align}
and $\kappa_\pm=\frac{r_\pm-r_\mp}{2r_\pm^2}$, $r_\pm:=M\pm\sqrt{M^2-e^2}$. Note that $\frac{dr_*}{dr}=\frac{r^2}{(r_+-r)(r_--r)}<0$ and $r_*(r)\rightarrow\pm\infty$ as $r\rightarrow r_{\mp}$.
The boundary $r=r_-$ corresponds to a Cauchy horizon for the full Reissner-Nordstr\"{o}m spacetime \cite{HawkEllis}, while $r=r_+>r_-$ corresponds to the event horizon. 
In order to construct the reference extension across the Cauchy horizon we introduce Kruskal coordinates $(U,V)\in Q=(-\infty,0)\times(-\infty,0)$ defined by 
\begin{equation}\label{eqn:Kruskalcoords}
    U=-e^{-\kappa_-u}\quad V=-e^{\kappa_-v}.
\end{equation}
The coordinates $(U,V,\theta^A)$ are global coordinates on 
\begin{equation}\label{eqn:2dsplit}
M_{RN}:=Q\times \Sp^2
\end{equation}
and the metric in these coordinates is
\begin{equation}\label{eqn:RNKruskal}
    g_{RN}=\underbrace{-\underbrace{\frac{r_+r_-}{2\kappa_-^2r^2}e^{-2\kappa_-r}\left(\frac{r_+-r}{r_+}\right)^{1-\kappa_-/\kappa_+}}_{=: \Omega_{RN}^2}\left(dU\otimes dV+dV\otimes dU\right)}_{=: g_Q} +r^2\cancel{g}_{\Sp^2},
\end{equation}
where $r(U,V)$ is implicitly given by $UV = e^{2 \kappa_- r} \cdot \Big( \frac{r_+ - r}{r_+}\Big)^{\frac{\kappa_-}{\kappa_+}} \Big(\frac{r - r_-}{r_-}\Big)$ .
The reference extension across the `right' Cauchy horizon is given by $\overline{M_{RN}} := \overline{Q} \times \mathbb{S}^2:=(-\infty,0)\times(-\infty,0]\times \mathbb{S}^2$. Indeed, the metric \eqref{eqn:RNKruskal} extends analytically to $\{V=0\}$.

Note that $(Q, g_Q)$ is a 1+1-dimensional Lorentzian manifold. Consider a $C^0$-extension $\iota_Q : Q \hookrightarrow \tilde{Q}$ of $Q$ across $\{V=0\}$, where the continuous Lorentzian metric on $\tilde{Q}$ is denoted by $\tilde{g}_{\tilde{Q}}$. Assume that $r$ extends as a continuous strictly positive function to $\tilde{Q}$. We then set $\tilde{M} = \tilde{Q} \times \mathbb{S}^2$ and, with slight abuse of notation, define the spherically symmetric Lorentzian metric $\tilde{g} = \tilde{g}_{\tilde{Q}} + r^2 \cancel{g}_{\Sp^2}$ on $\tilde{M}$. Then $\iota := \iota_Q \times id_{\Sp^2} : M_{RN} \hookrightarrow \tilde{M}$ is a $C^0$-extension of $(M_{RN},g_{RN})$ across $\{V=0\}$. $C^0$-extensions arising in such a way are called  \textbf{strongly spherically symmetric}. 

We now discuss the implications of the results obtained in this section for such strongly spherically symmetric $C^0$-extensions. First let us note that if we only consider a subregion of $M_{RN}$ of finite $U$, i.e. 
\begin{equation*}
   M_{RN}^* = \underbrace{(U_1,U_2) \times (-\infty, 0)}_{=:Q^*} \times \mathbb{S}^2
\end{equation*}
for some $-\infty<U_1<U_2<0$, then it follows from our Example \ref{example:cornerextension} that \textbf{one can construct a strongly spherically symmetric $C^0$-extension of $M_{RN}^*$ across $\{V=0\}$ which terminates all outgoing null cones of constant $U$ in the same sphere}! This uses crucially that the area radius $r$ of the spheres has a $U$-independent limit for $V \to 0$.
\begin{example}\label{example:RNlocal}
Consider $(Q^*, g_Q)$ with reference extension $\overline{Q^*}:=(U_1,U_2) \times (-\infty, 0]$. Note, moreover, that $\Omega^2_{RN}$ extends continuously to $[U_1,U_2] \times (-\infty, 0]$. As discussed in Section \ref{cornerextensions}, there exists a $C^0$-extension $\iota_{corner,\beta} : Q^* \hookrightarrow \tilde{Q}$ with $\iota_{corner,\beta}(U,0) = \tilde{p} \in \tilde{Q}$ for all $U \in (U_1,U_2)$. Moreover, note that since $r$ extends continuously to $[U_1,U_2] \times (-\infty,0]$ with $r(U,0) = r_->0$ for all $U \in [U_1,U_2]$, it follows from the construction in Example \ref{example:cornerextension} that 
$r$ also extends as a continuous strictly positive function to $\tilde{Q}$. Hence $\iota := \iota_{corner,\beta} \times id_{\Sp^2} : M_{RN}^* \hookrightarrow \tilde{M} := \tilde{Q} \times \Sp^2$ defines a strongly spherically symmetric $C^0$-extension which is not locally $C^0$-equivalent to the reference extension $\overline{M_{RN}^*} := (U_1,U_2) \times (-\infty, 0] \times \Sp^2$. 
\end{example}

The next result uses Proposition \ref{prop:cornervolumecondition} to show that one cannot globalise this corner construction to all of $(M_{RN},g_{RN})$ because of the infinite spacetime volume between the event and Cauchy horizons. In particular, any strongly spherically symmetric $C^0$-extension of $(M_{RN}, g_{RN})$ across $\{V=0\}$ is locally $C^0$-equivalent to the reference extension $\overline{M_{RN}}$.

\begin{cor}[Strongly spherically symmetric $C^0$-extensions of sub-extremal Reissner-Nordstr\"{o}m across the Cauchy horizon]\label{cor:eventhorizon}
Suppose $\iota=\iota_{Q}\times id_{\Sp^2}:M_{RN}\hookrightarrow\tilde{M}$ is a strongly spherically symmetric $C^0$-extension of $(M_{RN},g_{RN})$ across $\{V=0\}$ such that $\lim_{V\rightarrow 0}\iota_{Q}(U_*,V)=\tilde{p}$ for some $U_*\in(-\infty,0)$. Then there exists $\epsilon>0$  such that $\iota_{Q}$ extends to $[U_*-\epsilon,U_*+\epsilon]\times(-\epsilon,0]$ as a homeomorphism onto its image. In particular $\iota$ is locally $C^0$-equivalent to $\overline{M_{RN}}$ at $(U_*,0, \omega_0) \in \overline{M_{RN}}$ for any $\omega_0 \in \mathbb{S}^2$.
\end{cor}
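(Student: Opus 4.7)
The plan is to apply Proposition \ref{prop:cornervolumecondition} to the 1+1-dimensional quotient extension $\iota_Q:Q\hookrightarrow\tilde{Q}$, which by the strong spherical symmetry assumption is itself a $C^0$-extension of $(Q,g_Q)$ across $\{V=0\}$. Everything in the hypothesis of that proposition is satisfied a priori except the infinite-volume condition $\mathrm{vol}_{g_Q}\bigl((-\infty,0)\times(V_*,0)\bigr)=\infty$ for every $V_*<0$, so verifying this is the only real task.

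To check the volume condition, I would pass back from Kruskal coordinates $(U,V)$ to the original double-null coordinates $(u,v)$ using \eqref{eqn:Kruskalcoords}. Since $(u,v)\mapsto(U,V)$ is a diffeomorphism and an isometry between $(Q,g_Q)$ expressed in the two coordinate systems, the volume integral is unchanged, and the strip $(-\infty,0)\times(V_*,0)$ in $(U,V)$ corresponds to a strip of the form $\mathbb{R}\times(v_*,\infty)$ in $(u,v)$ for some finite $v_*$. In these coordinates the volume element is
\begin{equation}
    \Omega^2\,du\,dv \;=\; \frac{(r_+-r)(r-r_-)}{2r^2}\,du\,dv,
\end{equation}
and the defining relation $r_*(r)=(v-u)/2$ shows that $r$, and hence $\Omega^2$, depends only on the combination $v-u$. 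This translation invariance under $(u,v)\mapsto(u+t,v+t)$ is the key structural fact.

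The rest is then essentially a counting argument: fix any compact square $R=[u_0,u_0+1]\times[v_0,v_0+1]$ with $v_0\geq v_*$; by translation invariance the countably many disjoint squares $R_n:=R+(n,n)$ all sit inside the strip and all carry the same strictly positive $g_Q$-volume. Summing gives infinity. Feeding this into Proposition \ref{prop:cornervolumecondition} produces some $\epsilon>0$ and a continuous extension of $\iota_Q$ to $(U_*-\epsilon,U_*+\epsilon)\times(-\epsilon,0]$ which is a homeomorphism onto its image; slightly shrinking $\epsilon$ replaces the open $U$-interval by a closed one. Taking the product with $\mathrm{id}_{\mathbb{S}^2}$ and recalling $\tilde{g}=\tilde{g}_{\tilde Q}+r^2\cancel{g}_{\Sp^2}$ transports this directly to the required extension of $\iota$ itself.

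I do not anticipate a serious obstacle here: once the translation symmetry of the two-dimensional Reissner-Nordström quotient metric in $(u,v)$ is observed, the infinite-volume condition is immediate, and the rest is direct bookkeeping plus one invocation of Proposition \ref{prop:cornervolumecondition}. The only minor subtlety worth mentioning explicitly in the write-up is that the assumption in Proposition \ref{prop:cornervolumecondition} is stated for the quotient extension $\iota_Q$, so one should flag at the outset that the strongly spherically symmetric hypothesis exactly reduces the problem from $3+1$ to $1+1$ dimensions.
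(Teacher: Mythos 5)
Your proposal is correct, and it follows the paper's overall strategy exactly: reduce to the quotient extension $\iota_Q$, invoke Proposition \ref{prop:cornervolumecondition}, and observe that the only thing to check is the divergence $\mathrm{vol}_{g_Q}\big((-\infty,0)\times(V_*,0)\big)=\infty$. Where you differ is in how that divergence is verified. The paper stays in Kruskal coordinates: it restricts to the region $A(V_0)=\{V^{-1}<U<U_*,\ V_*<V<V_0\}$, notes that $U>V^{-1}$ is equivalent to $r_*>0$, hence $r\in(r_-,r_0)$ there, so $\Omega_{RN}^2$ is bounded below by a positive constant independent of $V_0$, and then the Euclidean area $\int_{V_*}^{V_0}(U_*-\tfrac{1}{V})\,dV$ diverges logarithmically as $V_0\to0$. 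You instead pass back to the $(u,v)$ coordinates, where the strip becomes $\mathbb{R}\times(v_*,\infty)$ and the conformal factor $\frac{(r_+-r)(r-r_-)}{2r^2}$ depends only on $v-u$, and you sum the equal, strictly positive volumes of the disjoint translated squares $R+(n,n)$. Both computations are sound (volume is coordinate-invariant, $r_*$ is strictly monotone so $r=r(v-u)$, and the squares have disjoint interiors, so the volumes add); your version is arguably more conceptual in that it isolates the translation symmetry of the quotient metric as the source of the infinite volume and needs no explicit lower bound on $\Omega_{RN}^2$ or the auxiliary radius $r_0$, whereas the paper's estimate is more self-contained in the Kruskal picture it actually applies Proposition \ref{prop:cornervolumecondition} in. The concluding bookkeeping (shrinking $\epsilon$ to get the closed $U$-interval and taking the product with $\mathrm{id}_{\Sp^2}$ for the $3+1$-dimensional statement) matches what the paper leaves implicit.
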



\begin{proof} 

We will show that  $\mathrm{vol}_{g_Q}\left((-\infty, 0) \times (V_*, 0)\right)=\infty$ for any $V_*<0$, so that the result follows from Proposition \ref{prop:cornervolumecondition}.
For $U_*^{-1}<V_*<V_0<0$, define (see Figure \ref{fig:RNUVspace}) $$A(V_0):=\{(U,V)\in Q:V^{-1}<U<U_*, V_*<V<V_0\}\;\subset(-\infty,0)\times(V_*,0).$$  From \eqref{eqn:tortoise} and \eqref{eqn:Kruskalcoords}, the condition $U>V^{-1}$ is equivalent to $r_*>0$.
  Since $r_*(r)$ is strictly decreasing, it follows that for points in $A(V_0)$ we have $r_-<r<r_0$, where $r_0$ is the unique solution to $r_*(r)=0$ (so in particular is independent of $V_0$). We conclude from \eqref{eqn:RNKruskal} that $\sqrt{-\mathrm{det} g_Q}=\Omega_{RN}^2> C_1 >0$ in $A(V_0)$, where $C_1$ is a constant independent of $V_0$. Using this, we have 
  \begin{equation}
      \begin{split}
          \mathrm{vol}_{g_Q}(A(V_0))&= \int_{A(V_0)} \sqrt{-\mathrm{det} g_Q}\, dUdV > C_1\int_{V_*}^{V_0}\left(U_*-\frac{1}{V}\right)dV = C_1U_*(V_0-V_*)-C_1\log\left(V_0V_*^{-1}\right)          
      \end{split}
  \end{equation}
 It follows that for any $V_*<0$ we have $\mathrm{vol}_{g_Q}(A(V_0))\rightarrow\infty$ as $V_0\rightarrow0$, which then implies the claim.
\end{proof}
    \begin{figure}[h]
    \centering
 \includegraphics[scale=0.25]{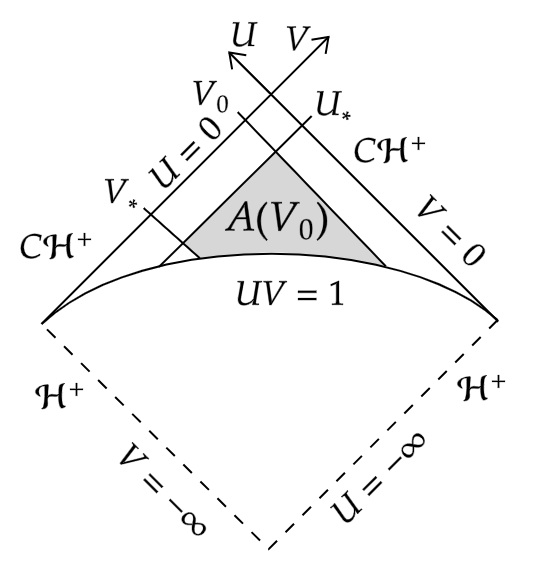}
    \caption{For $U_*^{-1}<V_*<V_0<0$, we define $A(V_0)\subset Q$ to be the region enclosed by the lines $UV=1$, $U=U_*$, $V=V_*$ and $V=V_0$. 
    }
    \label{fig:RNUVspace}
\end{figure}

We remark that a similar argument also holds for the extremal case $e=M>0$.

To summarise, we have shown that a \emph{finite $U$-chunk} of the interior of sub-extremal Reissner-Nordstr\"om \emph{does} admit $C^0$-extensions with a different $C^0$-structure to the reference extension. In contrast, we have shown that the \emph{full} spacetime \emph{does not} admit such exotic $C^0$-extensions -- at least if we restrict to strongly spherically symmetric extensions. We also briefly remark that if one considers small and generic perturbations of exact sub-extremal Reissner-Nordstr\"om initial data under the spherically symmetric Einstein-Maxwell-scalar field system, then the Cauchy horizon turns into a spherically symmetric weak null singularity (see \cite{LukOh19I} and references therein). The spherically symmetric quotient $(Q,g_Q)$ again admits a continuous reference extension as in the case of exact sub-extremal Reissner-Nordstr\"om and, furthermore,  the area radius $r$ also extends continuously as a strictly positive function to the reference extension. However, in stark contrast to exact Reissner-Nordstr\"om, $r$ is generically no longer constant on the Cauchy horizon/weak null singularity, but is instead monotonically decreasing in $U$ \cite{Daf05a,LukOhShla23}. This even rules out our constructions of exotic $C^0$-extensions of finite $U$-chunks!\footnote{Recall that the construction in Example \ref{example:RNlocal} required $\lim_{V \to 0} r(U,V)$ to be independent of $U$.}

\section{The $C^1$-Structure of Extensions in 1+1-Dimensions}\label{SecC1Structure}

\subsection{Non-rigidity of the $C^1$-structure}

We have seen in Section \ref{Rigidity of the $C^0$-structure} how the requirement that the metric extends continuously imposes a certain rigidity on the $C^0$-structure of the extension. In particular, Theorem \ref{thm:C0structure} outlines the two possible local $C^0$-structures: that of the corner extension and that of the reference extension. In this section we investigate whether the local $C^0$-structure is sufficient to determine the local $C^1$-structure. Example \ref{ExC1Cor} shows that this is not the case if the $C^0$-structure is that of the corner. In Example \ref{example:C^1inequivalent2dimensions} we show that it is also not the case if the $C^0$-structure is that of the reference extension. We thus conclude that the assumption of the metric extending continuously is not strong enough to give us rigidity properties on the $C^1$-structure of the extension.

\begin{example}\label{ExC1Cor}
    For $\beta,\beta'\in(\frac{1}{2},1)$ with $\beta'>\beta$, let $\iota_{corner,\beta}:M_t\hookrightarrow \tilde{M}_t$ and $\iota_{corner,\beta'}:M_t\hookrightarrow \tilde{M}_t$ be two corner extensions of $(M_t,g_t)$ into $(\tilde{M}_t,\tilde{g}'_\beta)$ and $(\tilde{M}_t,\tilde{g}'_{\beta'})$ as described in Example \ref{example:cornerextension}.
    We claim that $\iota_{corner,\beta}$ and $\iota_{corner,\beta'}$ are $C^0$-equivalent but $C^1$-inequivalent extensions in the sense that the identification map $id:\iota_{corner,\beta}\rightarrow\iota_{corner,\beta'}$ defined by $id:=\iota_{corner,\beta'}\circ\iota_{corner,\beta}^{-1}$ extends to $\overline{\iota_{corner,\beta}(M_t)}\subset\tilde{M}_t$ as a homeomorphism but not as a diffeomorphism.

    The identification map is given in coordinates by
    \begin{equation}
        id(p,q)=\left(p|q|^{\frac{\beta'-\beta}{1-\beta}},-|q|^{\frac{1-\beta'}{1-\beta}}\right)=:(p'(p,q),q'(p,q)).
    \end{equation}
    This map extends as a homeomorphism to $\overline{\iota_{corner,\beta}(M_t)}=\{(p,q)\in(-1,1)\times(-1,0]:|p|\leq|q|^{\frac{\beta}{1 - \beta}}\}$. However, we have
    \begin{equation}
    \begin{split}
          \partial_qq'(p,q)&=\frac{1-\beta'}{1-\beta}|q|^{\frac{\beta-\beta'}{1-\beta}} \\
          &\rightarrow\infty \text{ as }q\rightarrow0
    \end{split}
    \end{equation}
so $id$ does not extend to $\overline{\iota_{corner,\beta}(M_t)}$ as a diffeomorphism.
\end{example}

In the following example we construct a $C^0$-extension of $(M_t,g_t)$ which is locally $C^0$-equivalent to $\overline{M_t}$ at $(0,0)\in\overline{M_t}$ but not locally $C^1$-equivalent there.
\begin{example} \label{example:C^1inequivalent2dimensions}
Let $D:=\left[-\frac{1}{2},\frac{1}{2}\right]\times\left[-\frac{1}{2},0\right]\subset\mathbbm{R}^2$ with standard $(p,q)$-coordinates.
Define real-valued functions $u,v$ on $D$ by
\begin{equation}\label{eqn:uvdefinition}
    \begin{split}
        u(p,q)&=\begin{cases} -C\int_0^p\log(s^2+2q^2)ds \qquad &\textnormal{ for } (p,q) \neq (0,0) \\ 0 &\textnormal{ for } (p,q) = (0,0) \end{cases} \\
        v(p,q)&= \begin{cases} -C\int_0^q\frac{1}{\log(p^2+q^2+t^2)}dt \qquad \;\, &\textnormal{ for } (p,q) \neq (0,0) \\ 0 &\textnormal{ for } (p,q) = (0,0) \end{cases}
    \end{split}
\end{equation}
where $C>0$ is a constant to be determined. It is easy to show that $u$ and $v$ are continuous on $D$. For $q<0$ we have
\begin{equation*}
u(p,q)=-Cp\log(p^2+2q^2)+2Cp-2\sqrt{2}Cq\tan^{-1}(p/\sqrt{2}q) \;.   \end{equation*}
Furthermore, we note that
\begin{equation}\label{eqn:monotonicityuv}
    \begin{split}
         u(p,q)&\gtrless0 \text{ for }p\gtrless0\\
    u(p,q)&=0\iff p=0\\
    v(p,q)&\leq0 \\
        v(p,q)&=0\iff q=0
    \end{split}
\end{equation}
and 
\begin{equation}\label{eqn:oddeven}
\begin{split}
    (u(p,q),v(p,q))&=(-u(-p,q),v(-p,q)).
\end{split}
\end{equation}
Let $\epsilon'\in\left(0,\frac{1}{2}\right)$. The function $(s,q) \mapsto \log(s^2+2q^2)$ is smooth on $[-\frac{1}{2},\frac{1}{2}]\times\left[-\frac{1}{2},-\epsilon'\right]$. It follows that $u(p,q)$ is smooth on $\left[-\frac{1}{2},\frac{1}{2}\right]\times\left[-\frac{1}{2},-\epsilon'\right]$. Similarly, the function $(t,p,q) \mapsto \frac{1}{\log(p^2+q^2+t^2)}$ is smooth on the region $t\in[q,0]$, $(p,q)\in\left[-\frac{1}{2},\frac{1}{2}\right]\times\left[-\frac{1}{2},-\epsilon'\right]$. It follows that $v(p,q)$ is smooth on $\left[-\frac{1}{2},\frac{1}{2}\right]\times\left[-\frac{1}{2},-\epsilon'\right]$. Since $\epsilon'\in\left(0,\frac{1}{2}\right)$ was arbitrary, we conclude that $u(p,q)$ and $v(p,q)$ are smooth on $\left[-\frac{1}{2},\frac{1}{2}\right]\times\left[-\frac{1}{2},0\right)$.

On $\left[-\frac{1}{2},\frac{1}{2}\right]\times\left[-\frac{1}{2},0\right)$, we have
\begin{equation}\label{eqn:partialderivs}
    \begin{split}
        \frac{\partial u}{\partial p}&=-C\log(p^2+2q^2)\\
        \frac{\partial u}{\partial q}
        &=-2\sqrt{2}C\tan^{-1}(p/\sqrt{2}q) \\
        \frac{\partial v}{\partial p}&=2Cp\int^q_0\frac{1}{(\log(p^2+q^2+t^2))^2(p^2+q^2+t^2)}dt\\
        \frac{\partial v}{\partial q}&=-\frac{C}{\log(p^2+2q^2)}+2Cq\int^q_0\frac{1}{(\log(p^2+q^2+t^2))^2(p^2+q^2+t^2)}dt.
    \end{split}
\end{equation}


For future use, we collect the following properties which hold on $\left[-\frac{1}{2},\frac{1}{2}\right]\times\left[-\frac{1}{2},0\right)$.
\begin{equation}\label{eqn:monotonicityderivatives}
    \begin{split}
        \frac{\partial u}{\partial p},\frac{\partial v}{\partial q}&>0\\
        \frac{\partial u}{\partial q}&\gtrless0 \text{ for } p\gtrless0\\
        \frac{\partial u}{\partial q}&=0\text{ iff }p=0 \text{ (i.e. $u=0$)}\\
         \frac{\partial v}{\partial p}&\lessgtr0\text{ for } p\gtrless0\\
        \frac{\partial v}{\partial p}&=0\text{ iff }p=0 \text{ (i.e. $u=0$)}
    \end{split}
\end{equation}


Define the map
\begin{equation}
\begin{split}
    \phi:D&\rightarrow \mathbbm{R}^2\\
    (p,q)&\mapsto(u(p,q),v(p,q)).
    \end{split}
\end{equation}
We will use the inverse of a restriction of this map to define a $C^0$-extension of $(M_t,g_t)$. 


\textbf{Step 1:} We first show that $\phi$ is injective. Suppose $\phi(p_1,q_1)=\phi(p_2,q_2)$, for some $(p_1,q_1)\neq (p_2,q_2)\in D$, where without loss of generality $p_1\leq p_2$. We will lead this to a contradiction. By \eqref{eqn:monotonicityuv}, we must have either $p_1\leq p_2<0$ or $0\leq p_1\leq p_2$. By \eqref{eqn:oddeven}, we may assume without loss of generality that $0\leq p_1\leq p_2$. 

Consider the path from $(p_1,q_1)$ to $(p_2,q_2)$ consisting of the curve $q=q_1$ composed with the curve $p=p_2$ (Figure \ref{fig:pathinD}).  Note that $p \mapsto u(p,0)$ is strictly increasing on $[0,\frac{1}{2}]$. Combining this with \eqref{eqn:monotonicityuv} and  \eqref{eqn:monotonicityderivatives} we see that 
\begin{equation}\label{eqn:uvinequality}
\begin{split}
    u(p_2,q_1)&\geq u(p_1,q_1) \qquad \textnormal{ with equality if, and only if } p_1 = p_2\\
   v(p_2,q_1)&\leq v(p_1,q_1).
\end{split}
\end{equation}
\begin{figure}[h]
    \centering
    \includegraphics[scale=0.25]{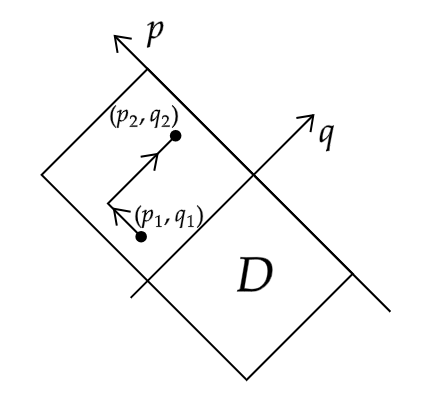}
    \caption{Path in $D$ used in the monotonicity argument to show that $\phi$ is injective (assuming $q_2>q_1$).}
    \label{fig:pathinD}
\end{figure}

We now distinguish the following three cases:
\begin{itemize}
    \item $q_1 = q_2$: we must then have $p_2 > p_1$. Now, the first line of \eqref{eqn:uvinequality} gives the contradiction $u(p_2,q_2) > u(p_1,q_1)$.
    \item $q_2 > q_1$: then, from \eqref{eqn:monotonicityderivatives} we have $u(p_2,q_2) > u(p_2,q_1)$, which, together with \eqref{eqn:uvinequality} gives again the same contradiction.
    \item $q_1 > q_2$: then, from \eqref{eqn:monotonicityderivatives} we have $v(p_2,q_2) < v(p_2,q_1)$ which gives a contradiction together with the second line of \eqref{eqn:uvinequality}.
\end{itemize}


\textbf{Step 2:} Next we show that $[-1,1]\times[-1,0]\subset\phi\left(\left(-\frac{1}{2},\frac{1}{2}\right)\times\left(-\frac{1}{2},0\right]\right)$ for $C>0$ sufficiently large. 


First observe that, for any $p\in \left[-\frac{1}{2},\frac{1}{2}\right]$ and any $q\in\left[-\frac{1}{2},0\right]$, we have from \eqref{eqn:monotonicityuv}, \eqref{eqn:oddeven} and \eqref{eqn:monotonicityderivatives} that
\begin{equation}
    \begin{split}
        \left|u\left(\pm\frac{1}{2},q\right)\right| &\geq\left|u\left(\frac{1}{2},-\frac{1}{2}\right)\right|\\
        &=-C\int_0^{\frac{1}{2}}\log\left(s^2+\frac{1}{2}\right)ds\\
        &>0\\
        \text{ and }v\left(p,-\frac{1}{2}\right)&\leq v\left(0,-\frac{1}{2}\right)\\
        &=C\int^{\frac{1}{2}}_0\frac{dt}{\log \left(t^2+\frac{1}{4}\right)}\\
       &<0.
    \end{split}
\end{equation}

It follows that by choosing $C$ sufficiently large we have 
\begin{equation}\label{eqn:uvbounds}
    \begin{split}
\left|u\left(\pm\frac{1}{2},q\right)\right| >1
        \text{ and }v\left(p,-\frac{1}{2}\right)<-1.
    \end{split}
\end{equation}
for all $p\in \left[-\frac{1}{2},\frac{1}{2}\right]$ and $q\in\left[-\frac{1}{2},0\right]$.

By \eqref{eqn:oddeven}, in order to show that $[-1,1]\times[-1,0]\subset\phi\left(\left(-\frac{1}{2},\frac{1}{2}\right)\times\left(-\frac{1}{2},0\right]\right)$, it suffices to show that $[0,1]\times[-1,0]\subset\phi\left(\left(-\frac{1}{2},\frac{1}{2}\right)\times\left(-\frac{1}{2},0\right]\right)$. 
Let $(u_1,v_1)\in [0,1]\times[-1,0]$. 

Suppose $v_1 = 0$. We have $u\left(0,0\right)=0\leq u_1\leq1< u\left(\frac{1}{2},0\right)$. Since $u\left(p,0\right)$ is a continuous function of $p$, it follows from the intermediate value theorem that there exists $p_*\in \left[0,\frac{1}{2}\right)$ such that $u\left(p_*,0\right)=u_1$. By \eqref{eqn:monotonicityuv} we also have $v(p_*,0)=0$. A similar argument shows that if $u_1=0$ then there exists $q_*\in\left(-\frac{1}{2},0\right]$ such that $u(0,q_*)=0$ and $v(0,q_*)=v_1$. 

It remains to consider $(u_1,v_1)\in(0,1]\times [-1,0)$. 
We consider the space of solutions to each of the two equations $u(p,q)=u_1$ and $v(p,q)=v_1$ separately. 
First, fix $q_*\in \left[-\frac{1}{2},0\right)$.
We have $u\left(0,q_*\right)=0< u_1\leq1< u\left(\frac{1}{2},q_*\right)$ and $u\left(p,q_*\right)$ a continuous function of $p$. It follows from the intermediate value theorem that there exists $p_*\in \left(0,\frac{1}{2}\right)$ such that $u\left(p_*,q_*\right)=u_1$. From \eqref{eqn:monotonicityderivatives} we have $\frac{\partial }{\partial p}u(p,q_*)>0$, so it follows that the value $p_*$ is unique. We can therefore define a function $p_1:\left[-\frac{1}{2},0\right)\rightarrow\left(0,\frac{1}{2}\right)$ such that $u(p_1(q),q))=u_1$ for all $q\in \left[-\frac{1}{2},0\right)$. We have $\frac{\partial u}{\partial p}>0$ for $(p,q)\in(0,\frac{1}{2})\times[-\frac{1}{2},0)$, so the implicit function theorem implies that $p_1$ is continuously differentiable on $\left[-\frac{1}{2},0\right)$. We have
\begin{equation}
\begin{split}
        \frac{dp_1}{dq}&=-\frac{\frac{\partial u}{\partial q}}{\frac{\partial u}{\partial p}}<0
\end{split}
\end{equation} 
and hence $p_1(q)$ is strictly decreasing. Since $p_1(q)$ is bounded, we can therefore define $p_1(0):=\lim_{q\rightarrow 0}p_1(q)\in\left[0,\frac{1}{2}\right)$. The continuity of $u(p,q)$ implies that $u(p_1(0),0)=u_1$. Since $u_1\in(0,1]$ it follows from \eqref{eqn:monotonicityuv} that $p_1(0)\in\left(0,\frac{1}{2}\right)$.

We can therefore extend $p_1$ to a continuous function $p_1:\left[-\frac{1}{2},0\right]\rightarrow\left(0,\frac{1}{2}\right)$ such that $u(p_1(q),q)=u_1$ for all $q\in \left[-\frac{1}{2},0\right]$. This defines a curve 
in $\left[0,\frac{1}{2}\right]\times\left[-\frac{1}{2},0\right]$
with endpoints on the lines $q=-\frac{1}{2}$ and $q=0$ (see Figure \ref{fig:surjectivity}).

Similarly, we can define a continuously differentiable function $q_1:\left[0,\frac{1}{2}\right]\rightarrow \left(-\frac{1}{2},0\right)$ such that $v(p,q_1(p))=v_1$ for all $p\in \left[0,\frac{1}{2}\right]$ (see Figure \ref{fig:surjectivity}). 
For $p\in\left(0,\frac{1}{2}\right)$, we have
\begin{equation}
\begin{split}
        \frac{dq_1}{dp}&=-\frac{\frac{\partial v}{\partial p}}{\frac{\partial v}{\partial q}}>0.
\end{split}
\end{equation}  
so it follows that $q_1$ can be inverted to define a continuous function $q_1^{-1}:\left[q_-,q_+\right]\rightarrow\left[0,\frac{1}{2}\right]$, where $q_-:=q_1(0)<q_1\left(\frac{1}{2}\right)=:q_+$. 

\begin{figure}[h]
    \centering
    \def\svgwidth{7.5cm}
    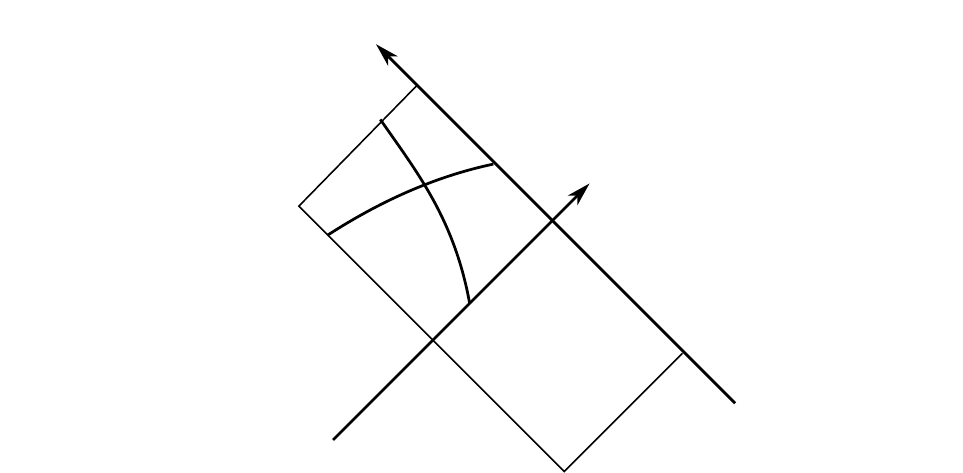
\caption{Proving that $(0,1] \times [-1,0) \subset \phi((-\frac{1}{2}, \frac{1}{2}) \times (-\frac{1}{2},0])$.}
    \label{fig:surjectivity}
\end{figure}

Consider the function
\begin{equation}
    \begin{split}
        f:\left[q_-,q_+\right]:&\rightarrow \left(-\frac{1}{2},\frac{1}{2}\right)\\
        q&\mapsto p_1(q)-q_1^{-1}(q)
    \end{split}
\end{equation}
We have
\begin{equation}
    \begin{split}
        f(q_-)&=p_1(q_-)>0\\
        f\left(q_+\right)&=p_1(q_+)-\frac{1}{2}<0.
    \end{split}
\end{equation}
Since $f$ is continuous, it follows from the intermediate value theorem that there exists $q_0\in(q_-,q_+)\subset\left(-\frac{1}{2},0\right)$ such that $f(q_0)=0$. Hence $p_1(q_0)=q_1^{-1}(q_0)\in\left(0,\frac{1}{2}\right)$, so $u(p_1(q_0),q_0)=u_1$ and $v(p_1(q_0),q_0)=v_1$. 
 
\textbf{Step 3:} The results of Steps 1 and 2 allow us to
make the following definition.
\begin{equation}\label{eqn:phiinverse}
    \left(\phi\vert_{\phi^{-1}([-1,1]\times[-1,0])}\right)^{-1}:[-1,1]\times[-1,0]\rightarrow \phi^{-1}([-1,1]\times[-1,0]).
\end{equation}
We will abuse notation and write $\phi^{-1}$ to denote this map. We claim that $\phi^{-1}$ is continuous. 
Since $\phi$ is continuous, we have that $\phi^{-1}([-1,1] \times [-1,0]) \subset D \subset \R^2$ is closed and thus compact.
Hence $\phi\vert_{\phi^{-1}([-1,1]\times[-1,0])}:\phi^{-1}([-1,1]\times[-1,0])\rightarrow [-1,1]\times[-1,0]$ is a continuous bijection defined on a compact set. It follows that its inverse is also continuous \cite[Theorem 4.17]{Rudin}. 

\textbf{Step 4:} We now show that $\phi^{-1}$ is smooth on $(-1,1)\times(-1,0)$. 

The Jacobian of $\phi$ is
\begin{equation}
\begin{split}
        Jac(p,q):=&\frac{\partial u}{\partial p}\frac{\partial v}{\partial q}-\frac{\partial u}{\partial q}\frac{\partial v}{\partial p}\\
   >&0 \text{ for all }(p,q)\in \left(-\frac{1}{2},\frac{1}{2}\right)\times\left(-\frac{1}{2},0\right).\\
\end{split}
\end{equation}

Furthermore, we have
    $[-1,1]\times[-1,0]\subset \phi\left(\left(-\frac{1}{2},\frac{1}{2}\right)\times\left(-\frac{1}{2},0\right]\right)$ and $v(p,q)=0$ if and only if $q=0$. Hence $(-1,1)\times(-1,0)\subset \phi\left(\left(-\frac{1}{2},\frac{1}{2}\right)\times\left(-\frac{1}{2},0\right)\right)$.
Since $\phi$ is smooth on $ \left(-\frac{1}{2},\frac{1}{2}\right)\times\left(-\frac{1}{2},0\right)$, it follows from \cite[Theorem A.1.]{KassVos} that $\phi^{-1}$ is smooth on $(-1,1)\times(-1,0)$. 

\textbf{Step 5:} We now use $\phi^{-1}$ to define a continuous extension of $(M_t,g_t)$. 

We have shown that $\phi$ defines a continuous bijection on $\left[-\frac{1}{2},\frac{1}{2}\right]\times\left[-\frac{1}{2},0\right]$ and that $[-1,1]\times[-1,0]\subset\phi\left(\left[-\frac{1}{2},\frac{1}{2}\right]\times\left[-\frac{1}{2},0\right]\right)$. We extend the domain of the functions $u(p,q)$ and $v(p,q)$ to  $\left[-\frac{1}{2},\frac{1}{2}\right] \times \left[-\frac{1}{2},\frac{1}{2}\right]$ by
\begin{equation}\label{eqn:extendingphi}
    \begin{split}
        (u(p,q),v(p,q))&=(u(p,-q),-v(p,-q)).
    \end{split}
\end{equation}
Since $v(p,0) = 0$, it follows that $\phi$ now defines a continuous bijection on $\left[-\frac{1}{2},\frac{1}{2}\right]\times\left[-\frac{1}{2},\frac{1}{2}\right]$, with $[-1,1]\times[-1,1]\subset\phi\left(\left[-\frac{1}{2},\frac{1}{2}\right]\times\left[-\frac{1}{2},\frac{1}{2}\right]\right)$. Consequently $\phi^{-1}$ extends to a continuous map on $[-1,1]\times[-1,1]$. 

Let $\hat{M} :=\phi^{-1}\left(\left(-1,1\right)\times\left(-1,1\right)\right)\subset\mathbbm{R}^2 $ with global coordinates $(p,q)$. Define a map $\iota:M_t\hookrightarrow \hat{M}$ in terms of coordinates by $\iota(u,v)=\phi^{-1}(u,v)$. Since $\phi^{-1}$ is smooth on $(-1,1)\times(-1,0)$, it follows that $\iota$ defines a smooth embedding and $\iota(M_t)=\phi^{-1}\left((-1,1)\times(-1,0)\right)\subsetneq \hat{M}$. Furthermore, for $(p,q)\in\iota(M_t)$, the metric
\begin{equation}\label{eqn:pushfwdmetric}
\begin{split}
    \hat{g}(p,q)&:=\iota_*g_t(p,q)\\
    &= -\Omega^2(u(p,q),v(p,q))\left[2\frac{\partial u}{\partial p}\frac{\partial v}{\partial p}dp\otimes dp +\left(\frac{\partial u}{\partial p}\frac{\partial v}{\partial q}+\frac{\partial u}{\partial q}\frac{\partial v}{\partial p}\right)\left(dp\otimes dq+dq\otimes dp\right)\right.  \\
    & \hspace{5cm} \left. +2\frac{\partial u}{\partial q}\frac{\partial v}{\partial q}dq\otimes dq\right].
    \end{split}
\end{equation}
is also smooth on $\iota(M_t)$. 

We now show that $\hat{g}$ can be extended to a continuous metric on $\hat{M}$. For $q>0$, we define $\hat{g}(p,q):=\hat{g}(p,-q)$. This gives a metric which is continuous (in fact smooth) on $\hat{M}\setminus\{q=0\}$. We claim that this metric extends continuously to $q=0$. From \eqref{eqn:partialderivs}, we see that $\hat{g}$ extends continuously to $\hat{M}\setminus\{(0,0)\}$ with 
\begin{equation} \label{Eqgq0}
    \begin{split}
       \hat{g}(p,0):= 
       -C^2\Omega^2(u(p,0),0)\left[\left(dp\otimes dq+dq\otimes dp\right)-\frac{2\sqrt{2}\pi \cdot \mathrm{sgn} (p)}{\log p^2}dq\otimes dq\right]
    \end{split}
\end{equation}
 (recall that $\Omega^2(u,v)$ was assumed to extend continuously to $v=0$ as a strictly positive function). 
 Furthermore, we have \begin{equation}\label{eqn:00continuity}
     \lim_{(p,q) \to (0,0)} \hat{g}(p,q)=-C^2\Omega^2(0,0)\left(dp\otimes dq+dq\otimes dp\right)\;.
 \end{equation}
 This is implied by the following estimates: for $q=0$, the required estimates follow directly from \eqref{Eqgq0}.
Let now $(p,q)\in\iota(M_t)\subset\left(-\frac{1}{2},\frac{1}{2}\right)\times\left(-\frac{1}{2},0\right)$. From \eqref{eqn:partialderivs} we have 
\begin{equation}
\begin{split}
    \left|\frac{\partial u}{\partial p}\frac{\partial v}{\partial q}-C^2\right|&=\left|2C^2q\log(p^2+2q^2)\int_0^q\frac{dt}{(\log(p^2+q^2+t^2))^2(p^2+q^2+t^2)}\right|\\
    &\leq \left|\frac{2C^2q^2}{\log(p^2+2q^2)(p^2+q^2)}\right|\\
    &\leq \frac{2C^2}{\left|\log(p^2+2q^2)\right|}\\
    &\rightarrow0\text{ as }(p,q)\rightarrow(0,0).
\end{split}
\end{equation}

Since $\frac{\partial u}{\partial p}\rightarrow\infty$ as $(p,q)\rightarrow(0,0)$, it follows from this that $\frac{\partial v}{\partial q}\rightarrow0 \text{ as }(p,q)\rightarrow(0,0).$
We also have $\left|\frac{\partial u}{\partial q}\right|<C\sqrt{2}\pi$ and hence $\frac{\partial u}{\partial q}\frac{\partial v}{\partial q}\rightarrow0$ as $(p,q)\rightarrow(0,0)$. 
Furthermore,
\begin{equation}
\begin{split}
       \left|\frac{\partial u}{\partial p}\frac{\partial v}{\partial p}\right|&\leq \frac{2C^2|pq|}{(p^2+q^2)|\log(p^2+2q^2)|}\\
       &\leq\frac{C^2}{|\log(p^2+2q^2)|}\\
   &\rightarrow0\text{ as }(p,q)\rightarrow(0,0).
\end{split}
\end{equation}
Again since $\frac{\partial u}{\partial p}\rightarrow\infty$ as $(p,q)\rightarrow(0,0)$, it follows that $\left|\frac{\partial v}{\partial p}\right|\rightarrow0$ as $(p,q)\rightarrow(0,0)$ and hence $\frac{\partial u}{\partial q}\frac{\partial v}{\partial p}\rightarrow0$ as $(p,q)\rightarrow(0,0)$. This establishes \eqref{eqn:00continuity} and hence the continuity of $\hat{g}$ at $(0,0)$.
We conclude that $\iota$ defines a $C^0$-extension of $(M_t,g_t)$ into $(\hat{M},\hat{g})$.

\textbf{Step 6:} We claim that  $\iota$ is a $C^0$-extension of $M_t$ across $\{v=0\}$ which is locally $C^0$-equivalent to $\overline{M_t}$ at $(u,0)$ for all $u\in(-1,1)$.\footnote{Recall the reference extension $\overline{M_t}$ from Section \ref{toymodel}.} 

For each $u\in(-1,1)$, $\gamma^u$ is a future-directed null curve which is future inextendible in $M_t$. However, $\lim_{v\rightarrow0}(\iota\circ\gamma^u)(v)=\phi^{-1}(u,0)\in\hat{M}$, so $\iota : M_t \hookrightarrow \hat{M}$ defines a $C^0$-extension of $M_t$ across $v =0$. The map $\phi^{-1} : (-1,1) \times (-1,0] \to \hat{M}$ is a homeomorphism onto its image, so we conclude that $\iota$ and $\overline{M_t}$ are locally $C^0$-equivalent at $(u,0)$ for all $u\in(-1,1)$.





\textbf{Step 7:} Finally we show that $\iota$ is not locally $C^1$-equivalent to $\overline{M_t}$ at the point $(0,0)\in\overline{M_t}$.

Inverting the matrix of partial derivatives $\frac{\partial(u,v)}{\partial(p,q)}$, we obtain
\begin{equation}
\begin{split}
   \frac{\partial q}{\partial v}=\frac{1}{Jac(p,q)}\frac{\partial u}{\partial p}
\end{split}
\end{equation}

for $(u,v)\in(-1,1)\times(-1,0)$, where $Jac(p,q)=\frac{\partial u}{\partial p}\frac{\partial v}{\partial q}-\frac{\partial u}{\partial q}\frac{\partial v}{\partial p}\rightarrow C^2>0$ as $(p,q)\rightarrow (0,0)$ (i.e. as $(u,v)\rightarrow(0,0)$). We also have $\frac{\partial u}{\partial p}=-C\log(p^2+2q^2)\rightarrow\infty$ as $(p,q)\rightarrow(0,0)$. It follows that $\frac{\partial q}{\partial v}\rightarrow\infty$ as $(p,q)\rightarrow(0,0)$, and hence $\iota$ does not extend as a $C^1$-diffeomorphism to $(-\epsilon,\epsilon)\times(-\epsilon,0]$ for any $\epsilon>0$. 
\end{example}

\subsection{Application to weak null singularities in 3+1-dimensions}

Finally we return to our motivating example: weak null singularities $(\Mw, \gw)$ in 3+1-dimensions (see the beginning of the introduction).
Example \ref{example:C^1inequivalent2dimensions} can be used to construct a $C^0$-extension of $(\Mw,\gw)$ which is locally $C^0$-equivalent to $\overline{\Mw}$ at $(0,0,\theta^A)$ for any $\theta^A\in\Sp^2$ but not locally $C^1$-equivalent at these points.

\begin{example} \label{ExWNS}
   Let $\hat{M}$ and  $\iota:M_t\hookrightarrow\hat{M}$ be as in Example \ref{example:C^1inequivalent2dimensions}. Let $\tilde{M}_{wns}=\hat{M}\times \Sp^2$ and define a map $\tilde{\iota}:\Mw\hookrightarrow\tilde{M}_{wns}$ in terms of coordinates $(u,v,\theta^A)$ on $\Mw$ and $(p,q,\tilde{\theta}^A)$ on $\tilde{M}_{wns}$ by
    \begin{equation}
        \tilde{\iota}(u,v,\theta^A)=(\iota(u,v),\theta^A).
    \end{equation} 
       Following Example \ref{example:C^1inequivalent2dimensions}, we have
\begin{equation}
    \begin{split}
&\tilde{\iota}_*\gw(p,q,\tilde{\theta})\\
&=\hat{g}_{wns}(p,q, \tilde{\theta})+\gamma_{AB}(\phi(p,q), \tilde{\theta})\left(d\tilde{\theta}^A-b^A(\phi(p,q), \tilde{\theta})\left(\frac{\partial v}{\partial p}dp+\frac{\partial v}{\partial q}dq\right)\right) \\
&\qquad \qquad \qquad \qquad \qquad \qquad \otimes \left(d\tilde{\theta}^B-b^B(\phi(p,q), \tilde{\theta})\left(\frac{\partial v}{\partial p}dp+\frac{\partial v}{\partial q}dq\right)\right)
\end{split}
\end{equation}
where $\hat{g}_{wns}(p,q, \tilde{\theta})$ is of the form \eqref{eqn:pushfwdmetric} with $\Omega^2$ now also depending on $\tilde{\theta}$. Recall from Example \ref{example:C^1inequivalent2dimensions} that $\phi(p,q) = \big(u(p,q), v(p,q) \big)$.
Recall also that $\Omega^2$, $\gamma_{AB}$ and $b^A$ were assumed to extend continuously to $\{v=0\}$ as a strictly positive function, a Riemannian metric on $\Sp^2$, and a vector field on $\Sp^2$ respectively. Combining this with Example \ref{example:C^1inequivalent2dimensions}, it follows that $\hat{g}_{wns}$ extends continuously to $q=0$. Furthermore, it follows from \eqref{eqn:partialderivs} (see also Step 5 in Example \ref{example:C^1inequivalent2dimensions}) that $\frac{\partial v}{\partial p}(p,q)$ extends continuously to $q=0$ as $\frac{\partial v}{\partial p}(p,0) = 0$ and $\frac{\partial v}{\partial q}(p,q)$ extends continuously to $q=0$ as $\frac{\partial v}{\partial q}(p,0) = - \frac{C}{\log(p^2)}$.
 Hence we can define a continuous metric on $\tilde{M}_{wns}$ by
    \begin{equation}
                \tilde{g}_{wns}(p,q,\tilde{\theta}^A):=
                \begin{cases}
        \tilde{\iota}_*\gw(p,-|q|,\tilde{\theta}^A) &\text{ for }q\neq0\\
        \lim_{q'\rightarrow0^-}\tilde{\iota}_*\gw(p,q',\tilde{\theta}^A)&\text{ for }q=0.
        \end{cases}
        \end{equation}
It follows that $\tilde{\iota}$ defines a $C^0$-extension of $(\Mw,\gw)$ into $(\tilde{M}_{wns},\tilde{g}_{wns})$. Moreover, for $C>0$ sufficiently large, the vector field $C \partial_u + \partial_v$ is future-directed timelike in a small neighbourhood of $(u, 0,\theta^A)\in\overline{M_{wns}}$. Its integral curve through $(u, 0,\theta^A)$ defines a future-directed causal curve $\tau : [-\delta, 0) \to \Mw$ with $\lim_{s \to 0} \tau^{v}(s) = 0$, $\lim_{s \to 0} \tau^u(s) <1$ and such that $\lim_{s \to 0}(\iota \circ \tau)(s) \in \rd \tilde{\iota}(\Mw) \subset \tilde{M}_{wns}$ exists. Hence $\tilde{\iota}$ defines a $C^0$-extension of $\Mw$ across $v =0$.

We now compare this extension to the reference extension, $\overline{\Mw}$, defined in Section \ref{Introduction}. 
Consider a fixed point $(u, 0,\theta^A)\in \rd\overline{\Mw}$.  Recall from Example \ref{example:C^1inequivalent2dimensions} that $\iota$ is locally $C^0$-equivalent to $\overline{M_t}$ at $(u,0)\in\overline{M_t}$ for all $u\in(-1,1)$ but locally $C^1$-inequivalent at $(0,0)$. It follows immediately that, for any $(u,\theta^A)\in(-1,1)\times\Sp^2$, $\tilde{\iota}$ is locally $C^0$-equivalent to $\overline{\Mw}$ at $(u,0,\theta^A)\in\overline{\Mw}$ but locally $C^1$-inequivalent at $(0,0,\theta^A)$.
\end{example}

\bibliographystyle{amsplain}
\bibliography{bibliography.bib}
\end{document}